\newtheorem{theorem}{Theorem}[section]
\newtheorem{lemma}[theorem]{Lemma}
\newtheorem{corollary}[theorem]{Corollary}
\newtheorem{definition}[theorem]{Definition}
\newtheorem{example}[theorem]{Example}
\newcommand{\allpub}{\ensuremath{[ \bullet ]}}
\newcommand{\somepub}{\ensuremath{\langle \bullet \rangle}}
\newcommand{\allgrp}[1]{\ensuremath{[ #1 ]}}
\newcommand{\somegrp}[1]{\ensuremath{\langle #1 \rangle}}
\newcommand{\allcoal}[1]{\ensuremath{[\!\langle #1 \rangle\!]}}
\newcommand{\somecoal}[1]{\ensuremath{\langle\![ #1 ]\!\rangle}}
\newcommand{\know}{\ensuremath{K}}
\newcommand{\susp}{\ensuremath{\widehat{K}}}
\newcommand{\et}{\ensuremath{\land}}
\newcommand{\imp}{\ensuremath{\rightarrow}}
\newcommand{\Agents}{\ensuremath{A}}
\newcommand{\lang}{\ensuremath{\mathcal{L}}}
\newcommand{\States}{\ensuremath{S}}
\newcommand{\Atoms}{\ensuremath{P}}
\renewcommand{\square}{\Box}
\newcommand{\vel}{\vee}
\newcommand{\atom}{p}
\newcommand{\atomb}{q}
\newcommand{\agent}{a}
\newcommand{\agentb}{b}
\newcommand{\N}{\ensuremath{\mathbb{N}}}
\newcommand{\Group}{G}
\begin{document}

\newcommand{\tedge}{\ensuremath{\mathfrak{e}}}
\newcommand{\tsq}{\ensuremath{\mathfrak{s}}}
\newcommand{\tup}{\ensuremath{u}}
\newcommand{\tdown}{\ensuremath{d}}
\newcommand{\tlef}{\ensuremath{\ell}}
\newcommand{\trigh}{\ensuremath{r}}
\newcommand{\tcentr}{\ensuremath{c}}

\threeAuthorsTitle{T.\ts {\AA}gotnes}{H.\ts van Ditmarsch}{T.\ts French}{The Undecidability of Quantified Announcements}


\begin{abstract}
This paper demonstrates the undecidability of a number of logics with quantification over public announcements: arbitrary public announcement logic (APAL), group announcement logic (GAL), and coalition announcement logic (CAL). In APAL we consider the informative consequences of {\em any} announcement, in GAL we consider the informative consequences of a group of agents (this group may be a proper subset of the set of all agents) all of which are simultaneously (and publicly) making known announcements. So this is more restrictive than APAL. Finally, CAL is as GAL except that we now quantify over anything the agents not in that group may announce simultaneously as well. The logic CAL therefore has some features of game logic and of ATL. We show that when there are multiple agents in the language, the satisfiability problem is undecidable for APAL, GAL, and CAL. 
In the single agent case, the satisfiability problem is decidable for all three logics.

This paper corrects an error to the submitted version of 
{\em Undecidability of Quantified Announcements}, \cite{original} 
identified by Yuta Asami \cite{asami}. 
The nature of the error was in the definition of the formula $c_{ga}(X)$ (see Subsection~\ref{subsec.group}) which is corrected in this version.
\end{abstract}

\Keywords{Dynamic Epistemic Logic, Complexity of modal logics}

\section{Introduction}

The runs-and-systems approach \cite{faginetal:1995} and, later, dynamic epistemic logics \cite{hvdetal.del:2007} have been used to specify multi-agent systems dynamic at a high level of systems architecture. Such modal logical approaches have some advantages, e.g.\ that they are typically decidable, the model checking complexity is fairly low (even linear for the base modal logic), and satisfiability may also be tamed quite a bit (public announcement logic and the base modal logic are NP-complete in the single-agent case and PSPACE-complete in the multi-agent case \cite{lutz:2006}). Unfortunately, when combining epistemic (or more basic modal) logics with logical dynamics, it is highly unpredictable if the resulting logic is decidable even when the base logics are so. A well-known result achieved by very minimal (linguistic) means is \cite{milleretal:2005}. Other such results, described below, were reported in \cite{frenchetal:2008,agotnesetal:2014}.
There are several open problems concerning decidability of dynamic epistemic logics and in this paper we answer some such questions negatively: the related logics arbitrary public announcement logic, group announcement logic, and coalition announcement logic are all undecidable. 
For the development of proof tools for such logics we consider this result relevant to report. 
Given the frequent claim that such logics are applicable, for example, for epistemic protocol synthesis \cite{bolanderetal:2011,bolanderetal:2015}, such undecidability results are disappointing. 
We can keep in mind there that model checking is still decidable and in the final section we discuss some ways towards decidable versions of these logics.

\paragraph*{APAL}
Public Announcement Logic (PAL) \cite{plaza:1989} is an extension of epistemic logic with dynamic modal operators for the effect of so-called {\em truthful public announcements}. The word `announcement' is a bit of a misnomer, nothing really needs to be said: in fact these announcements are reliable public observations, supposedly coming from outside of the modelled system and incorporated by all agents on the assumption of their correctness. Given a (Kripke) model consisting of epistemic alternatives, the execution of a public announcement is the restriction of the model to the submodel satisfying the announcement formula: $\langle\phi\rangle\psi$ means that the announcement formula $\phi$ is true and that in the restriction of the model to the $\phi$ states, $\psi$ is true (the part $\langle\phi\rangle$ is the dynamic modality, in its diamond form). It is well-known that this logic is equally expressive as epistemic logic: by a reduction technique any formula is equivalent to one without announcements. In Arbitrary Public Announcement Logic (APAL) \cite{balbianietal:2008} we quantify over such announcements, with the restriction that we only quantify over quantifier-free announcement formulas. The expression $\langle\bullet\rangle\phi$ means that for all $\langle\bullet\rangle$-free formulas $\psi$, $\langle\psi\rangle\phi$ is true, in other words: ``there is an announcement $\psi$ such that $\psi$ is true and after which $\phi$ is true.'' Unlike PAL, this logic APAL is more expressive than epistemic logic, and it is also undecidable \cite{frenchetal:2008}.  

\paragraph*{GAL}
Group Announcement Logic (GAL) \cite{agotnesetal:2008,agotnesetal.jal:2010} is an extension of public announcement logic that is very similar to APAL. The difference is that we now do not quantify over all epistemic formulas but over epistemic formulas known by the individual agents in a given group, that may be a subgroup of the set of all agents. In order to see the rationale of this, we need to step back a bit to the conditions of truthful public announcements. The `truthful' in PAL actually means `true'. It is common to model the truthful (true) public announcement of a formula $\phi$ made by an agent $\agent$ that is modelled in the system, as the announcement by an outsider of the formula $\know_\agent \phi$, for ``the agent $\agent$ knows $\phi$.'' Now we have `truthful' in its common sense, because agent $\agent$ truthfully saying that $\phi$ implies the truth of $\know_\agent \phi$. In group announcement logic we investigate what can be achieved by simultaneous truthful announcements by a subset of the set of all agents: $\langle\Group\rangle\phi$ means that the agents in $\Group$ can simultaneously say something they know, after which $\phi$ is true. Here, `simultaneously' merely means `without interference'. We can for example envisage this as the agents putting their known information in an envelope, after which the envelopes are opened in public. The validity $\langle\Group\rangle\langle\Group\rangle\phi\imp\langle\Group\rangle\phi$ of GAL says that we can always make two subsequent group announcements into a single one. Because of this property, such group announcements also quantify over finite (public) communication protocols between the agents in a group and where agents take turns in saying something: whenever some $\agent\in\Group$ is executing a step in the protocol, we think of all other agents $\agentb$ as performing the trivial announcement; they say $\know_\agentb \top$ (`I know that true is true'). Therefore, each step in such a protocol can be seen as a $\Group$-announcement, after which we apply the property above to make a single $\Group$-announcement from any finite sequence of such steps. The logic GAL shares various properties with APAL, e.g., the axiomatization is similar, and the model checking complexity is PSPACE-complete \cite{agotnesetal.jal:2010}. In \cite{agotnesetal:2014} the logic GAL has shown to be also undecidable. 

The expression $\somegrp{G} \phi$ has the smell of `group of agents $G$ is able to achieve $\phi$', such that, taking a single agent, $\know_\agent \somegrp{\{\agent\}} \phi$ seems to formalize that agent $\agent$ knows that she is able to achieve $\phi$, as in logics combining agency and knowledge \cite{agotnes:synthese-06,vdhoeketal:2002c}. The problem is that in different states, different announcements may be required to make $\phi$ true, so that we can have $\langle \know_\agent\psi\rangle \phi$ in one state and $\langle \know_\agent\chi\rangle \phi$ in an indistinguishable state, for non-equivalent (and non-dependent) $\know_\agent\psi$ and $\know_\agent\chi$: the agent knows she `can make' $\phi$ true, but still cannot choose between announcing $\psi$ or announcing $\chi$. These matters are also reported on in \cite{agotnesetal.jal:2010}.

\paragraph*{CAL}
A variation on GAL is coalition announcement logic (CAL) \cite{agotnesetal:2008}. In group announcement logic we investigate the consequences of the simultaneous announcement (joint public event) by $G$. The agents not in $G$ do not take part in the action. In CAL, the expression $\somecoal{G}\phi$ means that the agents in $G$ can simultaneously announce something they know, after which $\phi$ is true, no matter what the remaining agents also simultaneously announced. Clearly, we get into the domain of game logic \cite{Pauly2002} and ATL \cite{Alur2002}, see also the recent survey \cite{jfak.logicingames:2014}; and in fact CAL subsumes game logic (for example the axiom $\neg\somecoal{\emptyset}\neg\phi \imp \somecoal{\Agents}\phi$, where $\Agents$ is the set of all agents) \cite{agotnesetal:2008}. Unlike APAL and GAL, the logic CAL has no known axiomatization. Up to now it was not known that it is also undecidable. The $\somecoal{G}\phi$ operator further comes close to a {\em playability} operator as in \cite{jfak.tark:2001,jfak.logicingames:2014} (in a setting of information changing games, not for games with factual change).

\paragraph*{Example}
Before we present some other relevant properties of the presented logics, let us first give an example. Given are two agents Anne and Bill ($\agent$ and $\agentb$) such that Anne knows whether $\atom$ and Bill knows whether $\atomb$, and this is common knowledge (Figure~\ref{a-announce}, leftmost square). The names of the states are suggestive of the valuation of the atoms, e.g.\ in state $\overline{p}q$ we have that $p$ is false and $q$ is true. Let us call the model $\mathcal{S}$. In the figure we depict a number of executions of announcements in the form of announcing the {\em value} of some proposition $\phi$, i.e., an announcement of $\phi$ whenever it is true, and otherwise the announcement of $\neg\phi$. In the explanations we already use the semantic notation that will be introduced in the next section, i.e., $\mathcal{S}_s \models \phi$ means that $\phi$ is true in state (world) $s$ of model $\mathcal{S}$; $\mathcal{S} \models \phi$ means that $\phi$ is valid on model $\mathcal{S}$; and $\mathcal{S}^\phi$ is the restriction (submodel) of $\mathcal{S}$ to the states where $\phi$ is true.

{\bf APAL} In APAL, from any state in this model, any model restriction is definable by an epistemic formula. We have depicted the result of announcing the value of $p \vel q$, i.e., the announcement is $p \vel q$ in the states $pq$, $\overline{p}q$, and $p\overline{q}$, whereas the announcement is $\neg (p \vel q)$ in the state $\overline{p}\overline{q}$. So, for example, $\mathcal{S}_{\overline{p}q} \models \neg \know_a q \et \langle\bullet\rangle \know_a q$, because $\mathcal{S}_{\overline{p}q} \models \neg \know_a q$ and $\mathcal{S}_{\overline{p}q} \models \langle p \vel q \rangle \know_a q$; where the latter is true because $\mathcal{S}^{p\vel q}_{\overline{p}q} \models \know_a q$.

{\bf GAL}
Anne can achieve that Bill knows whether $\atom$, namely by informing him of the value of $\atom$ (Figure~\ref{b-announce}, middle transition). In other words, there is an announcement that $a$ can make and that $a$ knows to be true (namely $\know_a p$ when $p$ is true, and $\know_a \neg p$ when $p$ is false) such that after that announcement, $b$ knows whether $p$. We have that $\mathcal{S}_{pq} \models \langle a \rangle \know_b p$ because $\mathcal{S}_{pq} \models \langle \know_a p \rangle \know_b p$. The formalization of ``Anne can achieve that Bill knows whether $\atom$'' is  $\mathcal{S} \models \somegrp{a} (\know_b \atom \vel \know_b \neg \atom)$. Similarly, Bill can achieve that Anne knows whether $\atomb$, in a formula, $\mathcal{S} \models \somegrp{b} (\know_a \atomb \vel \know_a \neg \atomb)$ (Figure~\ref{b-announce}, first bottom transition). Neither agent can achieve both outcomes at the same time, but together they can achieve that by a joint announcement of what they know: $\somegrp{{ab}} (\know_a (\atom\et\atomb) \et \know_b (\atom\et\atomb))$ (Figure~\ref{b-announce}, bottom rightmost model). In the figure the announcement is not jointly by Anne and Bill, but still we can get there. This is because from $\somegrp{b} (\know_a \atomb \vel \know_a \neg \atomb)$ we also get $\somegrp{{ab}} (\know_a \atomb \vel \know_a \neg \atomb)$, by having Anne make the trivial announcement simultaneously to Bill announcing the value of $\atomb$. Similarly, from $\somegrp{a} (\know_b \atom \vel \know_b \neg \atom)$ we also get $\somegrp{{ab}} (\know_b \atom \vel \know_b \neg \atom)$. Now Bill is making the trivial announcement. We thus get $\somegrp{{ab}}\somegrp{{ab}} ((\know_b \atom \vel \know_b \neg \atom) \et (\know_a \atomb \vel \know_a \neg \atomb))$, and as such operators satisfy transitivity (see above) we get the desired $\somegrp{{ab}} ((\know_b \atom \vel \know_b \neg \atom) \et (\know_a \atomb \vel \know_a \neg \atomb))$. Of course, in the case of this simple example it suffices to observe that Anne can announce the value of $\atom$ simultaneously with Bill announcing the value of $\atomb$.

{\bf CAL} One can easily see that Anne can enforce knowledge of the value of $p$ on Bill, $\somecoal{a} (\know_b \atom \vel \know_b \neg \atom)$, namely by announcing the value of $p$, as before. This is true when Bill makes the trivial announcement but also when he announces the value of $q$, and there is nothing else he can do. But Anne cannot enforce her continued ignorance of $q$ so we have $\mathcal{S} \not\models \somecoal{a} \neg(\know_a \atomb \vel \know_a \neg \atomb)$. (Unlike that, we have $\mathcal{S} \models \somegrp{a} \neg(\know_a \atomb \vel \know_a \neg \atomb)$.) Together, they can enforce shared knowledge of the atoms: $\somecoal{{ab}} (\know_a (\atom\et\atomb) \et \know_b (\atom\et\atomb))$. This is obvious, as the entire group of agents is up against nobody, so $\mathcal{S} \models\somegrp{ab} \phi \leftrightarrow \somecoal{ab} \phi$ for all $\phi$.

\begin{figure}
\noindent
\scalebox{.9}{
\begin{tikzpicture}
\node (00a) at (0,1){\fbox{
\begin{tikzpicture}[z=0.35cm]
\node (00) at (0,0) {$\overline{p}\overline{q}$};
\node (01) at (0,2) {$\overline{p}q$};
\node (10) at (2,0) {$p\overline{q}$};
\node (11) at (2,2) {$pq$};
\draw (00) -- node[above] {$b$} (10);
\draw (00) -- node[left] {$a$} (01);
\draw (10) -- node[left] {$a$} (11);
\draw (01) -- node[above] {$b$} (11);
\end{tikzpicture}
}}
;
\node (11a) at (9,3){\fbox{
\begin{tikzpicture}[z=0.35cm]
\node (00) at (0,0) {$\overline{p}\overline{q}$};
\node (01) at (0,2) {$\overline{p}q$};
\node (10) at (2,0) {$p\overline{q}$};
\node (11) at (2,2) {$pq$};
%
\draw (10) -- node[left] {$a$} (11);
\draw (01) -- node[above] {$b$} (11);
\end{tikzpicture}
}}
;
\draw[->] (00a) -- node[above] {$\text{Outsider}$} (11a)
;
\node (10a) at (9,-1){\fbox{
\begin{tikzpicture}[z=0.35cm]
\node (00) at (0,0) {$\overline{p}\overline{q}$};
\node (01) at (0,2) {$\overline{p}q$};
\node (10) at (2,0) {$p\overline{q}$};
\node (11) at (2,2) {$pq$};
\draw (00) -- node[left] {$a$} (01);
\draw (10) -- node[left] {$a$} (11);
\end{tikzpicture}
}}
;
\draw[->] (00a) -- node[above] {$\text{Anne}$} (10a)
;
\node (1m1a) at (0,-3){\fbox{
\begin{tikzpicture}[z=0.35cm]
\node (00) at (0,0) {$\overline{p}\overline{q}$};
\node (01) at (0,2) {$\overline{p}q$};
\node (10) at (2,0) {$p\overline{q}$};
\node (11) at (2,2) {$pq$};
\draw (00) -- node[above] {$b$} (10);
\draw (01) -- node[above] {$b$} (11);
\end{tikzpicture}
}}
;
\draw[->] (00a) -- node[left] {$\text{Bill}$} (1m1a)
;
\node (2m1a) at (5,-3){\fbox{
\begin{tikzpicture}[z=0.35cm]
\node (00) at (0,0) {$\overline{p}\overline{q}$};
\node (01) at (0,2) {$\overline{p}q$};
\node (10) at (2,0) {$p\overline{q}$};
\node (11) at (2,2) {$pq$};
\end{tikzpicture}
}}
;
\draw[->] (1m1a) -- node[above] {$\text{Anne}$} (2m1a);
\end{tikzpicture}
}
\caption{Top transition: An announcement of the value of $\atom\vel\atomb$ that cannot be made by Anne or Bill. Middle transition: Anne announces the value of $\atom$. Bottom transitions: Bill announces the value of $\atomb$, after which Anne announces the value of $\atom$\label{a-announce}\label{b-announce}\label{out-announce}.}
\end{figure}
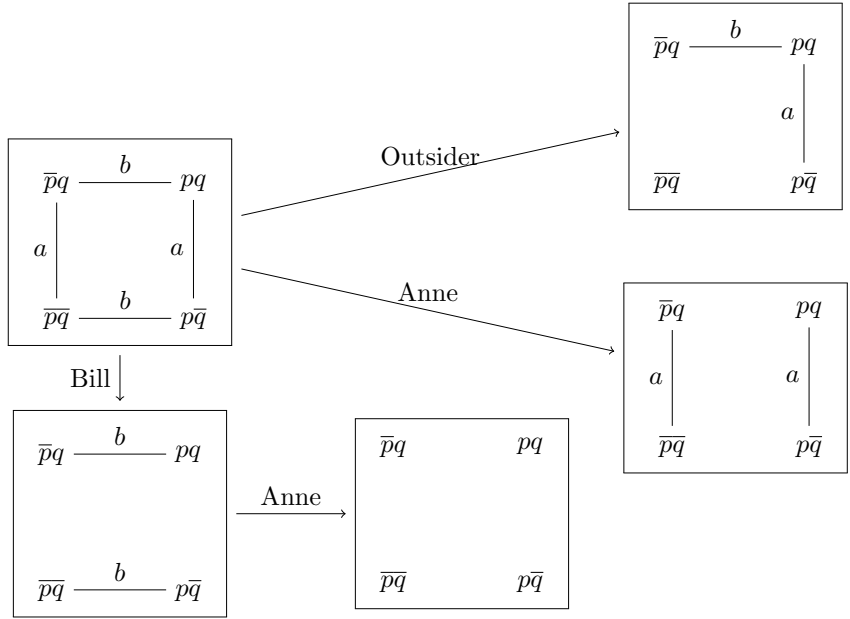

\paragraph*{Relative expressivity of APAL, GAL, and CAL} 
Single-agent APAL is equally expressive as PAL (and as the base epistemic logic) \cite{balbianietal:2008}, and this result also carries over to GAL and CAL, so for this expressivity survey we assume that there is more than one agent. 

It will be clear that a group announcement is an announcement, so that $\somegrp{G}\phi \imp \langle\bullet\rangle\phi$; but that, as in the above example, there are announcements that not group announcements. This can be used to demonstrate that APAL is more expressive than GAL. Whether GAL is also more expressive than APAL is an open question. It is also unknown what the relative expressivity is of CAL and GAL. Such matters are discussed in \cite{agotnesetal.jal:2010,hvd.wollic:2012}. For example, a GAL validity is $\somegrp{G}\somegrp{H}\phi\imp\somegrp{G\cup H}\phi$ but it is unknown if $\somegrp{G\cup H}\phi\imp\somegrp{G}\somegrp{H}\phi$ is valid in CAL: this involves constructing a {\em later} announcement by the agents in $H$ with the same informative effect as their immediate announcement. On models with infinite domains it is unclear how this should be done (and even if it can be done at all); note that the agents in $G$ can efface distinctions between previously bisimilar states that at the time could have been used by those in $H$, but not any more after $G$'s announcement. Similar issues complicate proof of the conjecture that GAL and CAL are equally expressive: we wish to compare the expression $\somecoal{G}\phi$ to the expression $\somegrp{G}\allgrp{\Agents\setminus{G}}\phi$ (where $\Agents$ is the set of all agents): one would expect that quantifying over anything the remaining agents can say {\em at the same time} as those in $G$ should be the same as quantifying over what they can say {\em afterwards}. But to our knowledge this is yet another open question. Answering such questions is also relevant for yet other logics with quantification. For example, consider the single-agent fragment of GAL where we only have operators $\somegrp{\agent}$, for any $\agent\in\Agents$.\footnote{This interesting fragment was suggested by a reviewer, and independently by Jan van Eijck in oral communication.} Clearly, in view of the above, $\somegrp{\{\agent,\agentb\}}\phi$ need not be equivalent to $\somegrp{\agent}\somegrp{\agentb}\phi$. 

\paragraph*{GAL and security}
With GAL one can formalize communication protocols, such as security protocols. Let Alice be a sender $a$, Bob a receiver $b$, and Eve a spy / eavesdropper $e$. Let $\phi$ be some {\em information goal}. For example, suppose Alice wishes to inform Bob of the latest transatlantic scandal $\atom$, then an information goal could be that ``Alice, Bob, and Eve commonly know that either Alice and Bob have common knowledge (between them) of $\atom$ or Alice and Bob have common knowledge of $\neg\atom$.'' This can be more succinctly formalized with common knowledge operators but we bypass these in this paper. The requirement that not only Alice and Bob but also they and Eve commonly know this, is usual in a security setting. It formalizes that the protocol is known to have terminated: we may assume that everything is public about the protocol except the message (and private keys). There is also a {\em security goal} $\psi$ that needs to be preserved throughout protocol execution, e.g.\ ``Alice, Bob, and Eve commonly know that Eve is ignorant about $\atom$'' (or some more involved aspect of $\atom$, such as the identity of those involved in the scandal). A finite protocol for $a$ and $b$ to learn the secret safely should observe \[ \psi \imp \somegrp{{ab}} (\phi\et\psi) \]
In view of the open expressivity problems mentioned above, it is not known if the existence of a finite two-agent protocol specification as above is formalizable in APAL.

Such formalizations of security protocols in GAL may be relevant to obtain results on communication complexity. It is well-know that for $n$ agents to share their knowledge, $n$ rounds of communication are sufficient \cite{jfak.lonely:2006}. Still, there are many security protocols between a sender and a receiver that require more than two communications (announcements), even in the above very restricted setting. An example are Russian Cards protocols of three or more announcements, that are proved not be possible in two announcements, as described in \cite{cordonetal.tcs:2013}. The above $\psi \imp \somegrp{{ab}} (\phi\et\psi)$ can be realized by a single announcement by $a$ and a single announcement by $b$ (and even by a single simultaneous announcement by $a$ and $b$), by the procedure of merging successive announcements into a single announcement. However, even though we may have $\somegrp{{ab}} (\phi\et\psi)$, or, in a more restricted protocol setting, that there is a protocol execution by $a$ and $b$ after which $\phi\et\psi$ is true, that does not imply that {\em agent $\agent$ knows that}, or that {\em $\agentb$ knows that}. Typically, in a security setting, an agent will only make an announcement if she {\em knows} that the eavesdropper remains ignorant afterwards. Merely considering this possible is not enough. This explains the need for protocols consisting of more than two steps. As such, GAL or CAL formalizations are rather epistemic protocol specifications (with the group and coalition operators performing the role of temporal modalities in the better known temporal epistemic specifications) that we then wish to synthesize from such descriptions. In other words, such GAL and CAL specifications (and, similarly APAL specifications) can be a role in the already mentioned epistemic planning \cite{bolanderetal:2011}.

\paragraph*{Results in the paper}

APAL was shown to be undecidable in \cite{frenchetal:2008}. GAL was shown to be undecidable in \cite{agotnesetal:2014}. The satisfiability problem for APAL and GAL is not recursively enumerable. Also, based on the complete axiomatization of APAL, so that APAL validity is recursively enumerable, we may conclude that satisfiability for APAL is co-RE complete, and similarly for GAL. We further show that CAL is also undecidable. This result is new in this paper. The undecidability proofs for APAL, GAL, and CAL are all sufficiently different to warrant a integrated treatment that points out their different proofs in detail, and the presentation of the undecidability proofs of APAL and GAL in this paper is therefore substantially different from those in \cite{frenchetal:2008,agotnesetal:2014}. Whilst the previous undecidability results used a construction that required five agents, here we present an improved construction that only requires two agents. As the satisfiability for the single agent fragment of these logics is trivially decidable, this gives us a complete picture of decidability for these logics.

\paragraph{Overview of the paper}
Section \ref{sec.syntaxsemantics} introduces the logics APAL, GAL, and CAL. Sections~\ref{sec.tilings}~and~\ref{sec.bisim}  present the background on how to prove undecidability from a tiling argument. The core of the paper is then Section \ref{sec.grid} wherein the grid-like structure is defined in order to enforce a tiling argument, for the three respective logics APAL (Subsection \ref{subsec.arbitrary}), GAL (Subsection \ref{subsec.group}), and CAL (Subsection \ref{subsec.coalition}). Based on this groundwork, Section \ref{sec.together} then establishes the undecidability results in a uniform way for the three respective logics. The final Section \ref{sec.future} discusses possible generalizations and restrictions, namely to non-public communications, positive announcements, and group epistemic operators.

\section{Syntax and semantics} \label{sec.syntaxsemantics}

Given is a scenario with a finite group of agents $\Agents$ and a countable set of atomic (Boolean) propositions $\Atoms$. The agents consider different worlds possible where different sets of propositions may be true, and where different agents may have different states of knowledge. Our interest is in providing a formal language for reasoning about what agents know of the propositions, what agents know about what other agents know, and what agents can find out through informative events (such as public announcements).  

The base language we work with is epistemic logic EL:\\
$$ \phi ::=\ p\ |\ \lnot\phi\ |\ (\phi\land\phi)\ |\ \know_a\phi$$
where $p\in\Atoms$ and $a\in\Agents$. As usual, we take \(\know_a\phi\) to mean {\em agent $a$ knows }\(\phi\), and  let \(\susp_a\phi\) abbreviate \(\lnot \know_a\lnot\phi\) ({\em agent $a$ considers }$\phi$ {\em to be possible}). We consider extensions of this logic with
\begin{enumerate}
\item public announcements $[\psi]\phi$;
\item arbitrary public announcements $\allpub\phi$;
\item group announcements $\allgrp{G}\phi$; 
\item coalition announcements $\allcoal{G}\phi$;
\end{enumerate}
where $G\subseteq\Agents$ and $\psi$ and $\phi$ may be any formula of the (extended) logic. The dual operators are, respectively, $\langle\psi\rangle\phi$, $\somepub\phi$, $\somegrp{G}\phi$ and $\somecoal{G}\phi$. When talking about a specific group of agents (say $i$ and $j$) we will typically write $\allgrp{i,j}$ and $\allcoal{i,j}$ rather than $\allgrp{\{i,j\}}$ and $\allcoal{\{i,j\}}$, respectively.

We let PAL refer to EL augmented with public announcements, 
 APAL  refer to EL augmented with arbitrary public announcements,
GAL  refer to EL augmented with arbitrary group announcements, and
CAL refer to EL augmented with coalition announcements.

In order to elegantly present the semantics of these logics, later, we further define the sublanguage ${EL}^G$ to be the set of formulas of the type 
$\bigwedge_{a\in G}\know_a\phi_a$, where for each $a\in G$, $\phi_a\in{EL}$.

Formulas of these logics are interpreted over structures \(M = (\States, \sim, V)\), where 
\(\States\) is a non-empty set of worlds, 
\(\sim:\Agents\longrightarrow \wp(\States\times \States)\) assigns a reflexive, transitive and symmetric accessibility relation \(\sim_a\) to each agent \(a\) (in other words, an equivalence relation), and 
\(V: P\longrightarrow\wp(\States)\) maps each proposition to the set of worlds where it is true. 
For each $a\in\Agents$ and each $s\in\States$, the set of worlds $\{t\ |\ s\sim_a t\}$, for which we write $[s]_a$, is an equivalence class. A pair $M_s$, where $s$ is a world in the domain of $M$, is called a {\em pointed model}.

Let \(M = (\States, \sim, V)\) and suppose that \(s\in S\). The semantics of EL and the operators above are given recursively for a pointed model $M_s$:
$$\begin{array}{ll}
M_s\models p &{\rm iff}\ s\in V(p)\\
M_s\models\lnot\phi &{\rm iff}\ M_s\not\models\phi\\
M_s\models\phi_1\land\phi_2&{\rm iff}\ M_s\models\phi_1\ {\rm and}\ M_s\models\phi_2\\
M_s\models \know_a\phi &{\rm iff}\ \forall t\in S\  {\rm where}\ s\sim_a t,\ M_t\models\phi\\
M_s\models [\psi]\phi &{\rm iff}\ M_s\models\psi\ {\rm implies}\ M^{\psi}_s\models\phi\\
M_s\models \allpub\phi &{\rm iff}\ \forall \psi\in {EL},\ M_s\models[\psi]\phi\\
M_s\models \allgrp{G}\phi &{\rm iff}\ \forall \psi\in{EL}^G,\ M_s\models[\psi]\phi\\
M_s\models \allcoal{G}\phi &{\rm iff}\ \forall \psi\in{EL}^G,\ \exists \psi'\in{EL}^{\Agents\backslash G},\textrm{ s.t. } M_s\models[\psi\land\psi']\phi
\end{array}
$$
where \(M^{\psi} = (S',\sim', V')\) is such that: 
\(S' = \{s\in S\ |\ M_s\models\psi\}\); 
for all \(a\in A\), \(\sim'_a \ = \ \sim_a\cap (S'\times S')\); 
and for all \(p\in P\), \(V'(p) = V(p)\cap S'\).  

We say that a formula \(\phi\) is satisfiable if there exists some model \(M= (S, \sim, V)\) 
and some world \(s\in S\) such that \(M_s\models\phi\), and if \(M_s\models\phi\) for all 
model-world pairs $M_s$ we say \(\phi\) is valid.

The formula $[\psi]\phi$ expresses the property that after the true announcement of $\psi$, $\phi$ will hold. 
If $\psi$ is not true (in a given world) then the world is not consistent with the announcement of $\psi$, so $[\psi]\phi$ is deemed to be vacuously true in such a world. We note that $\langle\psi\rangle\phi$ abbreviates $\neg[\psi]\neg\phi$, which has the same interpretation except that when $\psi$ is not true in a given world, its interpretation defaults to false. 
It is known that epistemic logic extended with public announcements is expressively equivalent to the base epistemic logic, and as such it is decidable \cite{hvdetal.del:2007}.

The formula \(\allpub\phi\) expresses the statement {\em ``after publicly announcing any true formula of epistemic logic, $\phi$ must be true.''} Now, firstly, we can replace `any true formula' by `any formula': the restriction that the formula is true in the state of evaluation can be dropped, because if the formula is false the announcement cannot be made, so that $\allpub\phi$ is then vacuously true. Secondly, this statement implicitly quantifies over all formulas $\psi$ of epistemic logic, which is an infinite set and, expressively, a very powerful device. We give an example of this expressive power below.
\begin{example}
Suppose $\phi$ is the formula $\susp_b p\rightarrow \susp_a p$. The formula \(\allpub\phi\) is true at some world $s$ where \(p\) is not true, if and only if for every \(b\)-related world \(u\), {\em for every epistemic formula} \(\psi\), if $u$ satisfies $p\et\psi$ then there must be some \(a\)-related world, \(v_\psi\), that also satisfies $p\et\psi$. Otherwise, announcing $\psi$ would be sufficient to ensure $\susp_b p \land \know_a \neg p$. Now, $\psi$ is an arbitrary formula, so it can be arbitrarily complex. For each such $\psi$, the world $v_\psi$ may be different, but the more complex $\psi$ becomes, the more similar $u$ and $v_\psi$ must be, as they correspond on $\psi$. Thus the formula $\allpub\phi$ is enough to say that every $b$-related world that does satisfies $p$ can be ``approximated arbitrarily well'' by an $a$-related world. This scenario is informally illustrated in Figure~\ref{finApprox} and we will make this notion of approximation more precise in the Section~\ref{sec.grid}.
In \cite{frenchetal:2008} it was shown that such a property could be exploited to encode a recursively enumerable tiling problem.
\label{finApproxEG}
\end{example}

\begin{figure}
\begin{center}
\scalebox{0.7}{
\begin{picture}(0,0)%
\includegraphics{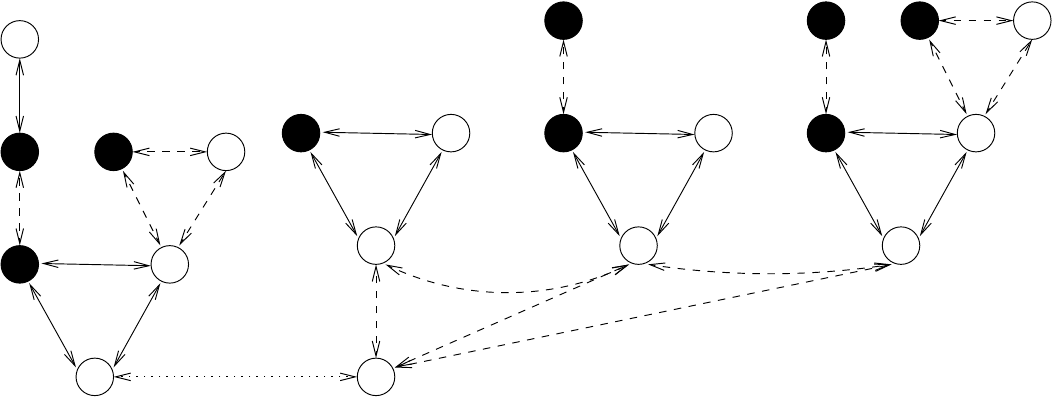}%
\end{picture}%
\setlength{\unitlength}{3947sp}%
\begingroup\makeatletter\ifx\SetFigFont\undefined%
\gdef\SetFigFont#1#2#3#4#5{%
  \fontsize{#1}{#2pt}%
  \fontfamily{#3}\fontseries{#4}\fontshape{#5}%
  \selectfont}%
\fi\endgroup%
\begin{picture}(8416,3226)(2243,-8980)
\put(7501,-8461){\makebox(0,0)[lb]{\smash{{\SetFigFont{12}{14.4}{\rmdefault}{\mddefault}{\updefault}{\color[rgb]{0,0,0}$.......$}%
}}}}
\put(5026,-8311){\makebox(0,0)[lb]{\smash{{\SetFigFont{12}{14.4}{\rmdefault}{\mddefault}{\updefault}{\color[rgb]{0,0,0}$a$}%
}}}}
\put(5851,-8311){\makebox(0,0)[lb]{\smash{{\SetFigFont{12}{14.4}{\rmdefault}{\mddefault}{\updefault}{\color[rgb]{0,0,0}$a$}%
}}}}
\put(6976,-8236){\makebox(0,0)[lb]{\smash{{\SetFigFont{12}{14.4}{\rmdefault}{\mddefault}{\updefault}{\color[rgb]{0,0,0}$a$}%
}}}}
\put(4051,-8686){\makebox(0,0)[lb]{\smash{{\SetFigFont{12}{14.4}{\rmdefault}{\mddefault}{\updefault}{\color[rgb]{0,0,0}$b$}%
}}}}
\put(2951,-8811){\makebox(0,0)[lb]{\smash{{\SetFigFont{12}{14.4}{\rmdefault}{\mddefault}{\updefault}{\color[rgb]{0,0,0}$u$}%
}}}}
\put(5201,-8811){\makebox(0,0)[lb]{\smash{{\SetFigFont{12}{14.4}{\rmdefault}{\mddefault}{\updefault}{\color[rgb]{0,0,0}$s$}%
}}}}
\put(5131,-7711){\makebox(0,0)[lb]{\smash{{\SetFigFont{12}{14.4}{\rmdefault}{\mddefault}{\updefault}{\color[rgb]{0,0,0}$v_{\psi_0}$}%
}}}}
\put(7231,-7711){\makebox(0,0)[lb]{\smash{{\SetFigFont{12}{14.4}{\rmdefault}{\mddefault}{\updefault}{\color[rgb]{0,0,0}$v_{\psi_1}$}%
}}}}
\put(9331,-7711){\makebox(0,0)[lb]{\smash{{\SetFigFont{12}{14.4}{\rmdefault}{\mddefault}{\updefault}{\color[rgb]{0,0,0}$v_{\psi_2}$}%
}}}}
\end{picture}%
}
\end{center}
\caption{An informal representation of the scenario in Example~\ref{finApproxEG}. The $\phi$-world is the root of the structure, the link from the root to the $b$-accessible world is dotted, whereas the links to $a$-accessible worlds are dashed. As the $\psi_i$ become increasingly complex, the structures attached to $v_{\psi_i}$ (from left to right) become an increasingly accurate approximation of the real structure in $u$.}\label{finApprox}
\end{figure}

The formula $\allgrp{G}\phi$ expresses the property that after a group of agents (simultaneously) announce any statements that they know to be true, $\phi$ will be true. This statement has an implicit quantifier in it as well, but this time it is only quantifying over formulas that agents know to be true (in base epistemic logic), rather than over all formulas that are true. The example above does not apply, as the worlds $v$ and $u$ would now need to agree only on formulas that some agent in $G$ knows to be true, rather than all formulas. As such we can not claim that $u$ can be ``approximated arbitrarily well'', since the set of formulas known by some agent in $G$ creates an upper bound for this ``approximation''. 
However, we are able to exploit a different type of property with group announcements. 
\begin{example}\label{grFinApprox}
A group announcement made by a single agent must preserve all worlds that are indistinguishable by that agent to the current world. Therefore, supposing that an atom $p$ is not true at the current world, $s$, and consider formula $\allgrp{b}\susp_a p$ to be true there. This means that after every possible announcement that agent $b$ knows to be true, agent $a$ still considers a world where $p$ is true to be possible. Therefore, there is no simple epistemic formula that $b$ knows to be true, and that is false at every $a$-reachable world where $p$ is true. Suppose that we are able to characterize the set of all worlds that $b$ is unable to distinguish from $s$ up to a given level of complexity, using an epistemic formula, $\psi$. It follows that  $\psi$ will be known to agent $b$, and that after the announcement of $\psi$, agent $a$ will still consider a world where $p$ is true to be possible. Therefore, it must be that there is an $a$-reachable world that agrees with one of the $b$-reachable worlds up to the given level of complexity. As the level of complexity of $\psi$ is arbitrary, again we are able to assume that every $b$-reachable world where $p$ is true may be ``approximated arbitrarily well'' by some $a$-reachable world. We will formalise this argument in Section~\ref{sec.grid}.
\end{example}

Finally in the case of coalition announcements there is an alternation in the announcement made by, and known by, the coalition, and an announcement made by, and known by, those not in the coalition. This alternation makes the constructions we require much more difficult. However, we are able to avoid alternations by only using ``coalitions'' that are the entire set of agents. Consequently, the coalition announcement is semantically equivalent to  a group announcement where the group is the full set of agents. We have chosen to present the proof for the undecidability of CAL separately from the proof of undecidability of GAL, since the proof for GAL is more elegant and only relies on single agent groups. However, the underlying mechanism is similar for both cases.

\section{Tilings}\label{sec.tilings} 

The undecidability of a logic may be shown by encoding an undecidable tiling problem into the logic.
The tiling problem we will use is as follows:
\begin{definition}\label{def:tiling}
Let \(C\) be a finite set of {\em colours} and define a \(C\)-tile to be a four-tuple over \(C\), 
\(\gamma = (\gamma^{u},\gamma^{r},\gamma^d,\gamma^{\ell})\), where the elements are referred to as, respectively, 
{\em up, right, down} and {\em left}. 
The tiling problem is, 
for any given finite set of \(C\)-tiles, \(\Gamma\), determine if there is a function 
\(\lambda:\N\times\N\longrightarrow\Gamma\) such that  for all \((i, j)\in\N\times\N\):
\begin{enumerate}
\item \(\lambda(i,j)^u = \lambda(i,j+1)^d\)
\item \(\lambda(i,j)^r = \lambda(i+1,j)^{\ell}\).
\end{enumerate}
\end{definition}

The tiling problem is known to be co-RE complete \cite{berger:1966,harel:1986}. 
In \cite{frenchetal:2008,agotnesetal:2014} it was shown, given a set of tiles $\Gamma$, that we could define a formula of APAL that was satisfiable if and only if $\Gamma$ could tile $\N\times\N$.
In this paper we take a similar, but improved, approach. 
The proof presented in \cite{frenchetal:2008} was complicated because the grid-like structure could only be enforced up to $n$ bisimilarity (for an arbitrary $n$). For GAL and CAL the restriction to arbitrary group announcements makes it harder to capture the notion of $n$-bisimilarity, which via the correspondence theory of $\cite{vanBenthem:1984}$ is equivalent to two worlds agreeing on the interpretation of all formulas up to a given modal depth. As  GAL and CAL  do not quantify over all formulas, but rather only over the formulas that are known by the group of agents, this approach is more challenging.

\section{$n$-Bisimulation}\label{sec.bisim}
\newcommand{\nbisim}[2]{\ensuremath{\|#1\|_{#2}}}
A key concept we require in this proof is $n$-bisimilarity:
\begin{definition}\label{nbisim}
Fix a finite set of atoms, $\Pi$.
Given a model, $M = (S,\sim, V)$, an $n$-$\Pi$-bisimulation over $M$ is a relation $R_n\subseteq S\times S$ defined recursively as:
\begin{enumerate}
\item $s R_0 t$ if and only if for all $p\in\Pi$, $s\in V(p)$ if and only if $t\in V(p)$.
\item for all $m>0$, $s R_{m} t$ if and only if $s R_{m-1} t$ and:
	\begin{enumerate}
	\item for all $i\in\Agents$, for all $u$ where $s\sim_i u$, there is some $v$ where $t\sim_i v$ and $u R_m v$;
	\item for all $i\in\Agents$, for all $v$ where $t\sim_i v$, there is some $u$ where $s\sim_i u$ and $v R_m u$.
	\end{enumerate}
\end{enumerate}
Given such a model $M$, and some $s\in S$, we let $\nbisim{s}{n}^\Pi\subseteq S$ be the set of worlds $n$-$\Pi$-bisimilar to $s$. (We omit the $\Pi$ when it is clear from context). 
\end{definition}
It is clear that $\nbisim{s}{n}$ is an equivalence class for all $s\in S$.
Another important property of $n$-bisimilarity is that any two worlds, $s$ and $t$, that are not $n$-bisimilar must have a witnessing formula $\phi\in\lang_{el}$, such that $M_s\models\phi$ and $M_t\not\models\phi$.
Thus, if we are able to establish that two worlds in the model {\em cannot} be distinguished by any formula, it follows that these two worlds are $n$-bisimilar for all $n$. Through this property we are able to constrain a model to be bisimilar to a grid-like structure, described in the following section.

\begin{lemma}\label{nbisimwitness}
Let $\Pi$ be a finite set of propositional atoms. Suppose that $M = (S,\sim, V)$ and $s\in S$. Then for all  $n$, there is some $\lang_{el}$ formula, $\phi$, such that for all $t\in S$, $M_t\models\phi$ if and only if $t\in\nbisim{s}{n}^\Pi$.
\end{lemma}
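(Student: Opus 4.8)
The plan is to prove this by induction on $n$, constructing the characteristic (Hennessy--Milner-style) formula $\chi^n_s$ that defines the $n$-$\Pi$-bisimulation class of $s$. Since $\Pi$ is finite and $\Agents$ is finite, at each level there are only finitely many equivalence classes to distinguish, which keeps each $\chi^n_s$ a genuine (finite) formula of $\lang_{el}$. The key point to exploit is that $n$-$\Pi$-bisimilarity has \emph{finite index}: for fixed $n$ there are only finitely many distinct sets $\nbisim{t}{n}^\Pi$ as $t$ ranges over $S$, so the conjunctions and disjunctions appearing below are all finite.

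\medskip

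\noindent\textbf{Base case.} For $n = 0$, let $\chi^0_s := \bigwedge_{p \in \Pi, \, s \in V(p)} p \ \et \ \bigwedge_{p \in \Pi, \, s \notin V(p)} \neg p$. This is a finite conjunction since $\Pi$ is finite, and by Definition~\ref{nbisim}(1) we have $M_t \models \chi^0_s$ iff $t$ agrees with $s$ on all atoms of $\Pi$ iff $t \in \nbisim{s}{0}^\Pi$.

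\medskip

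\noindent\textbf{Inductive step.} Assume for each world $u$ we have a formula $\chi^{n-1}_u$ defining $\nbisim{u}{n-1}^\Pi$. Pick a finite set $W$ of representatives, one from each $(n-1)$-bisimulation class that is realised in $M$, so that $\{\chi^{n-1}_w : w \in W\}$ is finite. Define
$$\chi^n_s \ := \ \chi^{n-1}_s \ \et \ \bigwedge_{i \in \Agents} \Bigl( \bigwedge_{w \in W, \ \exists u (s \sim_i u \, \et \, u R_{n-1} w)} \susp_i \chi^{n-1}_w \ \ \et \ \ \know_i \bigvee_{w \in W, \ \exists u (s \sim_i u \, \et \, u R_{n-1} w)} \chi^{n-1}_w \Bigr).$$
The first conjunct handles the $R_{n-1}$ clause of Definition~\ref{nbisim}(2); the $\susp_i$-conjuncts encode the ``forth'' condition (b)(i) — every $(n-1)$-class reachable from $s$ by $\sim_i$ must also be reachable from $t$ — and the $\know_i$-conjunct encodes the ``back'' condition (b)(ii) — every $\sim_i$-successor of $t$ lies in one of the classes reachable from $s$. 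One then checks both directions: if $t \in \nbisim{s}{n}^\Pi$ then unwinding Definition~\ref{nbisim} together with the inductive hypothesis gives $M_t \models \chi^n_s$; conversely if $M_t \models \chi^n_s$, the same conjuncts let one verify clauses (2), (2)(a), (2)(b) directly, using the inductive hypothesis to pass between ``satisfies $\chi^{n-1}_w$'' and ``$R_{n-1}$-related to $w$''. Taking $\phi := \chi^n_s$ finishes the proof. (This is essentially the standard correspondence between $n$-bisimilarity and indistinguishability up to modal depth $n$, cf.\ \cite{vanBenthem:1984}, specialised to the multi-agent S5 setting and a fixed finite atom set.)

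\medskip

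\noindent\textbf{Main obstacle.} The only subtlety worth care is ensuring the formulas stay finite: this relies precisely on the finite index of $n$-$\Pi$-bisimilarity, which in turn uses that both $\Pi$ and $\Agents$ are finite (the latter is part of the standing assumptions in Section~\ref{sec.syntaxsemantics}). A second, more cosmetic point is that the quantifications ``$\exists u (s \sim_i u \et u R_{n-1} w)$'' that index the conjunctions are properties of $s$ and $M$, not of the formula; one should phrase $\chi^n_s$ as depending on the \emph{set} $T^n_s := \{t \in S : t \in \nbisim{s}{n}^\Pi\}$ or, equivalently, note that the construction is well-defined because $w R_{n-1} w'$ implies $\chi^{n-1}_w$ and $\chi^{n-1}_{w'}$ are logically equivalent, so the choice of representatives $W$ is immaterial. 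Everything else is a routine unwinding of the two definitions.
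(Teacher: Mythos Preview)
Your proof is correct and follows essentially the same approach as the paper: both construct the characteristic (Hennessy--Milner) formula by induction on $n$, using the propositional description at level $0$ and, at level $n$, the level-$(n-1)$ formula together with $\susp_i$-conjuncts for the forth condition and a $\know_i$-disjunction for the back condition. Your presentation is in fact somewhat tidier than the paper's---you make the finite-index argument explicit and take a separate $\susp_i$-conjunct for each reachable $(n-1)$-class, whereas the paper's case analysis bundles these less carefully---but the underlying construction and the role of finiteness of $\Pi$ and $\Agents$ are the same.
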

\begin{proof}
We show this by induction over $n$, where the induction hypothesis is, for any given $n$:
\quote{$IH(n)$: for all $t\in S$, there is some formula, $\phi^t_n$, that is sufficient to distinguish all states that are not $n$-$\Pi$-bisimilar to $t$.}
That is $M_t\models\phi^t_n$ and if $u\notin\nbisim{t}{n}$ then $M_u\not\models\phi^t_n$.

For the base case, $IH(0)$, it is clear that if $u\notin\nbisim{t}{0}$, then $t$ and $u$ must disagree on the interpretation of some atom $p\in \Pi$. 
Therefore we may set $\phi^t_0 = \bigwedge\{p\in\Pi\ |\ t\in V(p)\}\et\bigwedge\{\neg p\in\Pi\ |\ t\notin V(p)\}$.
As $\Pi$ is finite it is clear that this is a well-defined formula satisfying the base of this induction.

Now suppose $IH(m)$ holds, so for all $t\in S$, there is a formula $\phi^t_m$ such that $M_t\models\phi^t_m$ and if $u\notin\nbisim{t}{m}$ then $M_u\not\models\phi^t_m$. Now for $u, t\in S$ such that  $u\notin\nbisim{t}{m+1}$, we have three possible scenarios:
\begin{enumerate}
\item $u\notin\nbisim{t}{m}$, in which case there is some $\lang_{el}$ formula $\phi^t_m$ where $M_t\models\phi^t_m$ and $M_u\not\models\phi^t_m$. 
\item there is some $i\in\Agents$, and some $v$ where $t\sim_i v$, but for all $v'$ where $u\sim_i v'$, $v'\notin\nbisim{v}{m}$. Therefore, for every such $v$ there is some formula $\phi^v_m$ where $M_v\models\phi^v_m$ and $M_{v'}\not\models\phi^v_m$.
As these formulas are taken from a finite set, there is a finite formula $\psi^t_m = \susp_i(\bigvee_{t\sim_i v}\phi^v_m)$ such that $M_t\models\psi^t_m$ and $M_u\not\models\psi^t_m$.
\item there is some $i\in\Agents$ and some $v'$ where $u\sim_i v'$, but for all $v$ where $t\sim_i v$, $v'\notin\nbisim{v}{m}$. 
As $v'\notin\nbisim{v}{m}$ for all $v$ where $t\sim_i v$, it must be the case that $M_{v'}\not\models\bigvee_{t\sim_i v}\phi^v_m$. 
We let $\xi^t_m = \know_i(\bigvee_{t\sim_i v}\phi^v_m)$, and it is clear that $M_t\models\xi^t_m$.   
\end{enumerate}
We define the $\lang_{el}$ formula $\phi^t_{m+1} = \phi^t_m\et\psi^t_m\et\xi^t_m$. From the reasoning above it is clear that $u\in\nbisim{t}{m+1}$ if and only if $M_u\models\phi^t_{m+1}$. 
\end{proof}

From the proof of Lemma~\ref{nbisimwitness} we can see that given a finite set of propositions, $\Pi$, we only require a finite set of formulas to distinguish all states, up to $n$-$\Pi$-bisimilarity.
\begin{corollary}\label{finitenbisim}
Let $\Pi$ be a finite set of propositional atoms and $n\in\N$. Then there is a finite set of formulas $\phi_0,\phi_1,\hdots,\phi_n$ such that for all $M = (S, R, V)$, for all $u,v\in S$:
\begin{enumerate}
\item there is some $i\leq n$ such that $M_u\models\phi_i$; and
\item if there is some $i\leq n$ such that $M_u\models\phi_i$ and $M_v\models\phi_i$, then $v\in\nbisim{u}{n}^\Pi$. 
\end{enumerate}
\end{corollary}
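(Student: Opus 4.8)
The plan is to derive Corollary~\ref{finitenbisim} directly from Lemma~\ref{nbisimwitness} together with the observation, already noted below the proof of that lemma, that only finitely many distinguishing formulas are actually needed. The key point is that for a fixed finite atom set $\Pi$ and a fixed $n$, the recursive construction in Lemma~\ref{nbisimwitness} produces the formulas $\phi^t_n$ from a finite stock: the number of $n$-$\Pi$-bisimilarity classes (over all models whatsoever) is bounded, since $R_0$ has at most $2^{|\Pi|}$ classes and each successive refinement step only partitions existing classes according to which of the finitely many already-constructed subformulas are ``seen'' under each $\sim_i$. Crucially, the \emph{syntactic shape} of $\phi^t_n$ depends only on the $n$-$\Pi$-bisimilarity type of $t$, not on the ambient model $M$, so one really obtains a finite list $\phi_0,\dots,\phi_k$ of EL-formulas that is model-independent.

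Concretely, I would argue as follows. First I would make explicit the standard fact that, up to logical equivalence, there are only finitely many $n$-$\Pi$-bisimilarity types: define type$_0$ to range over the $\leq 2^{|\Pi|}$ Boolean types over $\Pi$, and type$_{m+1}$ of a world to be the pair consisting of its type$_m$ together with, for each agent $i\in\Agents$, the set of type$_m$'s realised in its $\sim_i$-class; since $\Agents$ is finite and there are finitely many type$_m$'s, there are finitely many type$_{m+1}$'s. One shows by a routine induction that two worlds are $n$-$\Pi$-bisimilar iff they have the same type$_n$ (this is essentially Definition~\ref{nbisim} re-read), and that the formula $\phi^t_n$ constructed in Lemma~\ref{nbisimwitness} depends only on type$_n(t)$ — inspect the three cases: $\phi^t_0$ is literally determined by the Boolean type, and the inductive clause builds $\phi^t_{m+1}$ out of disjunctions $\bigvee_{t\sim_i v}\phi^v_m$ which, by the induction hypothesis, range over a set of formulas determined solely by the type$_m$'s appearing in $[t]_i$, i.e.\ by type$_{m+1}(t)$. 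Enumerate the (finitely many) type$_n$'s as $\tau_1,\dots,\tau_k$ and let $\phi_1,\dots,\phi_k$ be the corresponding formulas $\phi^t_n$ (for any representative $t$ of $\tau_j$), with say $\phi_0$ taken to be $\bot$ or simply omitted.

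Then the two claimed properties fall out. For (1): given any $M$ and any $u\in S$, the world $u$ has some type$_n$, say $\tau_j$; by Lemma~\ref{nbisimwitness} applied with $s=u$ we get $M_u\models\phi^u_n=\phi_j$, so $M_u\models\phi_i$ for some $i\leq k$. For (2): suppose $M_u\models\phi_i$ and $M_v\models\phi_i$ with $\phi_i=\phi^t_n$ the distinguishing formula for type $\tau_i$ with representative $t$; by the ``only if'' direction of Lemma~\ref{nbisimwitness} (contrapositive of the clause ``if $x\notin\nbisim{t}{n}$ then $M_x\not\models\phi^t_n$''), both $u$ and $v$ lie in $\nbisim{t}{n}$, and since $n$-$\Pi$-bisimilarity is an equivalence relation (stated just after Definition~\ref{nbisim}) we conclude $v\in\nbisim{u}{n}^\Pi$.

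The only genuinely delicate point — and the one I would be most careful about — is the claim that $\phi^t_n$ depends only on type$_n(t)$ and not on the surrounding model: one must check that the recursion in Lemma~\ref{nbisimwitness} never ``reaches into'' $M$ except through bisimilarity types, so that the finite list $\phi_1,\dots,\phi_k$ can be fixed once and for all before any model is presented. This is true because each step only uses which $\phi^v_m$ occur as $v$ ranges over a $\sim_i$-class, and by induction those $\phi^v_m$ are themselves drawn from the model-independent finite list at level $m$; but it is worth spelling out, since the literal statement of Lemma~\ref{nbisimwitness} quantifies over a fixed $M$ and a priori allows the formula to vary with $M$. Everything else is bookkeeping with the already-established equivalence-relation property and the witnessing property of Lemma~\ref{nbisimwitness}.
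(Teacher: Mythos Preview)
Your proposal is correct and follows the same approach as the paper, which simply says the result ``follows immediately from the proof of Lemma~\ref{nbisimwitness}.'' You have in fact done more than the paper: your explicit treatment of the model-independence of the formulas $\phi^t_n$ (via the type$_n$ construction) fills a genuine gap that the paper's one-line proof elides, since Lemma~\ref{nbisimwitness} is stated for a fixed $M$ while the corollary quantifies over all models.
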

\begin{proof}
This follows immediately from the proof of Lemma~\ref{nbisimwitness}.
\end{proof}

Finally, we present a well known folk-lemma that $n$-$\Pi$-bisimulations preserve the interpretation of pure modal formulas over $\Pi$, with modal depth less than or equal to $n$.
\begin{lemma}\label{modDepPres}
Let $M = (S, R, V)$ and suppose that $s,t\in S$ such that $t\in\nbisim{s}{n}^\Pi$. Then for all pure modal formulas $\phi$, consisting of only atoms in $\Pi$, that have modal depth less than or equal to $n$, we have $M_s\models\phi$ if and only if $M_t\models\phi$. 
\end{lemma}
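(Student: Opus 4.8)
The plan is to prove this by a double induction: an outer induction on $n$ and, within the inductive step, an inner induction on the structure of the pure modal formula $\phi$. The outer induction hypothesis for $n-1$ reads: for every model, every pair of worlds that are $(n-1)$-$\Pi$-bisimilar, and every pure modal formula over $\Pi$ of modal depth at most $n-1$, the two worlds agree on that formula.

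For the base case $n=0$, a formula of modal depth $0$ is a Boolean combination of atoms from $\Pi$, and clause~1 of Definition~\ref{nbisim} says precisely that $0$-$\Pi$-bisimilar worlds satisfy the same atoms of $\Pi$; a one-line sub-induction on the Boolean structure of $\phi$ then finishes. For the inductive step, I would fix $s,t$ with $t\in\nbisim{s}{n}^\Pi$ and run the inner induction on $\phi$ (of modal depth $\le n$). The atomic case uses $\nbisim{\cdot}{n}^\Pi\subseteq\nbisim{\cdot}{0}^\Pi$; the cases $\phi=\neg\psi$ and $\phi=\psi_1\land\psi_2$ are immediate from the inner hypothesis, since the immediate subformulas still have modal depth $\le n$. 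The one case with content is $\phi=\know_i\psi$, where $\psi$ has modal depth $\le n-1$: assuming $M_s\models\know_i\psi$ and taking an arbitrary $v$ with $t\sim_i v$, I would use the back-and-forth clauses of Definition~\ref{nbisim} (together with the fact that $\nbisim{s}{n}^\Pi$ is an equivalence class, so $s\in\nbisim{t}{n}^\Pi$ as well) to produce a world $u$ with $s\sim_i u$ that is $(n-1)$-$\Pi$-bisimilar to $v$; then $M_u\models\psi$ because $s\sim_i u$, and the outer induction hypothesis applied to $u,v$ and to $\psi$ gives $M_v\models\psi$. Since $v$ was arbitrary this yields $M_t\models\know_i\psi$, and the converse implication is obtained by swapping the roles of $s$ and $t$ and using the other direction of the back-and-forth.

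I do not expect a genuine obstacle here --- this is the standard back-and-forth argument for bounded modal-depth equivalence. The only thing that needs care is the bookkeeping between the two nested inductions: in the modal case, a successor of $t$ can be matched only by a successor of $s$ that is $(n-1)$-$\Pi$-bisimilar to it, one level lower, and it is precisely this simultaneous drop by one in the bisimulation level and the modal depth that makes the outer induction hypothesis applicable. Keeping the two directions of the back-and-forth straight, and remembering that $\neg$, $\land$, and subformula replacement all preserve the bound ``modal depth $\le n$'', is all that is required.
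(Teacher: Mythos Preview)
Your proposal is correct and follows essentially the same approach as the paper: an induction on $n$, with the inductive step handling the modal case $\know_i\psi$ by matching successors at the $(n{-}1)$-bisimulation level and invoking the induction hypothesis for $\psi$ (of depth $\le n{-}1$). The only cosmetic difference is that the paper compresses your inner structural induction into a single step by writing $\phi$ as a propositional combination of atoms and top-level modal subformulas and observing that agreement on the parts yields agreement on the whole.
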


\begin{proof}
This is shown by a simple induction over $n$. 
If $n=0$, then $s$ and $t$ must agree on the interpretation of all propositional atoms in $\Pi$, and so must also agree on the interpretation of all propositional formulas.
For the induction, suppose that for all formulas $\phi$ consisting of atoms in $\Pi$, and having modal depth less than $n$, for all $u,v\in S$ where $u\in\nbisim{v}{n-1}$, we have $M_u\models\phi$ if and only if $M_v\models\phi$.
Let $s, t\in S$ be such that $t\in\nbisim{s}{n}$, and suppose that $\phi$ is any formula with modal depth at most $n$. 
The formula $\phi$ can be broken down into a propositional combination of propositional atoms, $p_0,\hdots,p_a$ in $\Pi$ and modal formulas $\phi_0,\hdots,\phi_b$.
By definition, $t\in\nbisim{s}{0}$ so for $x\leq a$, $M_t\models p_a$ if and only if $M_s\models p_a$.
Also, for every $x\leq b$ suppose that $\phi_x = \know_i\psi$, where the modal depth of $\psi$ is less than $n$. 
If $M_t\models\know_i\psi$, then for every $t'\in t R_i$, $M_{t'}\models\psi$. 
Since $t\in \nbisim{s}{n}$, for every $s'\in s R_i$, there is some $t'\in t R_i$ such that $t'\in\nbisim{s'}{n-1}$. 
By the induction hypothesis, $M_{t'}\models\psi$ if and only if $M_{s'}\models\psi$, so for every $s'\in s R_i$, we have $M_{s'}\models\psi$, and $M_s\models\know_i\psi$.
Conversely, if $M_t\not\models\know_i\psi$, then there is some $t'\in t R_i$ where $M_{t'}\not\models\psi$.
Since $t\in \nbisim{s}{n}$, for every $t'\in t R_i$, there is some $s'\in s R_i$ such that $t'\in\nbisim{s'}{n-1}$.
By the induction hypothesis, and the fact that the modal depth of $\psi$ is less than $n$, it follows that $M_{s'}\not\models\psi$. 
Therefore, $M_s\not\models\know_i\psi$, and thus $M_t\models\phi_x$ if and only if $M_s\models\phi_x$.
As $s$ and $t$ agree on the interpretations of all the parts of $\phi$, they must also agree on the interpretation of $\phi$, completing the inductive case.
\end{proof}

\newcommand{\east}{\ensuremath{\mathfrak{e}}}
\newcommand{\west}{\ensuremath{\mathfrak{w}}}
\newcommand{\north}{\ensuremath{\mathfrak{n}}}
\newcommand{\south}{\ensuremath{\mathfrak{s}}}
\newcommand{\trans}{\ensuremath{\mathfrak{t}}}
\newcommand{\hearts}{\ensuremath{{\heartsuit}}}
\newcommand{\clubs}{\ensuremath{{\clubsuit}}}
\newcommand{\diamonds}{\ensuremath{{\diamondsuit}}}
\newcommand{\spades}{\ensuremath{{\spadesuit}}}
\newcommand{\edge}{\ensuremath{\mathfrak{e}}}
\newcommand{\sq}{\ensuremath{\mathfrak{s}}}
\newcommand{\up}{\ensuremath{u}}
\newcommand{\down}{\ensuremath{d}}
\newcommand{\lef}{\ensuremath{\ell}}
\newcommand{\righ}{\ensuremath{r}}
\newcommand{\centr}{\ensuremath{c}}

\section{The grid-like structure}\label{sec.grid}

In this section we provide the main construction that allows the tiling problem to be reduced to the satisfiability problem for formulas in APAL, GAL and CAL. The main steps of this process are:
\begin{enumerate}
\item enforcing the structure of a satisfying model to have a grid-like structure; 
\item defining a formula to represent common knowledge; 
\item using propositional atoms to represent tiles, express the formula ``it is common knowledge that adjacent tiles on the grid have matching sides.''
\end{enumerate}

The model we aim to build represents an infinite grid, or a checkerboard. Rather than the typically black and white squares of a checkerboard we use the card suits $\hearts$, $\clubs$, $\diamonds$ and $\spades$ to label the squares, in a regular fashion: rows either alternate between $\spades$ and $\diamonds$ squares, or $\hearts$ and $\clubs$ squares; and columns either alternate between $\spades$ and $\hearts$ squares, or $\diamonds$ and $\clubs$ squares. We will provide a formula that enforces these constraints on the model, and also requires that the squares are labelled with tiles as specified by the tiling problem. The satisfiability of this formula is then equivalent to knowing whether a set of tiles can tile the plane. 
The base structure is represented in Figure~\ref{checkerboard}.
\begin{figure}
\begin{center}
\scalebox{0.7}{
\begin{picture}(0,0)%
\includegraphics{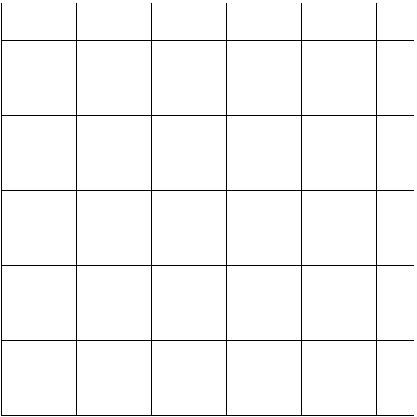}%
\end{picture}%
\setlength{\unitlength}{3947sp}%
\begingroup\makeatletter\ifx\SetFigFont\undefined%
\gdef\SetFigFont#1#2#3#4#5{%
  \reset@font\fontsize{#1}{#2pt}%
  \fontfamily{#3}\fontseries{#4}\fontshape{#5}%
  \selectfont}%
\fi\endgroup%
\begin{picture}(3324,3324)(1189,-7573)
\put(3826,-7336){\makebox(0,0)[lb]{\smash{{\SetFigFont{12}{14.4}{\familydefault}{\mddefault}{\updefault}{\color[rgb]{0,0,0}$\hearts$}%
}}}}
\put(1426,-7336){\makebox(0,0)[lb]{\smash{{\SetFigFont{12}{14.4}{\familydefault}{\mddefault}{\updefault}{\color[rgb]{0,0,0}$\hearts$}%
}}}}
\put(1426,-6736){\makebox(0,0)[lb]{\smash{{\SetFigFont{12}{14.4}{\rmdefault}{\mddefault}{\updefault}{\color[rgb]{0,0,0}$\spades$}%
}}}}
\put(2026,-7336){\makebox(0,0)[lb]{\smash{{\SetFigFont{12}{14.4}{\rmdefault}{\mddefault}{\updefault}{\color[rgb]{0,0,0}$\clubs$}%
}}}}
\put(1426,-6136){\makebox(0,0)[lb]{\smash{{\SetFigFont{12}{14.4}{\familydefault}{\mddefault}{\updefault}{\color[rgb]{0,0,0}$\hearts$}%
}}}}
\put(3226,-6136){\makebox(0,0)[lb]{\smash{{\SetFigFont{12}{14.4}{\rmdefault}{\mddefault}{\updefault}{\color[rgb]{0,0,0}$\clubs$}%
}}}}
\put(3226,-7336){\makebox(0,0)[lb]{\smash{{\SetFigFont{12}{14.4}{\rmdefault}{\mddefault}{\updefault}{\color[rgb]{0,0,0}$\clubs$}%
}}}}
\put(1426,-5536){\makebox(0,0)[lb]{\smash{{\SetFigFont{12}{14.4}{\rmdefault}{\mddefault}{\updefault}{\color[rgb]{0,0,0}$\spades$}%
}}}}
\put(1426,-4936){\makebox(0,0)[lb]{\smash{{\SetFigFont{12}{14.4}{\familydefault}{\mddefault}{\updefault}{\color[rgb]{0,0,0}$\hearts$}%
}}}}
\put(2026,-4936){\makebox(0,0)[lb]{\smash{{\SetFigFont{12}{14.4}{\rmdefault}{\mddefault}{\updefault}{\color[rgb]{0,0,0}$\clubs$}%
}}}}
\put(2626,-4936){\makebox(0,0)[lb]{\smash{{\SetFigFont{12}{14.4}{\familydefault}{\mddefault}{\updefault}{\color[rgb]{0,0,0}$\hearts$}%
}}}}
\put(3226,-4936){\makebox(0,0)[lb]{\smash{{\SetFigFont{12}{14.4}{\rmdefault}{\mddefault}{\updefault}{\color[rgb]{0,0,0}$\clubs$}%
}}}}
\put(3226,-5536){\makebox(0,0)[lb]{\smash{{\SetFigFont{12}{14.4}{\rmdefault}{\mddefault}{\updefault}{\color[rgb]{0,0,0}$\diamonds$}%
}}}}
\put(2026,-6136){\makebox(0,0)[lb]{\smash{{\SetFigFont{12}{14.4}{\rmdefault}{\mddefault}{\updefault}{\color[rgb]{0,0,0}$\clubs$}%
}}}}
\put(2026,-6736){\makebox(0,0)[lb]{\smash{{\SetFigFont{12}{14.4}{\rmdefault}{\mddefault}{\updefault}{\color[rgb]{0,0,0}$\diamonds$}%
}}}}
\put(2026,-5536){\makebox(0,0)[lb]{\smash{{\SetFigFont{12}{14.4}{\rmdefault}{\mddefault}{\updefault}{\color[rgb]{0,0,0}$\diamonds$}%
}}}}
\put(2626,-5536){\makebox(0,0)[lb]{\smash{{\SetFigFont{12}{14.4}{\rmdefault}{\mddefault}{\updefault}{\color[rgb]{0,0,0}$\spades$}%
}}}}
\put(2626,-6136){\makebox(0,0)[lb]{\smash{{\SetFigFont{12}{14.4}{\familydefault}{\mddefault}{\updefault}{\color[rgb]{0,0,0}$\hearts$}%
}}}}
\put(2626,-6736){\makebox(0,0)[lb]{\smash{{\SetFigFont{12}{14.4}{\rmdefault}{\mddefault}{\updefault}{\color[rgb]{0,0,0}$\spades$}%
}}}}
\put(2626,-7336){\makebox(0,0)[lb]{\smash{{\SetFigFont{12}{14.4}{\familydefault}{\mddefault}{\updefault}{\color[rgb]{0,0,0}$\hearts$}%
}}}}
\put(3226,-6736){\makebox(0,0)[lb]{\smash{{\SetFigFont{12}{14.4}{\rmdefault}{\mddefault}{\updefault}{\color[rgb]{0,0,0}$\diamonds$}%
}}}}
\put(3826,-6736){\makebox(0,0)[lb]{\smash{{\SetFigFont{12}{14.4}{\rmdefault}{\mddefault}{\updefault}{\color[rgb]{0,0,0}$\spades$}%
}}}}
\put(3826,-6136){\makebox(0,0)[lb]{\smash{{\SetFigFont{12}{14.4}{\familydefault}{\mddefault}{\updefault}{\color[rgb]{0,0,0}$\hearts$}%
}}}}
\put(3826,-5536){\makebox(0,0)[lb]{\smash{{\SetFigFont{12}{14.4}{\rmdefault}{\mddefault}{\updefault}{\color[rgb]{0,0,0}$\spades$}%
}}}}
\put(3826,-4936){\makebox(0,0)[lb]{\smash{{\SetFigFont{12}{14.4}{\familydefault}{\mddefault}{\updefault}{\color[rgb]{0,0,0}$\hearts$}%
}}}}
\end{picture}%
}
\end{center}
\caption{The checkerboard configuration.}\label{checkerboard}
\end{figure}

To represent this structure in an epistemic structure, $M = (S, \sim, V)$, we will suppose that each square in the grid corresponds to five worlds in $S$, labelled: {\em up, down, left, right and center} (corresponding respectively to the atoms $\up,\ \down,\ \lef,\ \righ$ and the center square will be labelled by exactly one of the propositions $\hearts,\ \clubs,\ \diamonds$ or $\spades$). We also suppose that there are just two agents: the  {\em square} agent, $\sq$; and the {\em edge} agent, $\edge$. For any square, agent $\sq$ cannot distinguish between any of the five worlds that correspond to that square. For two adjacent squares (say $s$ which is to the left of $t$) the agent $\edge$ is unable to distinguish between the right world corresponding to $s$ and the left world corresponding to $t$. The edge agent is likewise unable to distinguish between the up and down worlds corresponding the vertically adjacent tiles. Finally we suppose that the edge agent is unable to distinguish between any of the worlds that are labelled as center.
This representation is given in Figure~\ref{cbRep}.

\begin{figure}
\begin{center}
\scalebox{0.64}{
\begin{picture}(0,0)%
\includegraphics{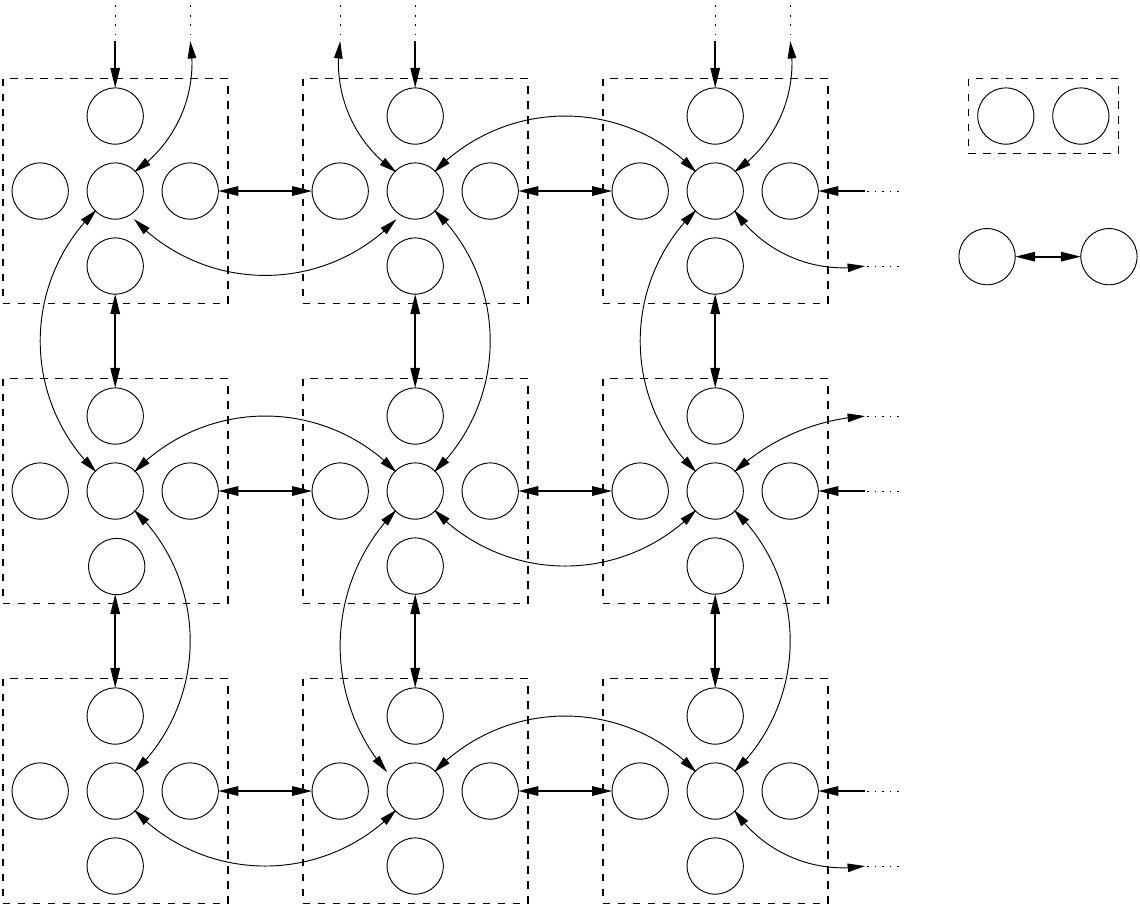}%
\end{picture}%
\setlength{\unitlength}{3947sp}%
\begingroup\makeatletter\ifx\SetFigFont\undefined%
\gdef\SetFigFont#1#2#3#4#5{%
  \reset@font\fontsize{#1}{#2pt}%
  \fontfamily{#3}\fontseries{#4}\fontshape{#5}%
  \selectfont}%
\fi\endgroup%
\begin{picture}(9105,7234)(54,-8183)
\put(8146,-3436){\makebox(0,0)[lb]{\smash{{\SetFigFont{12}{14.4}{\rmdefault}{\mddefault}{\updefault}{\color[rgb]{0,0,0}$s \sim_\edge t$}%
}}}}
\put(7991,-1936){\makebox(0,0)[lb]{\smash{{\SetFigFont{12}{14.4}{\rmdefault}{\mddefault}{\updefault}{\color[rgb]{0,0,0}$s$}%
}}}}
\put(8591,-1936){\makebox(0,0)[lb]{\smash{{\SetFigFont{12}{14.4}{\rmdefault}{\mddefault}{\updefault}{\color[rgb]{0,0,0}$t$}%
}}}}
\put(8816,-3061){\makebox(0,0)[lb]{\smash{{\SetFigFont{12}{14.4}{\rmdefault}{\mddefault}{\updefault}{\color[rgb]{0,0,0}$t$}%
}}}}
\put(7841,-3061){\makebox(0,0)[lb]{\smash{{\SetFigFont{12}{14.4}{\rmdefault}{\mddefault}{\updefault}{\color[rgb]{0,0,0}$s$}%
}}}}
\put(901,-6736){\makebox(0,0)[lb]{\smash{{\SetFigFont{12}{14.4}{\rmdefault}{\mddefault}{\updefault}{\color[rgb]{0,0,0}$\up$}%
}}}}
\put(901,-4336){\makebox(0,0)[lb]{\smash{{\SetFigFont{12}{14.4}{\rmdefault}{\mddefault}{\updefault}{\color[rgb]{0,0,0}$\up$}%
}}}}
\put(826,-3136){\makebox(0,0)[lb]{\smash{{\SetFigFont{12}{14.4}{\rmdefault}{\mddefault}{\updefault}{\color[rgb]{0,0,0}$\down$}%
}}}}
\put(1426,-2536){\makebox(0,0)[lb]{\smash{{\SetFigFont{12}{14.4}{\rmdefault}{\mddefault}{\updefault}{\color[rgb]{0,0,0}$\righ$}%
}}}}
\put(901,-1936){\makebox(0,0)[lb]{\smash{{\SetFigFont{12}{14.4}{\rmdefault}{\mddefault}{\updefault}{\color[rgb]{0,0,0}$\up$}%
}}}}
\put(3301,-1936){\makebox(0,0)[lb]{\smash{{\SetFigFont{12}{14.4}{\rmdefault}{\mddefault}{\updefault}{\color[rgb]{0,0,0}$\up$}%
}}}}
\put(5701,-4336){\makebox(0,0)[lb]{\smash{{\SetFigFont{12}{14.4}{\rmdefault}{\mddefault}{\updefault}{\color[rgb]{0,0,0}$\up$}%
}}}}
\put(3301,-4336){\makebox(0,0)[lb]{\smash{{\SetFigFont{12}{14.4}{\rmdefault}{\mddefault}{\updefault}{\color[rgb]{0,0,0}$\up$}%
}}}}
\put(5701,-6736){\makebox(0,0)[lb]{\smash{{\SetFigFont{12}{14.4}{\rmdefault}{\mddefault}{\updefault}{\color[rgb]{0,0,0}$\up$}%
}}}}
\put(901,-7336){\makebox(0,0)[lb]{\smash{{\SetFigFont{12}{14.4}{\rmdefault}{\mddefault}{\updefault}{\color[rgb]{0,0,0}$\hearts$}%
}}}}
\put(3301,-7336){\makebox(0,0)[lb]{\smash{{\SetFigFont{12}{14.4}{\rmdefault}{\mddefault}{\updefault}{\color[rgb]{0,0,0}$\clubs$}%
}}}}
\put(3301,-4936){\makebox(0,0)[lb]{\smash{{\SetFigFont{12}{14.4}{\rmdefault}{\mddefault}{\updefault}{\color[rgb]{0,0,0}$\diamonds$}%
}}}}
\put(901,-4936){\makebox(0,0)[lb]{\smash{{\SetFigFont{12}{14.4}{\rmdefault}{\mddefault}{\updefault}{\color[rgb]{0,0,0}$\spades$}%
}}}}
\put(5701,-7336){\makebox(0,0)[lb]{\smash{{\SetFigFont{12}{14.4}{\rmdefault}{\mddefault}{\updefault}{\color[rgb]{0,0,0}$\hearts$}%
}}}}
\put(5701,-4936){\makebox(0,0)[lb]{\smash{{\SetFigFont{12}{14.4}{\rmdefault}{\mddefault}{\updefault}{\color[rgb]{0,0,0}$\clubs$}%
}}}}
\put(901,-2536){\makebox(0,0)[lb]{\smash{{\SetFigFont{12}{14.4}{\rmdefault}{\mddefault}{\updefault}{\color[rgb]{0,0,0}$\hearts$}%
}}}}
\put(3301,-2536){\makebox(0,0)[lb]{\smash{{\SetFigFont{12}{14.4}{\rmdefault}{\mddefault}{\updefault}{\color[rgb]{0,0,0}$\clubs$}%
}}}}
\put(5701,-2536){\makebox(0,0)[lb]{\smash{{\SetFigFont{12}{14.4}{\rmdefault}{\mddefault}{\updefault}{\color[rgb]{0,0,0}$\hearts$}%
}}}}
\put(301,-2536){\makebox(0,0)[lb]{\smash{{\SetFigFont{12}{14.4}{\rmdefault}{\mddefault}{\updefault}{\color[rgb]{0,0,0}$\lef$}%
}}}}
\put(301,-4936){\makebox(0,0)[lb]{\smash{{\SetFigFont{12}{14.4}{\rmdefault}{\mddefault}{\updefault}{\color[rgb]{0,0,0}$\lef$}%
}}}}
\put(901,-5536){\makebox(0,0)[lb]{\smash{{\SetFigFont{12}{14.4}{\rmdefault}{\mddefault}{\updefault}{\color[rgb]{0,0,0}$\down$}%
}}}}
\put(1501,-4936){\makebox(0,0)[lb]{\smash{{\SetFigFont{12}{14.4}{\rmdefault}{\mddefault}{\updefault}{\color[rgb]{0,0,0}$\righ$}%
}}}}
\put(301,-7336){\makebox(0,0)[lb]{\smash{{\SetFigFont{12}{14.4}{\rmdefault}{\mddefault}{\updefault}{\color[rgb]{0,0,0}$\lef$}%
}}}}
\put(901,-7936){\makebox(0,0)[lb]{\smash{{\SetFigFont{12}{14.4}{\rmdefault}{\mddefault}{\updefault}{\color[rgb]{0,0,0}$\down$}%
}}}}
\put(1426,-7336){\makebox(0,0)[lb]{\smash{{\SetFigFont{12}{14.4}{\rmdefault}{\mddefault}{\updefault}{\color[rgb]{0,0,0}$\righ$}%
}}}}
\put(3301,-7936){\makebox(0,0)[lb]{\smash{{\SetFigFont{12}{14.4}{\rmdefault}{\mddefault}{\updefault}{\color[rgb]{0,0,0}$\down$}%
}}}}
\put(2701,-7336){\makebox(0,0)[lb]{\smash{{\SetFigFont{12}{14.4}{\rmdefault}{\mddefault}{\updefault}{\color[rgb]{0,0,0}$\lef$}%
}}}}
\put(3301,-6736){\makebox(0,0)[lb]{\smash{{\SetFigFont{12}{14.4}{\rmdefault}{\mddefault}{\updefault}{\color[rgb]{0,0,0}$\up$}%
}}}}
\put(3901,-7336){\makebox(0,0)[lb]{\smash{{\SetFigFont{12}{14.4}{\rmdefault}{\mddefault}{\updefault}{\color[rgb]{0,0,0}$\righ$}%
}}}}
\put(3301,-5536){\makebox(0,0)[lb]{\smash{{\SetFigFont{12}{14.4}{\rmdefault}{\mddefault}{\updefault}{\color[rgb]{0,0,0}$\down$}%
}}}}
\put(2701,-4936){\makebox(0,0)[lb]{\smash{{\SetFigFont{12}{14.4}{\rmdefault}{\mddefault}{\updefault}{\color[rgb]{0,0,0}$\lef$}%
}}}}
\put(3901,-4936){\makebox(0,0)[lb]{\smash{{\SetFigFont{12}{14.4}{\rmdefault}{\mddefault}{\updefault}{\color[rgb]{0,0,0}$\righ$}%
}}}}
\put(2701,-2536){\makebox(0,0)[lb]{\smash{{\SetFigFont{12}{14.4}{\rmdefault}{\mddefault}{\updefault}{\color[rgb]{0,0,0}$\lef$}%
}}}}
\put(3301,-3136){\makebox(0,0)[lb]{\smash{{\SetFigFont{12}{14.4}{\rmdefault}{\mddefault}{\updefault}{\color[rgb]{0,0,0}$\down$}%
}}}}
\put(3826,-2536){\makebox(0,0)[lb]{\smash{{\SetFigFont{12}{14.4}{\rmdefault}{\mddefault}{\updefault}{\color[rgb]{0,0,0}$\righ$}%
}}}}
\put(5101,-2536){\makebox(0,0)[lb]{\smash{{\SetFigFont{12}{14.4}{\rmdefault}{\mddefault}{\updefault}{\color[rgb]{0,0,0}$\lef$}%
}}}}
\put(5701,-1936){\makebox(0,0)[lb]{\smash{{\SetFigFont{12}{14.4}{\rmdefault}{\mddefault}{\updefault}{\color[rgb]{0,0,0}$\up$}%
}}}}
\put(5701,-3136){\makebox(0,0)[lb]{\smash{{\SetFigFont{12}{14.4}{\rmdefault}{\mddefault}{\updefault}{\color[rgb]{0,0,0}$\down$}%
}}}}
\put(6301,-2536){\makebox(0,0)[lb]{\smash{{\SetFigFont{12}{14.4}{\rmdefault}{\mddefault}{\updefault}{\color[rgb]{0,0,0}$\righ$}%
}}}}
\put(5101,-4936){\makebox(0,0)[lb]{\smash{{\SetFigFont{12}{14.4}{\rmdefault}{\mddefault}{\updefault}{\color[rgb]{0,0,0}$\lef$}%
}}}}
\put(5701,-5536){\makebox(0,0)[lb]{\smash{{\SetFigFont{12}{14.4}{\rmdefault}{\mddefault}{\updefault}{\color[rgb]{0,0,0}$\down$}%
}}}}
\put(6301,-4936){\makebox(0,0)[lb]{\smash{{\SetFigFont{12}{14.4}{\rmdefault}{\mddefault}{\updefault}{\color[rgb]{0,0,0}$\righ$}%
}}}}
\put(5101,-7336){\makebox(0,0)[lb]{\smash{{\SetFigFont{12}{14.4}{\rmdefault}{\mddefault}{\updefault}{\color[rgb]{0,0,0}$\lef$}%
}}}}
\put(5701,-7936){\makebox(0,0)[lb]{\smash{{\SetFigFont{12}{14.4}{\rmdefault}{\mddefault}{\updefault}{\color[rgb]{0,0,0}$\down$}%
}}}}
\put(6301,-7336){\makebox(0,0)[lb]{\smash{{\SetFigFont{12}{14.4}{\rmdefault}{\mddefault}{\updefault}{\color[rgb]{0,0,0}$\righ$}%
}}}}
\put(8156,-2386){\makebox(0,0)[lb]{\smash{{\SetFigFont{12}{14.4}{\rmdefault}{\mddefault}{\updefault}{\color[rgb]{0,0,0}$s \sim_\sq t$}%
}}}}
\end{picture}%
}
\caption{A representation of the checkerboard of Figure~\ref{checkerboard} as an epistemic model. The agent $\sq$ is unable to distinguish between any worlds inside a single square, whilst the agent $\edge$ is unable to distinguish between any two worlds connected by the reflexive, transitive closure of the  arrow relation.}\label{cbRep}
\end{center}
\end{figure}

The model is infinite, so we may suppose that each square corresponds to a point in $\N\times\N$. To create a tiling, we suppose that we also have a set of propositions $C$ to label the coloured sides of tiles, and that this set is disjoint from all the other propositions introduced so far.

Defining a formula that requires a model to have such a structure (steps 1 and 2) is dependent on the separate quantifiers of APAL, GAL and CAL. We will first address the local properties of the grid and the tiling (step 3).
Suppose we have a model with a grid-like structure. Below we provide a formula such that the satisfiability of that formula at the center of a square is equivalent to the existence of a $\Gamma$ tiling of the plane. Let $\Gamma = \{\gamma_1,...\gamma_n\}$ be a set of tiles. 
We require the following properties to be true:
\begin{enumerate}
\item {\em Every world is labelled by exactly one colour:}
\begin{equation}\label{oneCol}
oneCol = \know_\sq\know_\edge\know_\sq\bigvee_{c\in C} \left(c\et \bigwedge_{d\in C\backslash\{c\}}\neg d\right)
\end{equation}
Since every square contains a center world connected to every other center world by the accessibility relation for $\edge$, $\know_\sq\know_\edge\know_\sq\phi$ implies that $\phi$ must be true in all worlds. In this case we require that every world is labelled by exactly one colour, and this is specified via propositional reasoning. 
\item {\em Every square corresponds to a tile in the set }$\Gamma$.
\begin{equation}\label{tile-gamma}
tile_\Gamma = \know_\sq\know_\edge\know_\sq\left(\bigvee_{\gamma\in\Gamma}\bigwedge\begin{array}{c} \up\imp\gamma^\up\\\righ\imp\gamma^\righ\\\down\imp\gamma^\down\\\lef\imp\gamma^\lef\end{array}\right)
\end{equation}
As above $\know_\edge\phi$ requires $\phi$ to be true at every center world (at least). 
The nested $\know_\sq$ operator requires that every square corresponds to some tile in $\Gamma$, by requiring that for some tile $\gamma$, the $\up$-world has the same colour as the top of $\gamma$, the $\righ$-world has the same colour as the right side of $\gamma$, the $\down$-world has the same colour as the bottom of $\gamma$ and the $\lef$-world has the same colour as the left side of $\gamma$. 
\item {\em The agent $\edge$ always knows the colour.}
\begin{equation}\label{match}
match = \know_\sq\know_\edge\know_\sq\bigvee_{c\in C}(\know_\edge c)
\end{equation}
As the $\edge$-agent cannot distinguish the worlds corresponding to adjoining sides of adjacent squares, requiring that agent $\edge$ always knows the colour means that any worlds corresponding to adjoining sides of adjacent squares must have the same colour, as required in the tiling problem. Note that a consequence of this formula is that the central nodes will all be labelled by the same colour. However, as we are only concerned with the colour labelling at the edge of the tiles, this does not effect the encoding. 
\end{enumerate}
We combine these properties in the formula:
\begin{equation}\label{tiling}
SAT_\Gamma = oneCol\et tile_\Gamma\et match
\end{equation}

We note that $SAT_\Gamma$ does not contain any arbitrary group announcement operators, and is in fact a formula of $\lang_{el}$.
It is also clear that the formula $Tile_\Gamma$ is satisfiable by some models that do not have the grid-like structure. 
Our next task is to ensure that the worlds in each square are labelled correctly, and adjacent squares have corresponding labels.
To this end, let $\Lambda = \{\up,\down,\lef,\righ,\hearts,\clubs,\diamonds,\spades\}$ be the set of labels. We require the following properties:
\begin{enumerate}
\item {\em Every world satisfies exactly one label $\lambda\in \Lambda$.}
\begin{equation}\label{oneLabel}
oneLabel =\know_\sq\know_\edge\know_\sq\left(\bigvee_{\lambda\in \Lambda}\left(\lambda\et\bigwedge_{\mu\in\Lambda\backslash\{\lambda\}}\neg\mu\right)\right)
\end{equation}
The reasoning here is analogous to the formula~(\ref{oneCol}) above.
\item {\em Every equivalence class of agent $\sq$ contains worlds labelled by all of $\up,\ \down,\ \lef$ and $\righ$, and exactly one of \hearts, \clubs, \diamonds and \spades.}

We define the abbreviation:
\begin{equation}
sq(X) = \know_\sq(\up\lor\down\lor\lef\lor\righ\lor X)\et\susp_\sq\up\et\susp_\sq\down\et\susp_\sq\lef\et\susp_\sq\righ\et\susp_\sq X
\end{equation}
Then we require that every ``square'' satisfies one of this formula for a given card suit:
\begin{equation}\label{oneSuit}
oneSuit = \know_\sq\know_\edge\left[sq(\hearts)\lor sq(\clubs)\lor sq(\diamonds)\lor sq(\spades)\right]
\end{equation}
Combined with the unique labelling specified by (\ref{oneLabel}), this ensures the agent $\sq$ always considers an up-world, a down-world, a left world, a right world and a center world possible. Furthermore any center worlds that the agent considers possible must be labelled by a single suit (\hearts, \clubs, \diamonds\ or \spades).
\item {\em Agent $\edge$ always knows that the world is either: \lef\ or \righ; \up\ or \down; or \hearts, \clubs, \diamonds\ or \spades. However, the agent is not able to further distinguish these worlds.}

This may be specified as follows:
\begin{equation}\label{edge}
edge = \know_\sq\know_\edge\know_\sq\bigvee\!\left[\begin{array}{l}\know_\edge(\lef\lor\righ)\et\susp_\edge\righ\imp\know_\edge(\lef\imp\know_\sq(\up\imp\know_\edge\know_\sq\susp_\edge\righ))\\
						\know_\edge(\up\lor\down)\et\susp_\edge\up\imp\know_\edge(\down\imp\know_\sq(\righ\imp\know_\edge\know_\sq\susp_\edge\up))\\
						\know_\edge(\hearts\lor\clubs\lor\diamonds\lor\spades)\et\susp_\edge\hearts\et\susp_\edge\clubs\et\susp_\edge\diamonds\et\susp_\edge\spades
						\end{array}\right]
\end{equation}
For this equation, we assume that $\know_\sq\know_\edge\know_\sq\phi$ implies that $\phi$ is true at all reachable states. At all such states we can see there are three possibilities for what the agent $\edge$ knows: either $\lef$ or $\righ$ is true, but $\edge$ does not know which (unless it is in the leftmost column of the checkerboard); either $\up$ or \down\ is true, but \edge\ does not know which (unless it is in the bottom row of the checkerboard); or one of \hearts, \clubs, \diamonds\ or \spades\ is true, but \edge\ does not know which. If $\susp_\edge\righ$ is true, then we are in a square that is not in the left column of the board, so the formula $\know_\edge(\lef\imp\know_\sq(\up\imp\know_\edge\know_\sq\susp_\edge\righ))$ ensures that the square above also has a left neighbour. This is sufficient to restrict all squares without a left neighbour to the left column of the board, and a similar case can be made for the bottom row. 

\item {\em A \hearts-square always has a \clubs-square to the right and a \spades-square above; a \clubs-square always has a \hearts-square to the right and a \diamonds-square above; a \diamonds-square always has a \spades-square to the right and a \clubs-square above; and a \spades-square always has a \diamonds-square to the right and a \hearts-square above.} 

We can specify these constraints as:
\begin{eqnarray}
adj_\hearts&=&\know_\sq\know_\edge\left(\hearts\imp\know_\sq\bigwedge\left[\begin{array}{l}\righ\imp\know_\edge(\lef\imp \susp_\sq\clubs)\\\up\imp\know_\edge(\down\imp\susp_\sq\spades)\end{array}\right]\right)\\
adj_\clubs&=&\know_\sq\know_\edge\left(\clubs\imp\know_\sq\bigwedge\left[\begin{array}{l}\righ\imp\know_\edge(\lef\imp \susp_\sq\hearts)\\\up\imp\know_\edge(\down\imp\susp_\sq\diamonds)\end{array}\right]\right)\\
adj_\diamonds&=&\know_\sq\know_\edge\left(\diamonds\imp\know_\sq\bigwedge\left[\begin{array}{l}\righ\imp\know_\edge(\lef\imp \susp_\sq\spades)\\\up\imp\know_\edge(\down\imp\susp_\sq\clubs)\end{array}\right]\right)\\
adj_\spades&=&\know_\sq\know_\edge\left(\spades\imp\know_\sq\bigwedge\left[\begin{array}{l}\righ\imp\know_\edge(\lef\imp \susp_\sq\diamonds)\\\up\imp\know_\edge(\down\imp\susp_\sq\hearts)\end{array}\right]\right)\\
adj &=& adj_\hearts\et adj_\clubs\et adj_\diamonds\et adj_\spades\label{adj}
\end{eqnarray}
Given the properties of the checkerboard model, the formula $\know_\sq\know_\edge\phi$ will ensure that $\phi$ is true at every center world. 
If the proposition \hearts\ is true there, then at every \righ-world in the square, every \lef-world that agent \edge\ cannot distinguish from the \righ-world must belong to the square that is immediately to the right. Therefore the center of that square must be labelled by \clubs. Given the formula (\ref{oneSuit}), it is sufficient for $\susp_\sq\clubs$ to be true at the \lef-world.
Similar reasoning can be given for the other suits and directions.
\end{enumerate}

We define the formula:
\begin{equation}\label{local}
local = oneLabel\et oneSuit \et edge \et adj
\end{equation}
to express the local properties. This leaves us to define the global properties The two global properties we require are:
\begin{itemize}
\item {\em Given any square, the square below the square to the left of the square above the square to its right is identical.}

This is an essential feature of the grid-like construction we require, and enforces that if we go right, up, left, and down, we end up where we began.

\item {\em Every center world of every reachable square is indistinguishable to agent $\edge$.}

By reachable square, we mean any $\sq$-equivalence class that can be reached using the union of the $\edge$ and $\sq$ accessibility relations.
\end{itemize}

While these are the properties we would ideally like, it turns out that they are stronger than what we are able to express. Using Lemma~\ref{nbisimwitness}, the actual properties we will show are:
\begin{itemize}
\item {\em For all $n$, for every center world, $w$, of any square, there is some square below some square to the left of some square above some square to the right of that square, whose center world is $n$-bisimilar to $w$.}

Since this property holds for all $n$ we are able to make an arbitrarily good approximation of the checkerboard configuration.
\item {\em For all $n$, for every center world, $w$, of every \{\edge,\sq\}-reachable square there is some $\edge$-accessible world that is $n$-bisimilar to $w$.}

Therefore, although we may not be able to strictly enforce the global property we desire, we are able to make an arbitrarily good approximation of it.

\end{itemize}

The first property corresponds to the first step of our proof: enforcing a grid like configuration. The second property encodes a common knowledge operator required in step 2. 
To enforce these properties we will require the arbitrary, group or coalition announcements, and the formulation will vary slightly depending on the language. We will therefore address these properties in the context of each variation in the following three subsections.

To give a generalized form of the correctness proof we use a PDL-like notation \cite{fischeretal:1979} to describe composite relations:
\begin{definition}\label{PDL}
A composite program is given by the abstract syntax:
$$\pi ::= a\ |\ A?\ |\ \pi;\pi$$
where $A$ ranges over $\{\hearts,\clubs,\diamonds, \spades, \up, \down, \lef, \righ\}$ and $a$ ranges over $\{\edge,\sq\}$.
Each composite program $\pi$ corresponds to a binary relation $R(\pi)$ on $S$ where:
$$\begin{array}{rcl}
R(A?) &=& \{(s,s)\ |\ S\in V(A)\}\\
R(a) &=& \sim_a\\
R(\pi;\pi') &=& \{(s,t)\ |\ \exists u\ \text{s.t.}\ (s,u)\in R(\pi), (u,t)\in R(\pi')\}
\end{array}$$
\end{definition}
This allows us to discuss a chain of reachable states satisfying a sequence of propositional atoms. For example, if 
we have a world $t$ where $t\in V(\up)$ where $t\sim_\sq u$ and $u\in V(\lef)$ and $u\sim_\edge s$ we can construct a program $\pi = \edge;\lef?;\sq;\up?$, and write $(s,t)\in R(\pi)$. We may also use $s R(\pi)$ to denote the set of all such $t$. However, we refrain from writing $s\sim_\pi t$ to avoid confusion as $\sim_\pi$ would not be an equivalence relation.

As a simplification, we will work with only a finite number of distinct propositions in our model: the labels in $\Lambda$, and the colours in $C$. We let $\Pi = C\cup\Lambda$ and when we refer to bisimilarity, we will mean bisimilarity modulo the atoms in $\Pi$.

\subsection{Arbitrary public announcements}\label{subsec.arbitrary}
The properties are formalized as follows:
\begin{itemize}
\item {\em For all $n$, for every center world, $w$, of any square, there is some square below some square to the left of some square above some square to the right of that square, whose center world is $n$-bisimilar to $w$.}

\begin{eqnarray*}
c_{apa}(X)&=& X\imp\allpub(\know_\sq(\righ\imp(\know_\edge(\lef\imp\know_\sq(\up\imp\know_\edge(\down\imp\\
&&\qquad\know_\sq(\lef\imp\know_\edge(\righ\imp\know_\sq(\down\imp\know_\edge(\up\imp\susp_\sq X))))))))))\\
cyc_{apa} &=& c_{apa}(\hearts)\et c_{apa}(\clubs)\et c_{apa}(\diamonds)\et c_{apa}(\spades)
\end{eqnarray*}

\item {\em For all $n$, for every center world, $w$, of every reachable square there is some $\edge$-accessible world that is $n$-bisimilar to $w$.}

$$\begin{array}{l}
t_{apa}(X, Y, Z) = X\imp\bigwedge\left[\begin{array}{l}\allpub(\know_\edge\neg Y\imp\know_\sq(\righ\imp\know_\edge(\lef\imp\know_\sq\neg Y)))\\
					\allpub(\know_\edge\neg Z\imp\know_\sq(\up\imp\know_\edge(\down\imp\know_\sq \neg Z)))\\
					\allpub(\know_\edge\neg Y\imp\know_\sq(\lef\imp\know_\edge(\righ\imp\know_\sq \neg Y)))\\
					\allpub(\know_\edge\neg Z\imp\know_\sq(\down\imp\know_\edge(\up\imp\know_\sq \neg Z)))
					\end{array}\right]\\
ck_{apa} = t_{apa}(\hearts, \clubs, \spades)\et t_{apa}(\clubs,\hearts,\diamonds)\et t_{apa}(\diamonds,\spades, \clubs)\et t_{apa}(\spades,\diamonds,\hearts)
\end{array}
$$
\end{itemize}
The formula $cyc_{apa}$ enforces the {\em cycle} in a grid: right, up, left and down. The formula $ck_{apa}$ captures the required approximation of {\em common knowledge}.

We define the formula:
\begin{equation}\label{CBA}
CB_{apa} = \know_\edge\know_\sq (local\et cyc_{apa} \et ck_{apa})
\end{equation}

The following lemma provides the main technical result that allows us to define the checkerboard-like model, and hence express the tiling problem.
For brevity, let $card = \hearts\lor\clubs\lor\diamonds\lor\spades$.

\begin{lemma}\label{CB-AA}
Suppose that $M = (S,\sim, V)$, $s\in S$ and $M_s\models CB_{apa}\et card$. 
Let $U = s R(\sq;\righ?;\edge;\lef?;\sq) \cup s R(\sq;\up?;\edge;\down?;\sq) \cup s R(\sq;\lef?;\edge;\righ?;\sq)\cup s R(\sq;\down?;\edge;\up?;\sq)$.
Then:
\begin{enumerate}
\item For all $n\in\N$, for all $t\in U$, if $M_t\models card$, then there is some $v\sim_\edge s$ such that $v\in\nbisim{t}{n}$.
\item For all $n\in\N$, for all $t\in s R(\sq;\righ?;\edge;\lef?;\sq;\up?;\edge;\down?;\sq;\lef?;\edge;\righ?;\sq;\down?;\edge;\up?)$, there is some $v\sim_\sq t$ such that $v\in\nbisim{s}{n}$. 
\end{enumerate}
\end{lemma}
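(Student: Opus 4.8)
The plan is to prove both parts by contradiction, the common engine in each case being Lemma~\ref{nbisimwitness}: it lets us ``name'' an $n$-bisimilarity class by a single $\lang_{el}$-formula $\phi$, which we then feed to one of the arbitrary-announcement operators buried inside $CB_{apa}$. Since $\nbisim{w}{n'}\subseteq\nbisim{w}{n}$ for $n'\geq n$, it suffices to argue for arbitrarily large $n$, so I would assume at the outset that $n$ exceeds the modal depth of every $\lang_{el}$-conjunct of $local$. I will write ``$w$ survives the announcement of $\psi$'' for $M_w\models\psi$, and use freely that labels and suits (being atoms) are preserved by $\sim'$ and by $n$-bisimilarity, and that $\know_\sq\know_\edge\know_\sq\chi$ at a world forces $\chi$ at all worlds reached from it by $\sq;\edge;\sq$.

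For Part~1, by the cyclic symmetry of the four suits I would treat the representative case $M_s\models\hearts$, so that the antecedent of $t_{apa}(\hearts,\clubs,\spades)$ holds at $s$ and its four conjuncts become available --- one for each of the four sets whose union is $U$. Take $t\in s R(\sq;\righ?;\edge;\lef?;\sq)$ with $M_t\models card$, fix a witnessing path $s\sim_\sq r$, $r\in V(\righ)$, $r\sim_\edge\ell$, $\ell\in V(\lef)$, $\ell\sim_\sq t$, and note that $CB_{apa}$ places $local$ at $r$. Applying $oneLabel$, $oneSuit$ and $adj_\hearts$ at $r$ and $s$, I would deduce that $M_t$ carries a unique label, that $M_\ell\models\susp_\sq\clubs$ and $M_\ell$ satisfies some $sq(X')$ --- together forcing $M_t\models\clubs$ --- and that $M_r,M_\ell$ carry direction-labels, hence satisfy $\neg\clubs$. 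Now suppose no $v\sim_\edge s$ lies in $\nbisim{t}{n}$; let $\phi$ characterize $\nbisim{t}{n}$ by Lemma~\ref{nbisimwitness} and announce $\psi:=\phi\vel\neg\clubs$. It survives at $s$ (as $M_s\models\neg\clubs$), at $r$ and $\ell$ (direction-worlds), and at $t$ (as $t\in\nbisim{t}{n}$), so the whole path survives in $M^\psi$. The first conjunct of $t_{apa}(\hearts,\clubs,\spades)$ gives $M^\psi_s\models\know_\edge\neg\clubs\imp\know_\sq(\righ\imp\know_\edge(\lef\imp\know_\sq\neg\clubs))$; since any surviving $v\sim_\edge s$ satisfying $\clubs$ would satisfy $\phi$ and hence lie in $\nbisim{t}{n}$, we get $M^\psi_s\models\know_\edge\neg\clubs$, and pushing this along the surviving path yields $M^\psi_t\models\neg\clubs$, contradicting $M_t\models\clubs$. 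The remaining three components of $U$ go the same way, using conjuncts~2--4 of $t_{apa}(\hearts,\clubs,\spades)$ and $adj$ to pin down the suit ($\spades$ for the vertical neighbours, $\clubs$ for the left one).

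Part~2 follows the same template with $c_{apa}(X)$, where $X$ is the suit of $s$, so that the antecedent of $c_{apa}(X)$ holds. I would fix a path $s=w_0,w_1,\dots,w_8=t$ realizing the program $\sq;\righ?;\edge;\lef?;\sq;\up?;\edge;\down?;\sq;\lef?;\edge;\righ?;\sq;\down?;\edge;\up?$, with each $w_i$ ($i\geq 1$) carrying the direction tested at its step. Assuming no $v\sim_\sq t$ lies in $\nbisim{s}{n}$, let $\phi$ characterize $\nbisim{s}{n}$ and announce $\psi:=\phi\vel\neg X$. Since $s\in\nbisim{s}{n}$ we have $M_s\models\phi$, so $\psi$ survives at $s$; and since each $w_i$ carries a direction-label, $M_{w_i}\models\neg X$, so $\psi$ survives the whole path. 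Instantiating $c_{apa}(X)$ along this surviving path gives $M^\psi_t\models\susp_\sq X$, hence a $v$ with $t\sim_\sq v$ (in $M^\psi$, so also in $M$) and $M_v\models X$; as $v$ survives, $M_v\models X\et(\phi\vel\neg X)$, so $M_v\models\phi$, i.e.\ $v\in\nbisim{s}{n}$ --- a contradiction.

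The hard part, in both cases, is the step I asserted without justification: that the announced $\psi$ is in fact true all along the path, i.e.\ that every visited world carries a unique label (and, in Part~1, that $M_t$ carries the suit dictated by its direction). The trouble is that $CB_{apa}$ only guarantees $local$ on the small $\{\edge,\sq\}$-neighbourhood of $s$ reached by its prefix $\know_\edge\know_\sq$, while $oneLabel$, being $\know_\sq\know_\edge\know_\sq(\cdot)$, only propagates ``unique label'' a bounded number of further steps --- whereas the cycle path of Part~2 walks through four different squares. I expect to handle this by decomposing the cycle into four single-square moves and re-establishing $local$ at the centre of each newly visited square: $oneSuit$, $edge$ and $adj$ carry enough suit- and direction-information across each move, and --- this is where $n$ being large is used --- each such centre is $n$-bisimilar to an $\edge$-successor of $s$ furnished by $ck_{apa}$ (equivalently, by Part~1), which satisfies the pure-modal formula $local$; Lemma~\ref{modDepPres} then transfers $local$ to the centre, and $oneLabel$ there gives unique labels on all its worlds, in particular on the $w_i$ it contains. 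Chaining these local propagations around the cycle (and, for Part~1, across the single move) is the part that needs to be written out carefully.
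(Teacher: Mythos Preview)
Your proposal is correct and follows essentially the same approach as the paper. Both parts proceed by contradiction via Lemma~\ref{nbisimwitness}, announcing the equivalent formula $X\imp\phi$ (your $\phi\vel\neg X$) to contradict, respectively, a conjunct of $t_{apa}$ and $c_{apa}(X)$; and the ``hard part'' you single out---propagating $local$ along the cycle by iterating Part~1 at the $\edge$-successors of $s$ and transferring via Lemma~\ref{modDepPres}---is exactly how the paper handles it (with the explicit cascade $14,11,8,5$ where you simply take $n$ large).
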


\begin{proof}\quad\\
\begin{enumerate}
\item[1.] 
We will address this case of $t\in s R(\sq;\righ?;\edge;\lef?;\sq)$, and note that the other cases 
  ($t\in s R(\sq;\up?;\edge;\down?;\sq)$, $t\in s R(\sq;\lef?;\edge;\righ?;\sq)$ and $t\in s R(\sq;\down?;\edge;\up?;\sq)$) 
  may be handled in a similar fashion.
Suppose, for contradiction, that there is some $n\in \N$ and some $t\in s R(\sq;\righ?\edge;\lef?;\sq)$ such that $M_t\models card$ and for all $v\sim_\edge s$, $v\notin\nbisim{t}{n}$. Then by Lemma~\ref{nbisimwitness}, there is some epistemic formula $\phi_t^n$ such that for all $w\in S$, $M_w\models\phi_t^n$ if and only if $w\in \nbisim{t}{n}$.
We recall that $card = \hearts\lor\clubs\lor\diamonds\lor\spades$. Now suppose, without loss of generality, $s\in V(\hearts)$.
From the formula $edge$ we know that all $v\sim_\edge s$ must be such that $M_v\models card$, and also that for every $w\in s R(\sq;\edge;\sq)$, $ M_w\models\lef\lor\righ\imp\neg card$.
Therefore, there is some announcement $\psi = \clubs\imp\phi_t^n$. We can observe:
\begin{enumerate}
\item $M_s\models\psi$. Since $M_s\models\hearts$ and $M_s\models oneLabel$, we have $M_s\models\neg\clubs$ so $M_s\models\psi$ since the antecedent is false.
\item For all $v\in s R(\edge;\clubs?)$ we have $M_v\not\models\psi$. As $M_v\models\clubs$ it must satisfy the consequent of $\psi$. However, our assumption is that for all $v\sim_\edge s$, $v\notin\nbisim{t}{n}$, so $M_v\not\models\phi_t^n$.
\item For all $v\in s R(\sq;\righ?)$ we have $M_v\models\psi$. Since $v\in s R(\sq;\righ?)$, we must have $M_v\models\righ$ and from the formula $edge$ we know that $M_v\not\models card$, so $M_v\models\psi$ since the antecedent of $\psi$ is false.
\item For all $v\in s R(\sq;\righ?;\edge;\lef?)$ we have $M_v\models\psi$. As above, since $M_v\models\lef\et edge$ we know that $M_v\not\models card$, so $M_v\models\psi$ since the antecedent of $\psi$ is false.
\item For some $v\in s R(\sq;\righ?;\edge;\lef?;\sq)$ we have $M_v\models\psi\et\clubs$, since $t\in s R(\sq;\righ?;\edge;\lef?;\sq)$, and clearly $M_t\models\clubs\et\phi_t^n$. 
\end{enumerate}
Since $M_s\models ck_{apa}$ we have 
$$M_s\models\hearts\imp\allpub(\know_\edge\neg\clubs\imp\know_\sq(\righ\imp\know_\edge(\lef\imp\know_\sq\neg\clubs)))$$
Substituting $\psi$ as the public announcement we have: 
$$M_s\models[\psi](\know_\edge\neg\clubs\imp\know_\sq(\righ\imp\know_\edge(\lef\imp\know_\sq\neg\clubs)))$$
However, from the reasoning above, in the model $M^\psi = (S^\psi, R^\psi, V^\psi)$ we have 
\begin{enumerate}
\item $s\in S^\psi$, so the announcement $\psi$ is not vacuous.
\item for all $v\in s R(\edge;\clubs?)$, $v\notin S^\psi$ so $M^\psi_s\models\know_\edge\neg\clubs$.
\item for all $v\in s R (\sq;\righ?)$, $v\in S^\psi$.
\item for all $v\in s R (\sq;\righ?;\edge;\lef?)$, $v\in S^\psi$.
\item there is some $t\in s R(\sq;\righ?;\edge;\lef?;\sq;\clubs?)$ where $t\in S^\psi$.
\end{enumerate}
Therefore $M^\psi\models\know_\edge\neg\clubs\et \susp_\sq(\righ\et\susp_\edge(\lef\et\susp_\sq \clubs))$ giving us the required contradiction. 
Hence for all $n\in \N$, for all $t\in s R(\sq;\righ?;\edge;\lef?;\sq)$ where $M_t\models\clubs$, there is some $v\sim_\edge s$ such that $v\in\nbisim{t}{n}$. We may easily generalize the proof to the other cases where $M_s\models\clubs\lor\diamonds\lor\spades$.

    The proofs for the cases: $t\in s R(\sq;\up?;\edge;\down?;\sq)$; $t\in s R(\sq;\lef?;\edge;\righ?;\sq)$; 
    and $t\in s R(\sq;\down?;\edge;\up?;\sq)$) are similar to that of $t\in s R(\sq;\righ?;\edge;\lef?;\sq)$. The only changes are
\begin{enumerate}
\item in the second case, to replace: all references to $\righ$ with $\up$; all references to $\lef$ with $\down$; and the reference to $\clubs$ in the definition of $\psi$ with $\spades$.  
\item in the third case, to replace: all references to $\righ$ with $\lef$; and all references to $\lef$ with $\righ$.  
\item in the fourth case, to replace: all references to $\righ$ with $\down$; all references to $\lef$ with $\up$; and the reference to $\clubs$ in the definition of $\psi$ with $\spades$.
\end{enumerate}

\item[2.] 
For the final part of this proof we are required to show that 
for all $n$, for all $t\in s R(\sq;\righ?;\edge;\lef?;\sq;\up?;\edge;\down?;\sq;\lef?;\edge;\righ?;\sq;\down?;\edge;\up?)$, 
there is some $v\sim_\sq t$ such that $v\in\nbisim{n}{s}$.
We begin by considering any chain of worlds
$$s\sim_\sq t_1\sim_\edge t_2\sim_\sq t_3\sim_\edge t_4\sim_\sq t_5\sim_\edge t_6\sim_\sq t_7\sim_\edge t$$
where $t_1,t_6\in V(\righ)$, $t_2,t_5\in V(\lef)$, $t_3,t\in V(\up)$, and $t_4, t_7\in V(\down)$. 

We first show that from the first four parts of this lemma, each world in this chain satisfies the formula $local$ (\ref{local}).
Because $M_s\models CB_{apa}$, it follows that $M_s\models\know_\edge\know_\sq local$. Therefore $t_1$ must satisfy $local$. 
Furthermore, from the first part of this lemma, we know for all $n\in\N$, for all $t_2\in s R(\sq;\righ?;\edge;\lef?)$, 
for all $w_1\sim_\sq t_2$ where $M_t\models card$, there is some $v_1\sim_\edge s$ such that $w_1\in\nbisim{v_1}{n}$.
Since the modal depth of $local$ is $5$, we may suppose, without loss of generality, that $w_1\in\nbisim{v_1}{14}^\Pi$. 
Since $v_1\sim_\edge s$, we have $M_{v_1}\models CB_{apa}$ and also $M_{v_1}\models\know_\edge\know_\sq local$.
By Lemma~\ref{modDepPres}, $n$-$\Pi$-bisimilarity preserves the interpretation of pure modal formulas with modal depth up to $n$, 
we have $M_{w_1}\models\know_\sq local$. Since $w_1\sim_\sq t_2$ and $t_2\sim_\sq t_3$ 
it follows that $M_{t_2}\models local $ and $M_{t_3}\models local$.
    Since $M_{v_1}\models CB_{apa}\et card$, from the 
    first part of this lemma, given the second component of $U$,
    for all $n\in\N$, for all $v'\in v_1 R(\sq;\up?;\edge;\down?)$, for all $v''\sim_\sq v'$ where $M_{v''}\models card$, 
    there is some $v_2\sim_\edge v_1$ such that $v''\in\nbisim{v_2}{n}$.
As $w_1\in\nbisim{v_1}{14}^\Pi$, for all $w_2\in w_1 R(\sq;\up?;\edge;\down?;\sq)$, such that $M_{w_2}\models card$, 
there is some $v''\in v_1 R(\sq;\up?;\edge;\down?;\sq)$ where $M_{v''}\models card$ and $w_2\in\nbisim{v''}{11}$.
Putting the last two statements together, for all such $w_2$, there is some $v_2\sim_\edge v_1$ such that $w_2\in\nbisim{v_2}{11}$.
As $v_2\sim_\edge v_1$, and $v_1\sim_\edge s$, we have $v_2\sim_\edge s$, and hence $M_{v_2}\models\know_\edge\know_\sq local$. 
As $w_1\sim_\sq t_3$, there is some such $w_2$ where $t_4\sim_\sq w_2$ and $t_5\sim_\sq w_2$.
From Lemma~\ref{modDepPres} we have $M_{t_4}\models local$ and $M_{t_5}\models local$.

    Continuing in this way, (using the 
    first part of the lemma, with the third component of $U$)
    we are able to find some $w_3$ such that $t_6\sim_\sq w_3$, $t_7\sim_\sq w_3$ and for some $v_3\sim_\edge s$, $w_3\in\nbisim{v_3}{8}$. 
    Finally, applying the 
    first part of this lemma with the fourth component of $U$,
    we are able to find $w_4$ such that $t\sim_\sq w_4$, and for some $v_4\sim_\edge s$, $w_4\in\nbisim{v_4}{5}$.
Since the modal depth of $local$ is $5$, it follows that $t_1,\ t_2,\ t_3,\ t_4,\ t_5,\ t_6,\ t_7$ and $t$ all satisfy $local$, so every world has just one label, and every square contains a world labelled by each direction, and so on.

Again, let us suppose that $M_s\models\hearts$, and the proof may be easily adjusted for the other card suits.
The construction we have applied is depicted in Figure~\ref{cbCycles}.
\begin{figure}
\begin{center}
\scalebox{0.7}{
\begin{picture}(0,0)%
\includegraphics{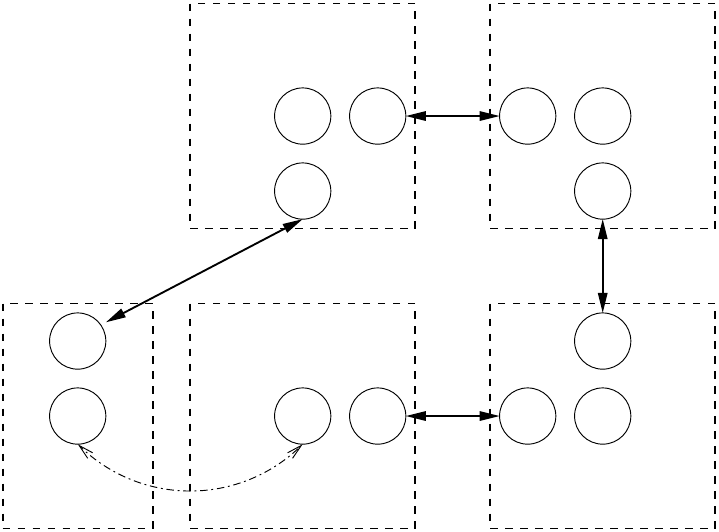}%
\end{picture}%
\setlength{\unitlength}{3947sp}%
\begingroup\makeatletter\ifx\SetFigFont\undefined%
\gdef\SetFigFont#1#2#3#4#5{%
  \reset@font\fontsize{#1}{#2pt}%
  \fontfamily{#3}\fontseries{#4}\fontshape{#5}%
  \selectfont}%
\fi\endgroup%
\begin{picture}(5744,4244)(-321,-8183)
\put(1201,-8011){\makebox(0,0)[lb]{\smash{{\SetFigFont{12}{14.4}{\rmdefault}{\mddefault}{\updefault}{\color[rgb]{0,0,0}$\sim_n$}%
}}}}
\put(1951,-7336){\makebox(0,0)[lb]{\smash{{\SetFigFont{12}{14.4}{\rmdefault}{\mddefault}{\updefault}{\color[rgb]{0,0,0}$\hearts$}%
}}}}
\put(2626,-7336){\makebox(0,0)[lb]{\smash{{\SetFigFont{12}{14.4}{\rmdefault}{\mddefault}{\updefault}{\color[rgb]{0,0,0}$\righ$}%
}}}}
\put(2551,-7636){\makebox(0,0)[lb]{\smash{{\SetFigFont{12}{14.4}{\rmdefault}{\mddefault}{\updefault}{\color[rgb]{0,0,0}$t_1$}%
}}}}
\put(3826,-7336){\makebox(0,0)[lb]{\smash{{\SetFigFont{12}{14.4}{\rmdefault}{\mddefault}{\updefault}{\color[rgb]{0,0,0}$\lef$}%
}}}}
\put(3826,-7636){\makebox(0,0)[lb]{\smash{{\SetFigFont{12}{14.4}{\rmdefault}{\mddefault}{\updefault}{\color[rgb]{0,0,0}$t_2$}%
}}}}
\put(4426,-7636){\makebox(0,0)[lb]{\smash{{\SetFigFont{12}{14.4}{\rmdefault}{\mddefault}{\updefault}{\color[rgb]{0,0,0}$w_1$}%
}}}}
\put(4426,-7336){\makebox(0,0)[lb]{\smash{{\SetFigFont{12}{14.4}{\rmdefault}{\mddefault}{\updefault}{\color[rgb]{0,0,0}$\clubs$}%
}}}}
\put(4426,-6736){\makebox(0,0)[lb]{\smash{{\SetFigFont{12}{14.4}{\rmdefault}{\mddefault}{\updefault}{\color[rgb]{0,0,0}$\up$}%
}}}}
\put(4426,-5536){\makebox(0,0)[lb]{\smash{{\SetFigFont{12}{14.4}{\rmdefault}{\mddefault}{\updefault}{\color[rgb]{0,0,0}$\down$}%
}}}}
\put(4426,-4936){\makebox(0,0)[lb]{\smash{{\SetFigFont{12}{14.4}{\rmdefault}{\mddefault}{\updefault}{\color[rgb]{0,0,0}$\diamonds$}%
}}}}
\put(3826,-4936){\makebox(0,0)[lb]{\smash{{\SetFigFont{12}{14.4}{\rmdefault}{\mddefault}{\updefault}{\color[rgb]{0,0,0}$\lef$}%
}}}}
\put(2626,-4936){\makebox(0,0)[lb]{\smash{{\SetFigFont{12}{14.4}{\rmdefault}{\mddefault}{\updefault}{\color[rgb]{0,0,0}$\righ$}%
}}}}
\put(2026,-4936){\makebox(0,0)[lb]{\smash{{\SetFigFont{12}{14.4}{\rmdefault}{\mddefault}{\updefault}{\color[rgb]{0,0,0}$\spades$}%
}}}}
\put(2026,-5536){\makebox(0,0)[lb]{\smash{{\SetFigFont{12}{14.4}{\rmdefault}{\mddefault}{\updefault}{\color[rgb]{0,0,0}$\down$}%
}}}}
\put(1726,-4636){\makebox(0,0)[lb]{\smash{{\SetFigFont{12}{14.4}{\rmdefault}{\mddefault}{\updefault}{\color[rgb]{0,0,0}$w_3$}%
}}}}
\put(1651,-5461){\makebox(0,0)[lb]{\smash{{\SetFigFont{12}{14.4}{\rmdefault}{\mddefault}{\updefault}{\color[rgb]{0,0,0}$t_7$}%
}}}}
\put(-149,-6736){\makebox(0,0)[lb]{\smash{{\SetFigFont{12}{14.4}{\rmdefault}{\mddefault}{\updefault}{\color[rgb]{0,0,0}$t$}%
}}}}
\put(226,-6736){\makebox(0,0)[lb]{\smash{{\SetFigFont{12}{14.4}{\rmdefault}{\mddefault}{\updefault}{\color[rgb]{0,0,0}$\up$}%
}}}}
\put(226,-7336){\makebox(0,0)[lb]{\smash{{\SetFigFont{12}{14.4}{\rmdefault}{\mddefault}{\updefault}{\color[rgb]{0,0,0}$\hearts$}%
}}}}
\put(-149,-7336){\makebox(0,0)[lb]{\smash{{\SetFigFont{12}{14.4}{\rmdefault}{\mddefault}{\updefault}{\color[rgb]{0,0,0}$v$}%
}}}}
\put(2551,-4561){\makebox(0,0)[lb]{\smash{{\SetFigFont{12}{14.4}{\rmdefault}{\mddefault}{\updefault}{\color[rgb]{0,0,0}$t_6$}%
}}}}
\put(3826,-4561){\makebox(0,0)[lb]{\smash{{\SetFigFont{12}{14.4}{\rmdefault}{\mddefault}{\updefault}{\color[rgb]{0,0,0}$t_5$}%
}}}}
\put(4576,-4561){\makebox(0,0)[lb]{\smash{{\SetFigFont{12}{14.4}{\rmdefault}{\mddefault}{\updefault}{\color[rgb]{0,0,0}$w_2$}%
}}}}
\put(4801,-5536){\makebox(0,0)[lb]{\smash{{\SetFigFont{12}{14.4}{\rmdefault}{\mddefault}{\updefault}{\color[rgb]{0,0,0}$t_4$}%
}}}}
\put(4801,-6736){\makebox(0,0)[lb]{\smash{{\SetFigFont{12}{14.4}{\rmdefault}{\mddefault}{\updefault}{\color[rgb]{0,0,0}$t_3$}%
}}}}
\put(1726,-7111){\makebox(0,0)[lb]{\smash{{\SetFigFont{12}{14.4}{\rmdefault}{\mddefault}{\updefault}{\color[rgb]{0,0,0}$s$}%
}}}}
\end{picture}%
}
\end{center}
\caption{A representation of the final case of Lemma~\ref{CB-AA}. The dashed boxes represent the $\sq$ equivalence classes; the solid lines are the $\edge$ equivalence classes; and the dashed line indicates $n$-$\Pi$-bisimilarity.}\label{cbCycles}
\end{figure}
We are required to show for every such $t$, for any given $n$, there is some $v\sim_\sq t$ such that $v\in \nbisim{s}{n}$. Suppose that this was not the case. 
Then for some $n\in \N$, for some 
$$t\in s R(\sq;\righ?;\edge;\lef?;\sq;\up?;\edge;\down?;\sq;\lef?;\edge;\righ?;\sq;\down?;\edge;\up?),$$ 
we have for all $v\sim_\sq t$, $v\notin\nbisim{s}{n}^\Pi$.
From Lemma~\ref{nbisimwitness} for every $n$, there is some formula, $\phi^n_s$, such that for all $w\in S$, $M_w\models\phi^n_s$ if and only if $w\in\nbisim{s}{n}$.
Let $\psi$ be the formula $\hearts\imp\phi^n_s$. Then:
\begin{enumerate}
\item $M_s\models\psi$. It is clear that $s$ is $n$-$\Pi$-bisimilar to itself, so $M_s\models\phi^n_s$.
\item $M_{t_i}\models\psi$, for $i = 1,\hdots,7$. Since $M_{t_i}\models local$, we have $M_{t_i}\models oneLabel$. We also have $M_{t_i}\models \lef\lor\righ\lor\up\lor\down$, so $M_{t_i}\not\models\hearts$. As the antecedent is not satisfied, we have $M_{t_i}\models\psi$
\item $M_t\models\psi$. This follows from the reasoning used in the previous case, since we have $M_t\models local$ and $M_t\models\up$.
\end{enumerate}
Therefore after announcing $\psi$, the worlds $s, t_1,\hdots, t_7$, and $t$ would all remain in the model $M^\psi = (S^\psi, R^\psi, V^\psi)$. 
However, we know that there is no $v\sim_\sq t$ such that $v\in\nbisim{s}{n}$, so for all $v\in t R_\sq$, we have $M_{v}\models\neg\phi^n_s$. 
For all $v\in t R_\sq\cap S^\psi$ we must have $M_{v}\models(\hearts\imp\phi)\et\neg\phi^n_s$, so it must be that $M_{v}\models\neg\hearts$.
As propositions are preserved by announcements, we have $M^\psi_t\models\know_\sq\neg\hearts$, and hence
\begin{equation}
M^\psi_s\models\susp_\sq(\righ\et(\susp_\edge(\lef\et\susp_\sq(\up\et\susp_\edge(\down\et\susp_\sq(\lef\et\susp_\edge(\righ\et\susp_\sq(\down\et\susp_\edge(\up\et\know_\sq\neg\hearts)\!)\!)\!)\!)\!)\!)\!)\!)
\end{equation}
It follows that $M_s\models\neg c_{apa}(\hearts)$ giving us the required contradiction. A similar argument can be given for each of the other card suits, completing the proof.
\end{enumerate}
\end{proof}

\subsection{Group announcements}\label{subsec.group} 

The approach for group announcements is similar to to the approach for arbitrary public announcements, except we are now able to use group announcements, which allow for a more direct proof. 
We note that the approach for coalition announcements below could also be applied to group announcement logic, since it only uses the coalition of both agents, which is equivalent to a group announcement from both agents. 
However, we have chosen to present a separate proof for group announcement logic since it only requires us to use single agent groups, and is consequently a stronger result. 
The properties are formalized as follows:
\begin{itemize}
\item {\em For all $n$, for every center world, $w$, of any square, there is some square below some square to the left of some square above some square to the right of that square, whose center world is $n$-bisimilar to $w$.}

\begin{eqnarray*}
c_{ga}(X)&=& X\imp\allgrp{\sq}(\know_\sq(\righ\imp(\know_\edge(\lef\imp\know_\sq(\up\imp\know_\edge(\down\imp\\
  &&\qquad\know_\sq(\lef\imp\know_\edge(\righ\imp\know_\sq(\down\imp\susp_\edge(\up\land\susp_\sq X))))))))))\\
cyc_{ga} &=& c_{ga}(\hearts)\et c_{ga}(\clubs)\et c_{ga}(\diamonds)\et c_{ga}(\spades)
\end{eqnarray*}

\item {\em For all $n$, for every center world, $w$, of every reachable square there is some $\edge$-accessible world that is $n$-bisimilar to $w$.}
$$
\begin{array}{l}
t_{ga}(X, Y, Z) = X\imp\allgrp{\edge}\know_\sq\bigwedge\left[\begin{array}{l}\righ\imp\know_\edge(\lef\imp\susp_\sq Y)\\
									      \up\imp\know_\edge(\down\imp\susp_\sq Z)\\
									      \lef\imp\know_\edge(\righ\imp\susp_\sq Y)\\
									      \down\imp\know_\edge(\up\imp\susp_\sq Z)
										\end{array}\right]\\
ck_{ga} = t_{ga}(\hearts, \clubs, \spades)\et t_{ga}(\clubs,\hearts,\diamonds)\et t_{ga}(\diamonds,\spades, \clubs)\et t_{ga}(\spades,\diamonds,\hearts)
\end{array}$$
\end{itemize}

We define the formula:
\begin{equation}\label{CBG}
CB_{ga} = \know_\sq\know_\edge\know_\sq (local\et cyc_{ga} \et ck_{ga})
\end{equation}

It is worth taking the time to compare the formulas $cyc_{ga}$ and $ck_{ga}$ to the corresponding formulations for arbitrary announcements: respectively $cyc_{apa}$ and $ck_{apa}$. Consider a simple fragment of these:
\begin{eqnarray*}
ck_{ga}^* &=& \hearts\imp\allgrp{\edge}\know_\sq\left[\righ\imp\know_\edge(\lef\imp\susp_\sq \clubs)\right]\\
ck_{apa}^* &=& \hearts\imp\allpub\left[\know_\edge\neg \clubs\imp\know_\sq(\righ\imp\know_\edge(\lef\imp\know_\sq\neg \clubs))\right]
\end{eqnarray*}
While each formula has a similar purpose and structure, the formulas used for group announcements are more direct.
The purpose of $ck_{ga}^*$ is to say that agent $\edge$ cannot distinguish the center of a $\hearts$-square from the center of the $\clubs$-square to its right. As any group announcement made by $\edge$ at the center of the hearts square cannot rule out any world that agent $\edge$ cannot distinguish from the center of the hearts world, the formula simply needs to assert that after any such group announcement, if the intermittent edge worlds remain (the $\righ$ and $\lef$ worlds), then the center of the $\clubs$ square must also remain. However, achieving the same using arbitrary announcements is more complicated. After making any public announcement, if there are no $\clubs$-worlds  accessible by agent $\edge$ from the center of the $\hearts$-square, then there can be no $\clubs$ at the center of the square to the right. The correctness of this construction is shown in Lemma~\ref{CB-AA}. 

Essentially, in GAL we are quantifying over what an agent {\em knows} to be true, so the accessibility relation for that agent is implicit in the quantification whereas, in APAL we quantify over all formulas, so we must make the reachability explicit in the formula. The following lemma has a similar form to Lemma~\ref{CB-AA} and provides the main technical result that allows us to define the checkerboard like model, and hence express the tiling problem. Note that whilst the Lemma below is very similar to the statement of Lemma~\ref{CB-AA}, there are some small but significant differences between the two.

\begin{lemma}\label{CB-GA}
Suppose that $M = (S,\sim, V)$, $s\in S$ and $M_s\models CB_{ga}\et card$. 
Let $U = s R(\sq;\righ?;\edge;\lef?;\sq) \cup s R(\sq;\up?;\edge;\down?;\sq) \cup s R(\sq;\lef?;\edge;\righ?;\sq)\cup s R(\sq;\down?;\edge;\up?;\sq)$.
Then:
\begin{enumerate}
\item For all $n\in\N$, for all $t'\in U$, there is some $t\sim_\sq t'$ and some $v\sim_\edge s$ such that $v\in\nbisim{t}{n}$.
\item For all $n\in\N$, for all 
$t \in s R(\sq;\righ?;\edge;\lef?;\sq;\up?;\edge;\down?;\sq;\lef?;\edge;\righ?;\sq;\down?)$, there is some $v\sim_\edge t$ and some $w\sim_\sq v$ such that $M_v\models \up$ and $w\in\nbisim{s}{n}$.
\end{enumerate}
\end{lemma}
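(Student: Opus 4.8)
\textbf{Proof proposal for Lemma~\ref{CB-GA}.}

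The plan is to mirror the structure of the proof of Lemma~\ref{CB-AA}, but exploit the extra directness that group announcements afford. For part~1, I would argue by contradiction: suppose for some $n$ and some $t'\in U$ (say $t'\in s R(\sq;\righ?;\edge;\lef?;\sq)$), every $t\sim_\sq t'$ fails to have an $\edge$-neighbour of $s$ that is $n$-bisimilar to it. Using Lemma~\ref{nbisimwitness} fix the characteristic formula $\phi^n_t$ for each relevant $t$; since $t'$ lies in a single $\sq$-class, the disjunction $\bigvee_{t\sim_\sq t'}\phi^n_t$ is (up to the finite bound of Corollary~\ref{finitenbisim}) a single $\lang_{el}$ formula. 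Assuming WLOG $M_s\models\hearts$, the announcement to consider is $\psi=\know_\edge(\hearts\imp\neg\bigvee_t\phi^n_t)$, which lies in $EL^{\{\edge\}}$ and hence is a legal $\allgrp{\edge}$-announcement: $M_s\models\psi$ because $s\models\hearts$ but no $\edge$-neighbour of $s$ is $n$-bisimilar to any such $t$ (so $\hearts\imp\neg\bigvee_t\phi^n_t$ holds throughout $[s]_\edge$), while $\psi$ kills exactly the $\clubs$-centres of the square to the right (they satisfy $\hearts$ ... no --- rather, one uses that along $\sq;\righ?;\edge;\lef?$ the worlds are $\righ$/$\lef$ worlds, not $card$ worlds by $edge$, so they survive, and the target centre $t$ fails $\psi$). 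Then $ck_{ga}$, namely $t_{ga}(\hearts,\clubs,\spades)$, evaluated after $[\psi]$, forces a surviving $\susp_\sq\clubs$ witness below the surviving $\lef$-world, contradicting that $\psi$ removed every $\clubs$-centre $n$-bisimilar to nothing relevant. The other three components of $U$ are handled by the symmetric substitutions $\righ\leftrightarrow\up$, $\lef\leftrightarrow\down$, $\clubs\leftrightarrow\spades$ exactly as in Lemma~\ref{CB-AA}.

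For part~2, I would first propagate the truth of $local$ along the chain
\[
s\sim_\sq t_1\sim_\edge t_2\sim_\sq t_3\sim_\edge t_4\sim_\sq t_5\sim_\edge t_6\sim_\sq t_7
\]
with the $\righ,\lef,\up,\down,\lef,\righ,\down$ pattern, using part~1 repeatedly together with Lemma~\ref{modDepPres}: at each $\edge$-step we move to a $\sq$-representative that is $n$-bisimilar (for $n$ large, e.g. $14,11,8,\dots$, respecting that $\md(local)=5$) to an $\edge$-neighbour of $s$, which satisfies $\know_\edge\know_\sq\, local$ because $M_s\models CB_{ga}$, and then transport $local$ back along bisimilarity. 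This is the same bookkeeping as in the second part of Lemma~\ref{CB-AA}. Then, assuming $M_s\models\hearts$, suppose toward a contradiction that for some $n$ there is a $t$ at the end of the chain such that no $\edge$-neighbour $v$ of $t$ with $M_v\models\up$ has a $\sq$-neighbour $n$-bisimilar to $s$. Take $\phi^n_s$ from Lemma~\ref{nbisimwitness}. The announcement to consider is $\psi=\know_\sq(card\imp(\hearts\imp\phi^n_s))$ or, more simply, an $EL^{\{\sq\}}$-formula known at $s$ that preserves $s,t_1,\dots,t_7,t$ (these survive because they are either $n$-bisimilar-relevant or carry a non-$\hearts$ label and so satisfy the antecedent vacuously) but, after $[\psi]$, forces $\know_\sq(\up\imp\know_\sq\neg\hearts)$ ... the precise consequent must be matched to the innermost $\susp_\edge(\up\land\susp_\sq X)$ of $c_{ga}(\hearts)$, so the announcement must be engineered so that after it, from $t$, the agent $\sq$ has no $\hearts$-successor reachable via the final $\susp_\edge\,\up$. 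Then $M_s\models CB_{ga}$ gives $M^\psi_s\models\neg c_{ga}(\hearts)$ in the standard way, contradiction; the other suits follow symmetrically.

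The main obstacle I anticipate is getting the announced formula $\psi$ right in each part, i.e. ensuring simultaneously (i) $\psi\in EL^{\{\sq\}}$ or $\psi\in EL^{\{\edge\}}$ as required so that it is a legitimate group announcement, (ii) $M_s\models\psi$, (iii) $\psi$ preserves every world on the relevant chain, and (iv) $\psi$ deletes precisely the worlds whose absence the formula $ck_{ga}$ or $cyc_{ga}$ forbids. In the APAL case one quantifies over all $\lang_{el}$ formulas, so one may simply take $\psi=\hearts\imp\phi^n_s$; here the announcement must additionally be of the syntactic shape $\know_a\chi$ (or a conjunction thereof over the singleton group), which is why the formulas $c_{ga}$, $t_{ga}$ were rewritten with the innermost $\know_\sq$/$\know_\edge$ replaced by $\susp_\edge(\,\cdot\,)$ and the reachability structure moved inside the $\allgrp{\sq}$/$\allgrp{\edge}$. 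Checking that the implicit $\sim_\sq$ (resp. $\sim_\edge$) built into an $EL^{\{\sq\}}$ (resp. $EL^{\{\edge\}}$) announcement is exactly the reachability that the APAL proof had to spell out explicitly, and that this makes items (iii)--(iv) go through, is the delicate point; a secondary nuisance is tracking the $n$-bisimilarity bounds ($5,8,11,14$) down the chain, but that is routine given Lemma~\ref{modDepPres} and Corollary~\ref{finitenbisim}.
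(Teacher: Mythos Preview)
Your overall plan matches the paper's proof: contradiction via a well-chosen group announcement, propagation of $local$ along the chain using part~1 and Lemma~\ref{modDepPres}, and contradicting $cyc_{ga}$ at the end. The gap is exactly where you anticipate it: neither of your concrete announcements $\psi$ does what you need.

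\textbf{Part 1.} Your $\psi=\know_\edge(\hearts\imp\neg\bigvee_t\phi^n_t)$ does \emph{not} kill the target centre. If $s\models\hearts$ then by $adj$ the centre of the square to the right is a $\clubs$-world; the antecedent $\hearts$ is false there, so the implication is vacuous, and you have no control over whether some $\edge$-neighbour of that centre witnesses $\hearts\land\bigvee_t\phi^n_t$. You noticed this mid-sentence but did not repair it. The paper's fix is to flip the polarity: instead of characterising $[t']_\sq$ negatively, characterise $[s]_\edge$ positively. Using Corollary~\ref{finitenbisim} pick $\phi_0,\dots,\phi_m$ covering exactly the $n$-bisimilarity classes of the worlds $v\sim_\edge s$, and announce $\psi=\know_\edge(\lef\lor\righ\lor\bigvee_a\phi_a)$. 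This is automatically known by $\edge$ at $s$; the $\lef\lor\righ$ disjunct is the escape hatch preserving the intermediate edge-worlds; and every $card$-world $t\sim_\sq t'$ fails $\psi$ because $t$ itself is a centre (so $\neg\lef\land\neg\righ$ by $local$) and $t\not\models\bigvee_a\phi_a$ by the contradiction hypothesis. After $\psi$ one gets $\know_\sq\neg card$ at the $\lef$-world, contradicting $t_{ga}$.

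\textbf{Part 2.} Your $\psi=\know_\sq(\hearts\imp\phi^n_s)$ need not hold at $s$: nothing in $local$ forces the $\hearts$-worlds of $[s]_\sq$ to be unique or $n$-bisimilar to $s$, so a second $\hearts$-world $s'\sim_\sq s$ with $s'\notin\nbisim{s}{n}$ would falsify $\psi$. The paper again takes $\phi_0,\dots,\phi_m$ ranging over the $n$-bisimilarity classes of \emph{all} $s'\sim_\sq s$ and announces $\psi=\know_\sq(\hearts\imp\know_\sq\bigvee_a\phi_a)$: the inner $\know_\sq$ guarantees $M_s\models\psi$, the intermediate $t_i$'s survive because their squares carry $\clubs$, $\diamonds$, or $\spades$ (so the antecedent fails throughout $[t_i]_\sq$), and a surviving $\hearts$-world $w\sim_\sq v$ would force $\know_\sq\bigvee_a\phi_a$ there, contradicting the hypothesis that no $w\sim_\sq v$ lies in any such class. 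The bound-tracking ($14,11,8,\ldots$) you sketch is correct and is exactly what the paper does.
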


\begin{proof}\quad\\
\begin{enumerate}
\item[1.] 
We will assume $t'\in s R(\sq;\righ?;\edge;\lef?;\sq)$ and the cases for $t'\in s R(\sq;\up?;\edge;\down?;\sq)$, $t'\in s R(\sq;\lef?;\edge;\righ?;\sq)$
and $t'\in s R(\sq;\down?;\edge;\up?;\sq)$ are similar.
Suppose, for contradiction, that there is some $n\in \N$ and some $t'\in s R(\sq;\righ?\edge;\lef?)$ such that for all $t\sim_\sq t'$, 
for all $v\sim_\edge s$, $v\notin\nbisim{t}{n}$. Then, by Lemma~\ref{finitenbisim}, there is a set of epistemic formulas $\phi_0,\hdots,\phi_m$ such that:
\begin{enumerate}
\item for all $v\sim_\edge s$ there is some $a\leq m$ such that $M_v\models\phi_a$;
\item for all $a\leq m$, there is some $v\sim_\edge s$ such that $M_v\models\phi_a$; and
\item for all $a\leq m$, for all $v\sim_\edge s$ where $M_v\models\phi_a$, for all $t\in S$, $M_t\models\phi_a$ if and only if $t\in\nbisim{v}{n}$.
\end{enumerate} 
It follows that $M_s\models\know_\edge\bigvee_{a\leq m}\phi_a$, and by our assumption, for all $t\sim_\sq t'$ we have $M_t\models\neg\know_\edge\bigvee_{a\leq m}\phi_a$.
    Let $\psi$ be the formula 
    $\know_\edge(\lef\lor\righ\lor\bigvee_{a\leq m}\phi_a)$.
    (Even though, from the fact $M_s\models local$, we have $M_s\models\know_\edge(\neg\lef\et\neg\righ)$, 
    we include the propositions $\lef$ and $\righ$ in the disjunction for $\psi$ since they do not stop agent $\edge$ knowing $\psi$ is true, 
    but they do prevent the intermittent $\lef$ and $\righ$ worlds from being removed by the announcement of $\psi$.)

In the model $M^\psi = (S^\psi, R^\psi, V^\psi)$ we have 
\begin{enumerate}
\item $s\in S^\psi$, since for some $a\leq m$, $M_s\models\phi_a$, so the announcement $\psi$ is not vacuous.
\item for all $t'\in s R (\sq;\righ?)$, $t'\in S^\psi$, since $M_{t'}\models\know_\edge\righ$
\item for all $t'\in s R (\sq;\righ?;\edge;\lef?)$, $t'\in S^\psi$, since $M_{t'}\models\know_\edge\lef$
\item for some $t'\in s R(\sq;\righ?;\edge;\lef?)$ for every $t\sim_\sq t'$, if $t\in S^\psi$, then $M_t\models\neg card$. 
  By our assumption that for all $t \sim_\sq t'$, for all $v\sim_\edge s$, $t\notin\nbisim{v}{n}$, we have $M_t\not\models\bigvee_{a\leq m}\phi_a$. 
    Since $M_s\models\know_\sq\know_\edge local$, it follows that 
    $M_t\models card \imp\know_\edge(\neg\lef\et\neg\righ)$, so $M_t\models card\imp\neg\psi$.
\end{enumerate}
Therefore $M^\psi_s\models\susp_\sq(\righ\et\susp_\edge(\lef\et\know_\sq\neg card))$. 
Hence $M_s\models card\imp\langle\psi\rangle\susp_\sq(\righ\et\susp_\edge(\lef\et\know_\sq\neg card))$ which is sufficient to contradict that $M_s\models ck_{ga}$. 
Note that regardless of how $X$, $Y$, and $Z$ are applied in the definition of $t_{ga}(X,Y,Z)$ this proof only requires that $(X\lor Y\lor Z)\imp card$.

As in Lemma~\ref{CB-AA}, the cases for
$t'\in s R(\sq;\up?;\edge;\down?;\sq)$, $t'\in s R(\sq;\lef?;\edge;\righ?;\sq)$
and $t'\in s R(\sq;\down?;\edge;\up?;\sq)$ and accounted for in a similar manner.
\item[2.] For the final part of this proof we are required to show that 
  for all $n$, for all 
  $t\in s R(\sq;\righ?;\edge;\lef?;\sq;\up?;\edge;\down?;\sq;\lef?;\edge;\righ?;\sq;\down?)$,
    there is some $v\sim_\edge t$ and $w\sim_\sq v$ such that $M_v\models v$ and $w\in\nbisim{n}{s}$.
    We begin by considering any chain of worlds
    $$s\sim_\sq t_1\sim_\edge t_2\sim_\sq t_3\sim_\edge t_4\sim_\sq t_5\sim_\edge t_6\sim_\sq t$$
    where $t_1,t_6\in V(\righ)$, $t_2,t_5\in V(\lef)$, $t_3\in V(\up)$, and $t_4, t\in V(\down)$. 

We first show that from the first part of this lemma, each world in this chain satisfies the formula $local$ (\ref{local}).
    We will suppose without loss of generality, that $M_s\models\hearts$.
Because $M_s\models CB_{ga}$, it follows that $M_s\models\know_\sq\know_\edge local$. Therefore $t_1$ must satisfy $local$. 
    Furthermore, from the first part of this lemma 
    with the first component from $U$,
    we know for all $n\in\N$, 
    for all $t_2\in s R(\sq;\righ?;\edge;\lef?)$, there is some $w_1\sim_\sq t_2$ where $M_{w_1}\models card$, and some 
    $v_1\sim_\edge s$ such that $w_1\in\nbisim{v_1}{n}$.
Since the modal depth of $local$ is $5$, we may suppose, without loss of generality, that $w_1\in\nbisim{v_1}{14}^\Pi$. 
Since $v_1\sim_\edge s$, we have $M_{v_1}\models CB_{ga}$ and also $M_{v_1}\models\know_\edge\know_\sq local$.
By Lemma~\ref{modDepPres}, $n$-$\Pi$-bisimilarity preserves the interpretation of pure modal formulas with modal depth up to $n$, we have $M_{w_1}\models\know_\sq local$. 
Since $w_1\sim_\sq t_2$ and $t_2\sim_\sq t_3$ it follows that $M_{t_2}\models local$ and $M_{t_3}\models local$.
    Since $M_{v_1}\models CB_{ga}\et card$, from the 
    first part of this lemma with the second component of $U$,
    for all $n\in\N$, for all $v'\in v_1 R(\sq;\up?;\edge;\down?)$, for all $v''\sim_\sq v'$ where $M_{v''}\models card$, 
    there is some $v_2\sim_\edge v_1$ such that $v''\in\nbisim{v_2}{n}$.
As $w_1\in\nbisim{v_1}{14}^\Pi$, for every  $t'\in w_1 R(\sq;\up?;\edge;\down?)$, there is some $w_2\sim_\sq t'$ such that $M_{w_2}\models card$, 
and some $v''\in v_1 R(\sq;\up?;\edge;\down?;\sq)$ where $M_{v''}\models card$ and $w_2\in\nbisim{v''}{11}$.
Putting the last two statements together, for every $t'\in w_1 R(\sq;\up?;\edge;\down?)$, there is some $w_2\sim_\sq t'$ and some $v_2\sim_\edge v_1$ such that $w_2\in\nbisim{v_2}{11}$.
As $v_2\sim_\edge v_1$, and $v_1\sim_\edge s$, we have $v_2\sim_\edge s$, and hence $M_{v_2}\models\know_\edge\know_\sq local$. 
As $w_1\sim_\sq t_3$, there is some such $w_2$ and $v_2$ such that $t_4\sim_\sq w_2$ and $t_5\sim_\sq w_2$.
From Lemma~\ref{modDepPres} we have $M_{t_4}\models local$ and $M_{t_5}\models local$.

    Continuing in this way, 
    (using the third component of $U$)
    we are able to find some $w_3$ such that $t_6\sim_\sq w_3$, 
    $t\sim_\sq w_3$
    and for some $v_3\sim_\edge s$, $w_3\in\nbisim{v_3}{8}$. 
    Since the modal depth of $local$ is $5$, it follows that $t_1,\ t_2,\ t_3,\ t_4,\ t_5,\ t_6$ and $t$ all satisfy $local$, 
    so every world has just one label, and every square contains a world labelled by each direction, and so on.

    We are required to show for every such $t$, for any given $n$, there is some $v\sim_\edge t$ 
    and some $w\sim_\sq v$ such that $M_v\models u$ and $w\in \nbisim{s}{n}$.
    Suppose that this was not the case. 
    Then for some $n\in \N$, for some 
    $t\in s R(\sq;\righ?;\edge;\lef?;\sq;\up?;\edge;\down?;\sq;\lef?;\edge;\righ?;\sq;\down?)$, 
    we have for all $v\sim_\edge t$ where $M_v\models \up$, for all $w\sim_\sq v$, $w\notin\nbisim{s}{n}$.
    From Lemma~\ref{finitenbisim} for every $n$, there is a set of formulas, $\phi_0,\hdots,\phi_m$, such that 
    \begin{enumerate}
      \item for all $s'\sim_\sq s$ there is some $a\leq m$ such that $M_{s'}\models\phi_a$, 
        and for all $t'\in S$, $M_{t'}\models\phi_a$ if and only if $t'\in\nbisim{s'}{n}$.
      \item for all $a\leq m$ there is some $s'\sim_\sq s$ such that $M_{s'}\models\phi_a$;
    \end{enumerate}
    Let $\psi$ be the formula 
    $\know_\sq(\hearts\imp\know_\sq\bigvee_{a\leq m}\phi_a)$.
    Then:
    \begin{enumerate}
      \item $M_s\models\psi$. It is clear that for every  $s'\sim_\sq s$, $s'$ is $n$-$\Pi$-bisimilar to itself, 
        so $M_{s'}\models\bigvee_{a\leq m}\phi_a$. Therefore $M_s\models\know_\sq\psi$.
      \item $M_{t_i}\models\psi$, for $i = 1,\hdots,6$. Since $M_{t_i}\models local$, 
        we have $M_{t_i}\models oneLabel$. 
        We also have 
        $M_{t_i}\not\models \know_\sq\hearts$. 
        As the antecedent is not satisfied, we have $M_{t_i}\models\psi$
      \item $M_t\models\psi$. This follows from the reasoning used in the previous case, since we have $M_t\models local$ and 
        $M_t\models\know_\sq\neg\hearts$.
    \end{enumerate}
    Therefore, at $s$ agent $\sq$ may make the announcement $\psi$, and after announcing $\psi$, the worlds 
    $s, t_1,\hdots, t_6$,
    and $t$ would all remain in the model $M^\psi = (S^\psi, R^\psi, V^\psi)$. 
    However, we know that there is no $v\sim_\edge t$ where $M_v\models\up$, 
    such that for some $w\sim_\sq v$, $w\in\nbisim{s}{n}$. 
    So for all $v\in t R_\edge$, we have 
    $M_v\models\know_\sq(\up\imp\know_\sq\neg\bigvee_{a\leq m}\phi_a)$.
    Therefore, for all $v\in t R_\edge\cap S^\psi$ where $M_v\models\up$, for all $w\in v R_\sq\cap S^\psi$,
    we must have $M_w\models\know_\sq\neg\hearts$.
    As propositions are preserved by announcements, we have 
    $M^\psi_t\models\know_\sq\neg\hearts$, 
    and hence
    \begin{eqnarray*}
      M^\psi_s&\models&\susp_\sq(\righ\et(\susp_\edge(\lef\et\susp_\sq(\up\et\susp_\edge(\down\et\\
      &&\qquad\susp_\sq(\lef\et\susp_\edge(\righ\et\susp_\sq(\down\et\know_\edge(\up\imp\know_\sq\neg\hearts)))))))))
    \end{eqnarray*}
    It follows that $M_s\models\neg cyc_{ga}$ (regardless of suit) giving us the required contradiction.
\end{enumerate}
\end{proof}

\subsection{Coalition announcements}\label{subsec.coalition}

The proof for coalition announcement logic is a combination of the two previous cases. We use a coalition of both agents (which is effectively a group announcement), and the constructions for the arbitrary public announcement logic. 
We briefly compare the constructions below to the previous cases for group announcements and arbitrary announcements. 
With coalition announcements we can no longer rely on the accessibility relations being implicit in the quantified announcement, as with group announcements, since the quantification is now over the accessibility relations of all agents.
As a result we must use the more complicated constructions of APAL, along with the proof methods of GAL. 
The properties are formalized as follows:
\begin{itemize}
\item {\em For all $n$, for every center world, $w$, of any square, there is some square below some square to the left of some square above some square to the right of that square, whose center world is $n$-bisimilar to $w$.}

\begin{eqnarray*}
c_{ca}(X)&=& X\imp\allcoal{\sq,\edge} (\know_\sq(\righ\imp(\know_\edge(\lef\imp\know_\sq(\up\imp\know_\edge(\down\imp\\
					&&\qquad\know_\sq(\lef\imp\know_\edge(\righ\imp\know_\sq(\down\imp\know_\edge(\up\imp\susp_\sq X))))))))))\\
cyc_{ca} &=& c_{ca}(\hearts)\et c_{ca}(\clubs)\et c_{ca}(\diamonds)\et c_{ca}(\spades)
\end{eqnarray*}

\item {\em For all $n$, for every center world, $w$, of every reachable square there is some $\edge$-accessible world that is $n$-bisimilar to $w$.}

$$\begin{array}{l}
t_{ca}(X, Y, Z) = X\imp\bigwedge\left[\begin{array}{l}\allcoal{\sq,\edge}(\know_\edge\neg Y\imp\know_\sq(\righ\imp\know_\edge(\lef\imp\know_\sq\neg Y)))\\
					\allcoal{\sq,\edge}(\know_\edge\neg Z\imp\know_\sq(\up\imp\know_\edge(\down\imp\know_\sq\neg Z)))\\
					\allcoal{\sq,\edge}(\know_\edge\neg Y\imp\know_\sq(\lef\imp\know_\edge(\righ\imp\know_\sq\neg Y)))\\
					\allcoal{\sq,\edge}(\know_\edge\neg Z\imp\know_\sq(\down\imp\know_\edge(\up\imp\know_\sq\neg Z)))
					\end{array}\right]\\
ck_{ca} = t_{ca}(\hearts, \clubs, \spades)\et t_{ca}(\clubs,\hearts,\diamonds)\et t_{ca}(\diamonds,\spades, \clubs)\et t_{ca}(\spades,\diamonds,\hearts)
\end{array}
$$
\end{itemize}

We define the formula:
\begin{equation}\label{CBC}
CB_{ca} = \know_\sq\know_\edge\know_\sq (local\et cyc_{ca} \et ck_{ca})
\end{equation}

The following lemma provides the main technical result that allows us to define the checkerboard like model, and hence express the tiling problem.

\begin{lemma}\label{CB-CA}
Suppose that $M = (S,\sim, V)$, $s\in S$ and $M_s\models CB_{ca}\et card$. 
Let $U = s R(\sq;\righ?;\edge;\lef?;\sq) \cup s R(\sq;\up?;\edge;\down?;\sq) \cup s R(\sq;\lef?;\edge;\righ?;\sq)\cup s R(\sq;\down?;\edge;\up?;\sq)$.
Then:
\begin{enumerate}
\item For all $n\in\N$, for all $t\in U$, if $M_t\models card$, then there is some $v\sim_\edge s$ such that $v\in\nbisim{t}{n}$.
\item For all $n\in\N$, for all $t \in s R(\sq;\righ?;\edge;\lef?;\sq;\up?;\edge;\down?;\sq;\lef?;\edge;\righ?;\sq;\down?;\edge;\up?)$, there is some $v\sim_\sq t$ such that $v\in\nbisim{s}{n}$. 
\end{enumerate}
\end{lemma}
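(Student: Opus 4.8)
The statement of Lemma~\ref{CB-CA} is word-for-word that of Lemma~\ref{CB-AA}, and $CB_{ca}$, $cyc_{ca}$, $ck_{ca}$ are obtained from $CB_{apa}$, $cyc_{apa}$, $ck_{apa}$ by replacing every $\allpub$ by $\allcoal{\sq,\edge}$. The plan is to transfer the proof of Lemma~\ref{CB-AA} essentially verbatim, the single adjustment being that the announcements used to build the contradictions must now be realisable as coalition announcements. What makes this possible is that $\Agents=\{\sq,\edge\}$, so the set of agents outside the coalition $\{\sq,\edge\}$ is empty: by the semantics of $\allcoal{\cdot}$ the only announcement the complement can contribute is the trivial one, so $M_s\models\allcoal{\sq,\edge}\phi$ is equivalent to $M_s\models[\psi]\phi$ for every $\psi\in EL^{\{\sq,\edge\}}$, i.e.\ every $\psi$ of the shape $\know_\sq\alpha\et\know_\edge\beta$. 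Up to logical equivalence this class contains the single-agent group announcements $\know_\sq\alpha$ (take $\beta=\top$) and $\know_\edge\beta$ (take $\alpha=\top$) used in the proof of Lemma~\ref{CB-GA}. Thus $M_s\models CB_{ca}$ supplies the same implications as $M_s\models CB_{apa}$, but now only after announcements from this restricted family, and to derive a contradiction it suffices to exhibit one such announcement falsifying the conclusion of the relevant conjunct of $ck_{ca}$ or $cyc_{ca}$.

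For part~1, fix the bad $t\in sR(\sq;\righ?;\edge;\lef?;\sq)$ with $M_t\models card$ violating the conclusion, and assume (as in Lemma~\ref{CB-AA}) $M_s\models\hearts$, so $M_t\models\clubs$ is the relevant case. Where the APAL proof announces the bare formula $\clubs\imp\phi_t^n$, instead announce $\psi=\know_\sq\bigl((\clubs\vel\diamonds\vel\spades)\imp\phi_t^n\bigr)$, where $\phi_t^n$ is the $EL$-formula of Lemma~\ref{nbisimwitness} characterising $\nbisim{t}{n}^\Pi$; this lies in $EL^{\{\sq,\edge\}}$. Using $oneLabel$, $oneSuit$ and $edge$ one checks that $\psi$ holds at $s$, at the intermediate $\righ$- and $\lef$-worlds, and at $t$ (the last because $t\in\nbisim{t}{n}^\Pi$ and every non-centre world of $t$'s square satisfies the antecedent vacuously), whereas every $\clubs$-centre $\edge$-accessible from $s$ is deleted (by the contradiction hypothesis no such world lies in $\nbisim{t}{n}^\Pi$). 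Hence after the announcement $\know_\edge\neg\clubs$ holds at $s$ while the cycle $s\sim_\sq\cdot\sim_\edge\cdot\sim_\sq t$ survives with $M^\psi_t\models\clubs$ — exactly the situation forbidden by the $t_{ca}(\hearts,\clubs,\spades)$ conjunct of $ck_{ca}$. The other three ``directions'' in $U$ are obtained by the same substitutions as in Lemma~\ref{CB-AA}.

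For part~2 one follows the construction of Lemma~\ref{CB-AA}: build the chain $s\sim_\sq t_1\sim_\edge t_2\sim_\sq\cdots\sim_\edge t$ along the sixteen-step program, propagate $local$ to every world of the chain by repeatedly combining part~1 of this lemma with Lemma~\ref{modDepPres} (tracking the modal-depth budget $14,11,8,5$, since $local$ has modal depth $5$), and then — assuming no $v\sim_\sq t$ lies in $\nbisim{s}{n}^\Pi$ — announce a guarded version of $\hearts\imp\phi_s^n$ built from the finite family of Corollary~\ref{finitenbisim} (as in Lemma~\ref{CB-GA}) chosen so that $s$, all the chain worlds, and $t$ survive while the offending $\hearts$-centre of $t$'s square is deleted, yielding $M^\psi_t\models\know_\sq\neg\hearts$ and hence $M_s\models\neg c_{ca}(\hearts)$. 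Here lies the main obstacle and the genuine novelty over the APAL argument: in APAL the formula $\phi_t^n$ (resp.\ $\phi_s^n$) can be announced in isolation, but a guard $\know_\sq(\cdots)$ ties the survival of a suit-labelled centre to that of every world $\sq$-adjacent to it, and a guard $\know_\edge(\cdots)$ likewise for $\edge$-adjacency. Consequently in part~2 the bad centre cannot be removed by a $\sq$-guarded announcement — it is $\sq$-adjacent to the terminal world $t$ that must be kept — and one is forced to use an $\edge$-guarded announcement, using $local$ (principally $edge$, $oneLabel$, $oneSuit$, $adj$) to certify that the centre and the terminal $\up$-world of the cycle lie in distinct $\edge$-classes and that no cycle world is collateral damage. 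Checking these survival/deletion facts for the guarded announcements is the bulk of the work; everything else — the chain, the depth accounting, the appeals to Lemma~\ref{nbisimwitness}, Corollary~\ref{finitenbisim} and Lemma~\ref{modDepPres}, and the reductions to $M_s\models\hearts$ — is copied from the proof of Lemma~\ref{CB-AA}.
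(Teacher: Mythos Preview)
Your overall plan—mirror the proof of Lemma~\ref{CB-AA} and replace each bare announcement by an element of $EL^{\{\sq,\edge\}}$—is exactly the paper's, and your reduction of $\allcoal{\sq,\edge}$ to the two-agent group announcement is correct and is what the paper uses.

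However, your part~1 announcement $\psi=\know_\sq\bigl((\clubs\vel\diamonds\vel\spades)\imp\phi_t^n\bigr)$ has a gap. You justify $M_t\models\psi$ by ``$t\in\nbisim{t}{n}^\Pi$ and every non-centre world of $t$'s square satisfies the antecedent vacuously''. That covers $t$ itself and the direction worlds, but $\know_\sq(\cdots)$ demands the implication at \emph{every} $w\sim_\sq t$, in particular at every $\clubs$-centre of $t$'s square. Nothing in $local$ (neither $oneLabel$ nor $oneSuit$) forces a $\sq$-class to contain a unique centre, and two centres of one square need not be $n$-$\Pi$-bisimilar. If $t$'s square contains a second $\clubs$-centre $t'\notin\nbisim{t}{n}^\Pi$, then $(\clubs\vel\diamonds\vel\spades)\imp\phi_t^n$ fails at $t'$, so $\psi$ fails everywhere in that square—at $t$ and, crucially, at the $\lef$-world $t_2$—and the chain you need for the contradiction with $ck_{ca}$ does not survive the announcement. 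The paper guards with the \emph{other} agent: it takes the coalition announcement to be $\know_\edge\psi\et\know_\sq\top$. Since by $edge$ the $\edge$-class of any $\lef$- or $\righ$-world contains only direction worlds, the antecedent $\clubs$ is vacuous throughout those classes, so $t_1$ and $t_2$ survive irrespective of how many centres $t$'s square has.

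For part~2 the paper does not move to an $\edge$-guard as you propose; instead it modifies the body of the APAL formula, taking $\psi=\hearts\imp\susp_\sq\phi_s^n$ (note the inserted $\susp_\sq$). The point of the $\susp_\sq$ is that the contradiction hypothesis—no $v\sim_\sq t$ lies in $\nbisim{s}{n}^\Pi$—makes $\susp_\sq\phi_s^n$ fail \emph{uniformly} on $t$'s entire $\sq$-class, so every $\hearts$-centre of that square is deleted at once. Your appeal to ``a guarded version of $\hearts\imp\phi_s^n$ built from the finite family of Corollary~\ref{finitenbisim} (as in Lemma~\ref{CB-GA})'' does not transfer directly: the GAL version of part~2 has a genuinely different statement (the cycle there terminates one step earlier, at a $\down$-world, and the conclusion involves an extra $\sim_\edge$ step) precisely because the $\know_\sq$-guard used there cannot separate the terminal world from a centre in its own square—the very obstacle you correctly identify but do not resolve.
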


\begin{proof}\quad\\
\begin{enumerate}
\item[1.] 
As in the proof of Lemma~\ref{CB-AA}, we address the case for $t\in s R(\sq;\righ?;\edge;\lef?;\sq)$ 
and the cases of $t\in s R(\sq;\up?;\edge;\down?;\sq)$, $t\in s R(\sq;\lef?;\edge;\righ?;\sq)$ and $t\in s R(\sq;\down?;\edge;\up?;\sq)$ are done in a similar manner.
Also suppose, without loss of generality, that $M_s\models\hearts$. The following proof may be easily adapted to other suits.
Now suppose for contradiction that there is some $n\in \N$ and some $t\in s R(\sq;\righ?\edge;\lef?;\sq)$ where $M_t\models card$, and for all $v\sim_\edge s$, $v\notin\nbisim{t}{n}$. 
Then, by Lemma~\ref{nbisimwitness}, there is some epistemic formula $\phi_t^n$ such that for all $t'\in S$, $M_{t'}\models\phi_t^n$ if and only if $t'\in\nbisim{t}{n}$.
As for all $v\sim_\edge s$, we have $v\notin\nbisim{t}{n}$, it follows that$M_v\models\neg\phi_t^n$. Therefore $M_v\models\clubs\imp\neg\phi_t^n$ and thus $M_s\models\know_\edge(\clubs\imp\neg\phi_t^n$). 
(Introducing the antecedent $\clubs$ prevents the announcement of $\know_\edge\phi_t^n$ from removing the intermittent $\lef$ and $\righ$ worlds that we require to construct our contradiction.) 
Because $M_s\models local$, we have $M_s\models adj_\hearts$ so $M_t\models\clubs$. Therefore, in the model $M^\psi = (S^\psi, R^\psi, V^\psi)$ we have 
\begin{enumerate}
\item $s\in S^\psi$, since $M_s\models\hearts$ so $M_s\models\neg\clubs$ and the antecedent of $\psi$ is not satisfied.
\item for all $t'\in s R (\sq;\righ?)$, $t'\in S^\psi$, since $M_{t'}\models\righ$ so $M_{t'}\models\neg\clubs$ and the antecedent of $\psi$ is not satisfied.
\item for all $t'\in s R (\sq;\righ?;\edge;\lef?)$, $t'\in S^\psi$, since $M_{t'}\models\lef$ so the antecedent of $\psi$ is not satisfied.
\item for all $t'\sim_\edge s$, where $M_{t'}\models\clubs$, we have $t'\notin S^\psi$, since $t'\notin\nbisim{t}{n}$, so $M_{t'}\models\neg\phi_t^n$.
\item $t\in S^\psi$ since clearly $t\in\nbisim{t}{n}$ so $M_t\models\phi_t^n$.
\end{enumerate}
Therefore $M^\psi_s\models\know_\edge\neg\clubs\et\susp_\sq(\righ\et\susp_\edge(\lef\et\susp_\sq \clubs))$. 
As the coalition announcement may consist of $\know_\edge\psi$ and $\know_\sq \top$ we have 
$$M_s\models\hearts\et\somecoal{\sq,\edge}(\know_\edge\neg\clubs\et\susp_\sq(\righ\et\susp_\edge(\lef\et\susp_\sq \clubs)))$$
which is sufficient to contradict our assumption $M_s\models CB_{ca}\et\hearts$.

As before, the proofs for $t\in s R(\sq;\up?;\edge;\down?;\sq)$, $t\in s R(\sq;\lef?;\edge;\righ?;\sq)$ and $t\in s R(\sq;\down?;\edge;\up?;\sq)$ are done similarly.

\item[2.] For the final part of this proof we are required to show that 
for all $n$, for all $t\in s R(\sq;\righ?;\edge;\lef?;\sq;\up?;\edge;\down?;\sq;\lef?;\edge;\righ?;\sq;\down?;\edge;\up?)$, there is some $v\sim_\sq t$ such that $v\in\nbisim{n}{s}$.
We begin by considering any chain of worlds
$$s\sim_\sq t_1\sim_\edge t_2\sim_\sq t_3\sim_\edge t_4\sim_\sq t_5\sim_\edge t_6\sim_\sq t_7\sim_\edge t$$
where $t_1,t_6\in V(\righ)$, $t_2,t_5\in V(\lef)$, $t_3,t\in V(\up)$, and $t_4, t_7\in V(\down)$. 

We may show that from the first four parts of this lemma, each world in this chain satisfies the formula $local$ (\ref{local}), 
using the same argument from the proof of the fifth clause of Lemma~\ref{CB-AA}.
We will assume that $M_s\models\hearts$, and cases for other suits have similar proofs.
We are required to show for every such $t\in s R(\sq;\righ?;\edge;\lef?;\sq;\up?;\edge;\down?;\sq;\lef?;\edge;\righ?;\sq;\down?;\edge;\up?)$, 
for any given $n$, there is some $v\sim_\sq t$ such that $v\in \nbisim{s}{n}$. 
Suppose that this was not the case. 
Then there is some such $t$ and some $n$ such that for all $v\sim_\sq t$, $v\notin\nbisim{s}{n}$. 
From Lemma~\ref{nbisimwitness}, there is some formula, $\phi_s^n$, such that for all $w\in S$, $M_w\models\phi_s^n$ if and only if $w\in\nbisim{s}{n}^\Pi$.
Let $\psi$ be the formula $\hearts\imp\susp_\sq\phi_s^n$. Then:
\begin{enumerate}
\item $M_s\models\psi$. It is clear that $s$ is $n$-$\Pi$-bisimilar to itself, so $M_s\models\phi_s^n$.
\item $M_{t_i}\models\psi$, for $i = 1,\hdots,7$. Since $M_{t_i}\models local$, we have $M_{t_i}\models oneLabel$. We also have $M_{t_i}\models \lef\lor\righ\lor\up\lor\down$, so $M_{t_i}\not\models\hearts$. As the antecedent is not satisfied, we have $M_{t_i}\models\psi$
\item $M_t\models\psi$. This follows from the reasoning used in the previous case, since we have $M_t\models local$ and $M_t\models\up$.
\item for all $v\sim_\sq t$, if $M_v\models\hearts$, the $M_v\models\neg\psi$. Since if $M_v\models\susp_\sq\phi_s^n$ there would be some $v'\sim_\sq t$ such that $v'\in\nbisim{s}{n}$ contradicting our assumption.
\end{enumerate}
Therefore after announcing $\psi$, the worlds $s, t_1,\hdots, t_7$, and $t$ would all remain in the model $M^\psi = (S^\psi, R^\psi, V^\psi)$. 
However, for all $v\in t R_\sq\cap S^\psi$ we must have $M_v\models(\hearts\imp\phi)\et\neg\phi$, so it must be that $M_v\models\neg\hearts$.
As propositions are preserved by announcements, we have $M^\psi_t\models\know_\sq\neg\hearts$, and hence
\begin{equation}
M^\psi_s\models\susp_\sq(\righ\et(\susp_\edge(\lef\et\susp_\sq(\up\et\susp_\edge(\down\et\susp_\sq(\lef\et\susp_\edge(\righ\et\susp_\sq(\down\et\susp_\edge(\up\et\know_\sq\neg\hearts)\!)\!)\!)\!)\!)\!)\!)\!)
\end{equation}

It follows that $M_s\models\neg cyc_{ga}(\hearts)$ giving us the required contradiction. A similar argument can be given for each of the other card suits, completing the proof.
\end{enumerate}
\end{proof}

\section{Undecidability}\label{sec.together} 

In this section we show how the formulas $CB_{apa}$, $CB_{ga}$, and $CB_{ca}$ enforce the checkerboard structure of Figure~\ref{cbRep} making the satisfiability of $SAT_\Gamma$ (\ref{tiling}) equivalent to the existence of a $\Gamma$-tiling of the plane.

As the proof is complex, but uniform for the logics APAL, GAL and CAL, given the different variations of Lemmas~\ref{CB-AA},~\ref{CB-GA}~and~\ref{CB-CA}, we can give a single presentation of the remaining steps required.
To that end, we give a unified presentation of Lemma~\ref{CB-AA}, Lemma~\ref{CB-GA} and Lemma~\ref{CB-CA}.
\begin{corollary}\label{CB}
Suppose that $M = (S,\sim, V)$, $s\in S$ and either $M_s\models_{APAL} CB_{apa}\et card$, or $M_s\models_{GAL} CB_{ga}\et card$ or $M_s\models_{CAL} CB_{ca}\et card$. 
Let $U = s R(\sq;\righ?;\edge;\lef?;\sq) \cup s R(\sq;\up?;\edge;\down?;\sq) \cup s R(\sq;\lef?;\edge;\righ?;\sq)\cup s R(\sq;\down?;\edge;\up?;\sq)$.
Then, for all $s'\sim_\edge s$:
\begin{enumerate}
\item For all $n\in\N$, for all $t'\in U$, there is some $t\sim_\sq t'$ and some $v\sim_\edge s'$ such that $v\in\nbisim{t}{n}$.
\item For all $n\in\N$, for all $t \in s' R(\sq;\righ?;\edge;\lef?;\sq;\up?;\edge;\down?;\sq;\lef?;\edge;\righ?;\sq;\down?;\edge;\up?)$, there is some $v\sim_\sq t$ such that $v\in\nbisim{s'}{n}$. 
\end{enumerate}
\end{corollary}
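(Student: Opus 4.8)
The plan is to obtain Corollary~\ref{CB} directly from Lemmas~\ref{CB-AA},~\ref{CB-GA}~and~\ref{CB-CA}; the only real work is reconciling their three slightly different formulations with the single uniform statement and adding the outer quantifier over $s'\sim_\edge s$.

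First I would deal with that quantifier. For clause~1 it is almost free: since $\sim_\edge$ is an equivalence relation and $s'\sim_\edge s$, the conditions ``$v\sim_\edge s$'' and ``$v\sim_\edge s'$'' coincide, so once clause~1 is known at $s$ it holds verbatim with $s'$ in place of $s$. For clause~2 the base point genuinely matters (the conclusion names $\nbisim{s'}{n}$ and the reachable set is measured from $s'$), so here I would instead show $M_{s'}\models CB_\bullet\et card$ for every $s'\sim_\edge s$ and then apply the relevant lemma verbatim at $s'$. That $CB_\bullet$ transports along $\sim_\edge$ is the point where the S5 nature of the edge relation is used: $CB_{apa}$ is guarded by an outermost $\know_\edge$, and $CB_{ga}$, $CB_{ca}$ by $\know_\sq\know_\edge\know_\sq(\cdots)$, whose inner $\know_\edge\know_\sq(\cdots)$ already holds at $s$ by reflexivity of $\sim_\sq$; in either case positive introspection ($\know_\edge\phi\imp\know_\edge\know_\edge\phi$) and symmetry push the relevant content to all of $[s]_\edge$, and one checks that only this content — together with the consequences of the conjuncts of $local$, $cyc_\bullet$, $ck_\bullet$, each of which is itself guarded by $\know_\sq\know_\edge\cdots$ and hence holds throughout the $\{\sq,\edge\}$-reachable region — is what the proofs of the lemmas actually use. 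The same guards give $M_{s'}\models card$ via the conjunct $oneSuit$ of $local$.

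Next I would handle the repackaging. For clause~1, two observations suffice: $U$ is closed under $\sim_\sq$ (each of its four defining programs ends in a $\sq$-step), and by the conjunct $oneSuit$ of $local$ every $\sq$-class meeting $U$ contains a world satisfying $card$. Hence, given $t'\in U$, choose $t\sim_\sq t'$ with $M_t\models card$; then $t\in U$, and clause~1 of the appropriate lemma applied to $t$ produces $v\sim_\edge s$ with $v\in\nbisim{t}{n}$ — exactly the corollary's conclusion. (For GAL, Lemma~\ref{CB-GA} already states clause~1 in this ``$\exists t\sim_\sq t'$'' shape, so there is nothing to do.) For clause~2, the APAL and CAL lemmas state it already in the corollary's form (up to renaming the base point), so it follows from the transport argument above; for GAL, Lemma~\ref{CB-GA}'s clause~2 has the defining program one $\edge$-step shorter and a trailing $\edge;\up?$ step pushed into the conclusion, which I would reconcile by absorbing that step: given $t$ in the corollary's reachable set, write $t$ as the $\edge;\up?$-successor of some $t_0$ in Lemma~\ref{CB-GA}'s set, and re-run that lemma's contradiction argument with the negated hypothesis ``for all $v\sim_\sq t$, $v\notin\nbisim{s'}{n}$'', announcing $\know_\sq(X\imp\know_\sq\bigvee_a\phi_a)$ with $X$ the suit of $s'$, to contradict the matching conjunct of $cyc_{ga}$.

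The step I expect to be the main obstacle is the middle one: pinning down exactly which fragment of $CB_\bullet$ survives at $\edge$-neighbours of $s$ and checking that the heavily nested conjuncts of $local$, $cyc_\bullet$ and $ck_\bullet$ really do propagate far enough — the convenient ``$\know_\sq\know_\edge\know_\sq\phi$ holds at every world'' reading is only valid once the grid structure is in place, so the propagation must be justified purely from the S5 axioms and those modal guards. Everything else — matching the three lemma statements and threading the $oneSuit$ and $\sim_\sq$-closure observations through clause~1 — is routine bookkeeping.
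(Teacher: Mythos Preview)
Your proposal is correct and follows essentially the same route as the paper: derive the corollary from Lemmas~\ref{CB-AA},~\ref{CB-GA},~\ref{CB-CA} by (i) propagating $CB_\bullet$ along $\sim_\edge$ via its outermost $\know_\edge$-guard to handle the quantifier over $s'$, and (ii) reconciling clause~1 by producing a $card$-world in each target $\sq$-class. The paper invokes $adj$ for step~(ii) where you invoke $oneSuit$, and it does not spell out the GAL clause~2 path-length mismatch that you explicitly patch by re-running the $cyc_{ga}$ contradiction; both differences are cosmetic.
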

\begin{proof}
The GAL case follows from Lemma~\ref{CB-GA}, observing that $CB_{ga}$ is known by agent $\edge$. Additionally, the APAL and CAL cases follow from the fact that $M_s\models adj$ (see (\ref{adj})) implies that if $M_s\models card$ then there is always at least one $t\in s R(\sq;\righ?;edge;\lef?;sq)$, such that $M_t\models card$ (and similarly for the other directions).
\end{proof}

The main construction is given in the following two lemmas.
\begin{lemma}\label{sat2tile}
Suppose that $M_s\models SAT_\Gamma\et\hearts$ and either $M_s\models_{APAL} CB_{apa}$, or $M_s\models_{GAL} CB_{ga}$ or $M_s\models_{CAL} CB_{ca}$.
Then $\Gamma$ can tile the plane.
\end{lemma}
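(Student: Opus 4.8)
### Proof proposal

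The plan is to extract from the satisfying model $M_s$ a tiling function $\lambda:\N\times\N\to\Gamma$ by building a grid of $\sq$-equivalence classes (``squares''), one for each point $(i,j)\in\N\times\N$, and reading off the tile at each square from the colour labels on its $\up,\righ,\down,\lef$ worlds. The key point is that Corollary~\ref{CB} (the unified version of Lemmas~\ref{CB-AA}, \ref{CB-GA}, \ref{CB-CA}) gives us, \emph{up to arbitrarily high $n$-bisimilarity}, both the cycle-closure property and the common-knowledge property, and since $SAT_\Gamma\in\lang_{el}$ has bounded modal depth, $n$-bisimilarity for large enough $n$ preserves everything we need to check.

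\textbf{Step 1: locate the squares.} Starting from $s$ (which satisfies $\hearts\et card\et local$), use the $local$ component — specifically $oneSuit$ and $edge$ — to see that $[s]_\sq$ contains an $\up$-world, a $\down$-world, a $\lef$-world, a $\righ$-world, and a centre world, and that the $\edge$-agent, from a $\righ$-world, reaches only $\lef$-worlds of ``the next square''. Concretely, define the square at $(i{+}1,j)$ as the $\sq$-class reached from the square at $(i,j)$ via $\sq;\righ?;\edge;\lef?;\sq$, and the square at $(i,j{+}1)$ via $\sq;\up?;\edge;\down?;\sq$; the $edge$ formula guarantees such worlds exist (when we are not against a boundary, but since we always move right/up from the origin this is fine — or rather, we must check that $\susp_\edge\righ$ etc.\ propagate, which $edge$ ensures). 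For each such square pick a representative centre world $w_{(i,j)}$; by the $adj$ formulas the suits of neighbouring squares follow the checkerboard pattern, so $w_{(i,j)}$ carries a well-defined suit, and by $oneLabel\et oneSuit$ its four edge worlds carry well-defined colours in $C$, giving a candidate tile $\lambda(i,j)\in\Gamma$ via $tile_\Gamma$.

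\textbf{Step 2: consistency of the assignment (the grid closes).} We must check $\lambda$ is well-defined: the square reached from $(0,0)$ by going right-then-up must be the ``same'' as the one reached by up-then-right, and more generally any two paths with the same endpoint in $\N\times\N$ must land on squares whose centre worlds assign the same tile. This is exactly what the cycle clause (part 2 of Corollary~\ref{CB}) provides: going $\righ,\up,\lef,\down$ returns us to a square whose centre is $n$-bisimilar to $s$, for every $n$. Taking $n$ larger than the modal depth of $SAT_\Gamma$ and using Lemma~\ref{modDepPres} together with Lemma~\ref{nbisimwitness}, the returned centre satisfies the same $\lang_{el}$ formulas of that depth as $s$ — in particular the same suit and, reaching into its edge worlds, the same colours. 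By an induction on path length (each unit cycle commutes ``up to $n$-bisimilarity for all $n$'', hence up to logical equivalence for formulas of bounded depth), this pins down $\lambda(i,j)$ independently of the path, so $\lambda$ is a genuine function $\N\times\N\to\Gamma$.

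\textbf{Step 3: the matching conditions.} Finally we verify conditions (1) and (2) of Definition~\ref{def:tiling}. For the horizontal condition: from the centre $w_{(i,j)}$, its $\righ$-world $r$ has a colour $\gamma^{r}$; by the $match$ formula $\edge$ knows the colour at $r$, and since $\edge$ cannot distinguish $r$ from the $\lef$-world $\ell$ of the square at $(i{+}1,j)$, the colour at $\ell$ equals the colour at $r$; by $tile_\Gamma$ at $w_{(i{+}1,j)}$, that colour is $\lambda(i{+}1,j)^{\ell}$. Hence $\lambda(i,j)^r=\lambda(i{+}1,j)^\ell$. The vertical condition $\lambda(i,j)^u=\lambda(i,j{+}1)^d$ is identical with $\up/\down$ in place of $\righ/\lef$. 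Part 1 of Corollary~\ref{CB} is what makes $match$ usable throughout: it guarantees the edge worlds of every reachable square are $n$-bisimilar (for all $n$) to $\edge$-accessible worlds of $s$, so $match$ — which is $\know_\sq\know_\edge\know_\sq$-prefixed — genuinely forces ``$\edge$ knows the colour'' at each square we use.

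\textbf{Main obstacle.} The delicate part is Step 2: the grid only closes \emph{up to $n$-bisimilarity}, not on the nose, so one cannot literally say ``going around a cycle returns to the same world.'' The argument must be phrased entirely in terms of bounded-depth formula preservation — choosing, once and for all, $n$ exceeding the modal depth of $SAT_\Gamma$ (and of the finitely many auxiliary formulas used to read off suit and colours), and then arguing that every square in the grid satisfies $SAT_\Gamma$'s depth-$n$ consequences because it is $n$-bisimilar to some $\edge$- or $\sq$-neighbour of $s$, which does satisfy them. Managing the bookkeeping of which $n$ suffices, and ensuring the bisimilarity degrees do not degrade as we traverse longer paths (they do not, because at each step we re-anchor to an $\edge$-neighbour of $s$ via Corollary~\ref{CB} with a fresh, arbitrarily large $n$), is where the real care is needed; the matching conditions in Step 3 are then comparatively routine consequences of $match$, $adj$, and $tile_\Gamma$.
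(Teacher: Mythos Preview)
Your proposal has the right architecture and you correctly identify the main obstacle, but the resolution you offer for it does not work, and one essential ingredient is missing.

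The problematic claim is that bisimilarity degrees ``do not degrade as we traverse longer paths'' because ``at each step we re-anchor to an $\edge$-neighbour of $s$ via Corollary~\ref{CB} with a fresh, arbitrarily large $n$.'' This is not right. Part~1 of Corollary~\ref{CB} says that \emph{for each} $n$ there is some $v_n\sim_\edge s$ that is $n$-bisimilar to the current square's centre; it does \emph{not} produce a single $v$ that works for all $n$ simultaneously. So the world you would assign to $(i,j)$ after re-anchoring depends on $n$. Likewise, part~2 only guarantees that a right-up-left-down traverse returns to a world $n$-bisimilar to the start, for each $n$ separately; chaining these cycle closures along a long path consumes modal depth at every step (three per $\sq$/$\edge$ alternation), so a path of length $k$ relates endpoints only up to roughly $(n-3k)$-bisimilarity. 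You cannot, therefore, fix one $n$ ``once and for all'' and get a single well-defined map $(i,j)\mapsto w_{(i,j)}$ on all of $\N\times\N$: there may be genuinely different squares reachable by different paths to $(i,j)$, carrying different tiles, and nothing in the model forces a canonical choice.

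The paper's proof handles this by abandoning the attempt to build one grid of worlds. Instead, for each $n$ it builds a map $\tau_n$ only on the finite triangle $P_n=\{(i,j):i+j\leq n\}$, with the bisimilarity budget explicitly tracked and allowed to degrade (to $9(n-(i+j))$ at distance $i+j$); this yields a \emph{partial} tiling $\lambda_n:P_n\to\Gamma$ for every $n$. The step you are missing entirely is the final one: from the infinite family $\{\lambda_n\}_{n\in\N}$ of finite partial tilings one extracts a single total tiling $\lambda:\N\times\N\to\Gamma$ by a K\"onig's-lemma/compactness argument, using that $\Gamma$ is finite (so at each grid point some tile is chosen infinitely often, and one refines along an infinite chain of index sets). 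Without this compactness step the argument is incomplete, and your Step~2 as written does not substitute for it.
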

\begin{proof}
Given the model $M = (S, R, V)$ and the state $s$, we construct a $\Gamma$-tiling of the plane $\N\times\N$.
To do this, we inductively define a series of maps from the points in the plane $\N\times\N$ to worlds in $S$. 
From this series we are able to build a complete tiling. Let $P_n = \{(i,j)\ |\ i+j\leq n\}$.
We show that there exists a map $\tau_n:P_n\longrightarrow S$ for each $n\in\N$ such that:
\begin{itemize}
\item $\tau_n(0,0) = s$.
\item For all $i,j\in \N$, if $i+j\leq n$, then $\tau_n(i,j)\sim_\edge s$.
\item For all $i,j\in \N$, if $n>0$ then there exists $t\in \tau_n(i,j)R(\sq;\lef?;\edge;\righ?;\sq)$ such that $\tau_n(i+1,j)\in\nbisim{t}{9(n-(i+j))}^\Pi$ 
\item For all $i,j\in \N$, if $n>0$ then there exists $t\in\tau_n(i,j)R(\sq;\up?;\edge;\down?;\sq)$ such that $\tau_n(i,j+1)\in\nbisim{t}{9(n-(i+j))}^\Pi$
\end{itemize}
We show such a construction exists for each $n$ by induction. 
The base of the induction is $\tau_n(0,0) = s$, and for the inductive step, given $(i,j)$, we suppose that $\tau_n(i',j')$ has been defined for all $(i',j')$ where $i'+j'<i+j$.
Let $m = n-(i+j)$.
There are three inductive cases to consider:
\begin{enumerate}
\item $i=0$: In this case, suppose that $\tau_n(0,j-1) = s'$. Since $s'\sim_\edge s$ we can apply the second clause of Corollary~\ref{CB} to find some $t\in s' R(\sq;\up?;\edge;\down?;\sq)$, 
and some $v\sim_\edge s'$ such that $v\in\nbisim{t}{9m}$. We set $\tau_n(0,j) = v$. Since $v\sim_\edge s'\sim_\edge s$, by transitivity we have $v\sim_\edge s$. 
Since there is some $t\in\tau_n(0,j-1)R(\sq;\up?;\edge;\down?;\sq)$ such that $\tau_n(i,j)\in\nbisim{t}{9m}$, the inductive conditions are met.
\item $j=0$: In this case, suppose that $\tau_n(i-1,0) = s'$. Since $s'\sim_\edge s$ we can apply the third clause of Corollary~\ref{CB} to find some $t\in s' R(\sq;\lef?;\edge;\righ?;\sq)$, 
and some $v\sim_\edge s'$ such that $v\in\nbisim{t}{9m}$. We set $\tau_n(i,0) = v$. Since $v\sim_\edge s'\sim_\edge s$, by transitivity we have $v\sim_\edge s$. 
Since there is some $t\in\tau_n(0,j-1)R(\sq;\lef?;\edge;\righ?;\sq)$ such that $\tau_n(i,j)\in\nbisim{t}{9m}$, the inductive conditions are met.
\item $i\neq 0$ and $j\neq 0$: In this case suppose that $\tau_n(i-1,j-1) = s'$, $\tau_n(i,j-1) = s_\up$ and $\tau_n(i-1,j) = s_\lef$. By the inductive hypothesis we have that
\begin{itemize}
\item there is some $v'\in s'R(\sq;\lef?;\edge;\righ?;\sq)$ such that $s_\lef\in\nbisim{v'}{9(m+1)}$.
\item there is some $w'\in s'R(\sq;\up?;\edge;\down?;\sq)$ such that $s_\up\in\nbisim{w'}{9(m+1)}$. 
\end{itemize}
Since there is some $v'\in s'R(\sq;\lef?;\edge;\righ?\sq)$ such that $s_\lef\in\nbisim{v'}{9(m+1)}$, 
then there must be some $t'\in s_\lef R(\sq;\righ?;\edge;\lef?;\sq)$ such that $t'\in\nbisim{s'}{9m+6}$.
We apply the last clause of Corollary~\ref{CB} to $s_\lef$, so that for all $t\in s_\lef R(\sq;\righ?;\edge;\lef?;\sq;\up?;\edge;\down?;\sq;\lef?;\edge;\righ?;\sq;\down?;\edge;\up?)$ 
such that there is some $v\sim_\sq t$ where $v\in\nbisim{s_\lef}{9m}$.
Therefore there must be some chain:
$$s_\lef\sim_\sq t_1\sim_\edge t_2\sim_\sq t_3\sim_\edge t_4\sim_\sq t_5\sim_\edge t_6\sim_\sq t_7\sim_\edge t_8$$
such that:
\begin{enumerate}
\item $M_{t_1}\models\righ$, $M_{t_2}\models\lef$ and $t'\sim_\sq t_2$. Since $t'\in s_\lef R(\sq;\righ?\edge;\lef?;\sq)$ and the conditions of $local$ (\ref{local}) are met everywhere, we can find a chain that passes through $t'$.
\item $M_{t_3}\models\up$, $M_{t_4}\models\down$ and there is some $t_\up\sim_\sq t_4$ such that $t_\up\in\nbisim{s_\up}{9m+3}$. 
Since $t'\in\nbisim{s'}{9m+6}$ and there is some $w'\in s' R(\sq;\up?;\edge;\down?;\sq)$ such that $s_\up\in\nbisim{w'}{9(m+1)}$, 
there must be some $t_\up\in t' R(\sq;\up?;\edge;\down?;\sq)$ such that $t_\up\in\nbisim{v'}{9m+3}$. Since $n$-bisimilarity is transitive, we have $t_\up\in\nbisim{s_\up}{9m+3}$.
\item $M_{t_5}\models\lef$, $M_{t_6}\models\righ$, and there is some $t'\sim_\sq t_6$ such that for some $s^*\sim_\edge s$, $t'\in\nbisim{s^*}{9m}$. 
By the third clause of Corollary~\ref{CB}, for every $n$, for every $v\in s_\up R(\sq;\lef?;\edge;\righ?)$ there is some $s^*\sim_\sq v$ such that $v'\in\nbisim{s}{n}$. 
Therefore there is some such $v$ where $v\in\nbisim{t_6}{9m+1}$, and thus there must be some $t^*\sim_\sq t_6$ where $t^*\in\nbisim{s^*}{9m}$. 
We let such a $s^*$ be $\tau_n(i,j)$.
\item $M_{t_7}\models\down$, $M_{t_8}\models\up$ and there is some $v\sim_\sq t_8$ such that $v\in\nbisim{s_\lef}{9m+3}$. Since $v\sim_\sq t_8\sim_\edge t_7\sim_\sq t^*$, 
and $t^*\in\nbisim{s^*}{9m}$, there must be some $w\in s_\lef R(\up?;\edge;\down?;\sq)$ such that $s^*\in\nbisim{w}{9m}$, as required for the induction.
\end{enumerate}
\end{enumerate}
The construction is depicted in Figure~\ref{induction}.
\begin{figure}
\scalebox{0.6}{
\begin{picture}(0,0)%
\includegraphics{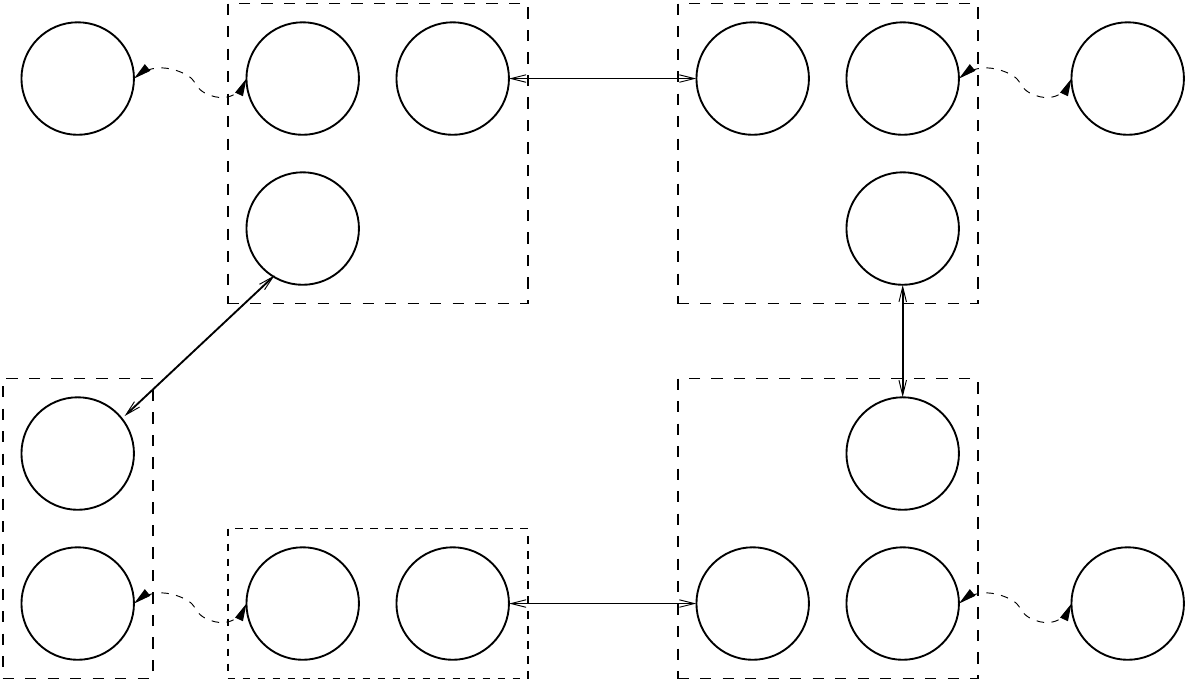}%
\end{picture}%
\setlength{\unitlength}{3947sp}%
\begingroup\makeatletter\ifx\SetFigFont\undefined%
\gdef\SetFigFont#1#2#3#4#5{%
  \reset@font\fontsize{#1}{#2pt}%
  \fontfamily{#3}\fontseries{#4}\fontshape{#5}%
  \selectfont}%
\fi\endgroup%
\begin{picture}(9487,5444)(1779,-6983)
\put(9601,-2536){\makebox(0,0)[lb]{\smash{{\SetFigFont{12}{14.4}{\rmdefault}{\mddefault}{\updefault}{\color[rgb]{0,0,0}$9m+3$}%
}}}}
\put(3976,-6436){\makebox(0,0)[lb]{\smash{{\SetFigFont{12}{14.4}{\rmdefault}{\mddefault}{\updefault}{\color[rgb]{0,0,0}$s_\lef$}%
}}}}
\put(10726,-6436){\makebox(0,0)[lb]{\smash{{\SetFigFont{12}{14.4}{\rmdefault}{\mddefault}{\updefault}{\color[rgb]{0,0,0}$s'$}%
}}}}
\put(10726,-2236){\makebox(0,0)[lb]{\smash{{\SetFigFont{12}{14.4}{\rmdefault}{\mddefault}{\updefault}{\color[rgb]{0,0,0}$s^\up$}%
}}}}
\put(2326,-2236){\makebox(0,0)[lb]{\smash{{\SetFigFont{12}{14.4}{\rmdefault}{\mddefault}{\updefault}{\color[rgb]{0,0,0}$s^*$}%
}}}}
\put(5326,-6436){\makebox(0,0)[lb]{\smash{{\SetFigFont{12}{14.4}{\rmdefault}{\mddefault}{\updefault}{\color[rgb]{0,0,0}$t_1$}%
}}}}
\put(7651,-6511){\makebox(0,0)[lb]{\smash{{\SetFigFont{12}{14.4}{\rmdefault}{\mddefault}{\updefault}{\color[rgb]{0,0,0}$t_2$}%
}}}}
\put(9001,-5236){\makebox(0,0)[lb]{\smash{{\SetFigFont{12}{14.4}{\rmdefault}{\mddefault}{\updefault}{\color[rgb]{0,0,0}$t_3$}%
}}}}
\put(8926,-3436){\makebox(0,0)[lb]{\smash{{\SetFigFont{12}{14.4}{\rmdefault}{\mddefault}{\updefault}{\color[rgb]{0,0,0}$t_4$}%
}}}}
\put(7651,-2236){\makebox(0,0)[lb]{\smash{{\SetFigFont{12}{14.4}{\rmdefault}{\mddefault}{\updefault}{\color[rgb]{0,0,0}$t_5$}%
}}}}
\put(5326,-2236){\makebox(0,0)[lb]{\smash{{\SetFigFont{12}{14.4}{\rmdefault}{\mddefault}{\updefault}{\color[rgb]{0,0,0}$t_6$}%
}}}}
\put(4126,-3436){\makebox(0,0)[lb]{\smash{{\SetFigFont{12}{14.4}{\rmdefault}{\mddefault}{\updefault}{\color[rgb]{0,0,0}$t_7$}%
}}}}
\put(2326,-5161){\makebox(0,0)[lb]{\smash{{\SetFigFont{12}{14.4}{\rmdefault}{\mddefault}{\updefault}{\color[rgb]{0,0,0}$t_8$}%
}}}}
\put(4126,-2236){\makebox(0,0)[lb]{\smash{{\SetFigFont{12}{14.4}{\rmdefault}{\mddefault}{\updefault}{\color[rgb]{0,0,0}$t^*$}%
}}}}
\put(8926,-6361){\makebox(0,0)[lb]{\smash{{\SetFigFont{12}{14.4}{\rmdefault}{\mddefault}{\updefault}{\color[rgb]{0,0,0}$t'$}%
}}}}
\put(8926,-2236){\makebox(0,0)[lb]{\smash{{\SetFigFont{12}{14.4}{\rmdefault}{\mddefault}{\updefault}{\color[rgb]{0,0,0}$t^\up$}%
}}}}
\put(2326,-6436){\makebox(0,0)[lb]{\smash{{\SetFigFont{12}{14.4}{\rmdefault}{\mddefault}{\updefault}{\color[rgb]{0,0,0}$v$}%
}}}}
\put(9676,-6661){\makebox(0,0)[lb]{\smash{{\SetFigFont{12}{14.4}{\rmdefault}{\mddefault}{\updefault}{\color[rgb]{0,0,0}$9m+6$}%
}}}}
\put(3001,-6736){\makebox(0,0)[lb]{\smash{{\SetFigFont{12}{14.4}{\rmdefault}{\mddefault}{\updefault}{\color[rgb]{0,0,0}$9m+3$}%
}}}}
\put(3001,-2461){\makebox(0,0)[lb]{\smash{{\SetFigFont{12}{14.4}{\rmdefault}{\mddefault}{\updefault}{\color[rgb]{0,0,0}$9m$}%
}}}}
\end{picture}%
}
\caption{The constructions for $\tau_n$. The curved, dashed lines indicate $n$-$\Pi$-bisimilarity, where $n$ is indicated below the line. }\label{induction}
\end{figure}
Therefore, we may assume the existence of the function $\tau_n$ for all $n$ and given the set of mappings $\tau_n$, we are now able to construct a tiling for $P^n$, for every $n$. 
For each $n\in \N$ we define a function $\lambda_n: P^n\longrightarrow\Gamma$, where for each $(i,j)\in P^n$, $\lambda_n(i,j) = \gamma$ where
$$M_{\tau_n(i,j)}\models \know_\sq(\up\imp\gamma^\up\et\down\imp\gamma^\down\et\lef\imp\gamma^\lef\et\righ\imp\gamma^\righ).$$ 
By the construction of $\tau_n$, it follows that the conditions of the tiling are met. For example, if $\lambda_n(i,j) = \gamma$, then
$M_{\tau_n(i,j)}\models\know_\sq(\up\imp\know_\edge(\gamma^\up))$. Since there is $t\in\tau_n(i,j) R(\sq;\up?;\edge;\down?;\sq)$ where $\tau_n(i,j+1)\in\nbisim{t}{1}$, we have
$M_{\tau_n(i,j+1)}\models\know_\sq(\down\imp\gamma^\up)$. This way we can show all the sides of the tiles in the partial tilings $\lambda_n$ match.

We are now able to build a complete tiling $\lambda:\N\times\N\longrightarrow\Gamma$ 
by enumerating $\N\times\N$ as $a_0,a_1,...$ where $a_0 = (0,0)$. 
We give an inductive definition of $\lambda$, whereby we suppose that $\lambda^n:\{a_0,\hdots,a_n\}\longrightarrow\Gamma$ where $\lambda^0(0,0)= \gamma$ such that  
$$M_s\models \know_\sq(\up\imp\gamma^\up\et\down\imp\gamma^\down\et\lef\imp\gamma^\lef\et\righ\imp\gamma^\righ).$$ 
Our induction is built on the hypothesis that for each $n$ there is an infinite subset $X^n\subseteq\N$ such that for all $x\in X$, for all $i\leq n$, $\lambda_x(a_i) = \lambda^n(a_i)$.
This is clearly true for $n = 0$, where $X^0 = \N$. Given the hypothesis holds for $n-1$, we set $\lambda^{n}(a_i) = \lambda^{n-1}(a_i)$ for all $i<n$. 
Now for each $x\in X^{n-1}$, $\lambda_x(a_n)$ may be a tile of $\Gamma$, or may be undefined. However, if $a_n = (i,j)$, $\lambda_x(a_n)$ may only be undefined if $x<i+j$. 
As there can only be a finite number of such cases, and $X^{n-1}$ is infinite, it follows that, $\lambda_x(a_n)\in\Gamma$ for infinitely many $x$. 
As $\Gamma$ is a finite set, there must be some $\gamma\in\Gamma$ such that $\lambda_x(a_n) = \gamma$ for all $x\in X'$ where $X'$ is an infinite subset of $X^{n-1}$.
We let $\lambda^n(a_n) = \gamma$ and $X^n = X'$, and the induction hypothesis holds.

As $\lambda^n$ is a monotonically increasing function it has a well-defined limit: $\lambda(i,j) = \gamma$ where for some $x\in \N$, $\lambda^x(i,j) = \gamma$.
As the sides of the tiles all match, this is sufficient to show that $\Gamma$ can tile the plane. 
\end{proof}

For the converse direction we are able to give a more direct proof.
\begin{lemma}\label{tile2sat}
Suppose that $\Gamma$ can tile the plane. Then there is some model $M = (S, R, V)$ and some state $s\in S$ such that  
\begin{enumerate}
\item $M_s\models_{APAL} SAT_\Gamma\et CB_{apa}\et\hearts$
\item $M_s\models_{GAL} SAT_\Gamma\et CB_{ga}\et\hearts$
\item $M_s\models_{CAL} SAT_\Gamma\et CB_{ca}\et\hearts$
\end{enumerate}
\end{lemma}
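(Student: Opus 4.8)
The plan is to build, directly from a tiling $\lambda:\N\times\N\to\Gamma$, the ``checkerboard'' model $M$ of Figure~\ref{cbRep} and check by inspection that the required formula holds at the centre world of the square $(0,0)$; a single model will serve all three logics. Concretely, the worlds of $M$ are the pairs $((i,j),x)$ with $(i,j)\in\N\times\N$ and $x\in\{\up,\down,\lef,\righ,\centr\}$; a directional world $((i,j),x)$ with $x\in\{\up,\down,\lef,\righ\}$ satisfies the label $x$ and the single colour $\lambda(i,j)^x\in C$, while a centre world $((i,j),\centr)$ satisfies one fixed colour $c_0\in C$ together with the suit $\sigma(i,j)$, where $\sigma$ is the regular four-colouring of the plane from Figure~\ref{checkerboard} ($\hearts$ when $i,j$ are both even, $\clubs$ when $i$ is odd and $j$ even, $\spades$ when $i$ is even and $j$ odd, $\diamonds$ when both are odd). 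The relation $\sim_\sq$ is the equivalence relation whose classes are the five worlds of each square; $\sim_\edge$ is the equivalence relation generated by linking $((i,j),\righ)$ with $((i{+}1,j),\lef)$, linking $((i,j),\up)$ with $((i,j{+}1),\down)$, and linking all centre worlds with one another. Put $s=((0,0),\centr)$.

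With $M$ fixed, the bulk of the proof is routine verification. First $M_s\models\hearts$ since $\sigma(0,0)=\hearts$, and $M_s\models SAT_\Gamma$ (see~(\ref{tiling})): $oneCol$ and $tile_\Gamma$ are read straight off the valuation, while $match$ holds because two $\sim_\edge$-linked worlds always carry the same colour --- centres all carry $c_0$, and the inter-square links agree precisely by the two matching conditions of Definition~\ref{def:tiling}. Next, since $\know_\edge\know_\sq$ (resp.\ $\know_\sq\know_\edge\know_\sq$) reaches every world of $M$ from $s$, verifying $CB_{apa}$ (resp.\ $CB_{ga}$, $CB_{ca}$) reduces to checking $local$, $cyc_{*}$ and $ck_{*}$ at every world. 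For $local$ (see~(\ref{local})): $oneLabel$ and $oneSuit$ are immediate; $adj$ (see~(\ref{adj})) holds because $\sigma$ realises exactly the prescribed suit-adjacencies (a $\hearts$-square has a $\clubs$-square to its right and a $\spades$-square above, and so on); and $edge$ (see~(\ref{edge})) holds because at every world at least one of its three disjuncts is satisfied --- at a $\lef$- or $\righ$-world the $\up/\down$ disjunct has a false antecedent, so that implication is vacuously true; dually at an $\up/\down$-world the $\lef/\righ$ disjunct is; and at a centre world the non-implicational third disjunct holds since the centres form a single $\sim_\edge$-class containing all four suits.

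The one step requiring genuine care is $cyc_{*}$ and $ck_{*}$, where quantified announcements appear. The key observation is uniform across the three logics: any announcement $\psi$ (an arbitrary ${EL}$-formula for APAL, a $\know_\sq$-formula or $\know_\edge$-formula for GAL, a conjunction $\know_\sq\chi\et\know_\edge\chi'$ for CAL --- recalling that ${EL}^{\emptyset}=\{\top\}$, so $\allcoal{\sq,\edge}$ collapses to a universal quantifier over ${EL}^{\{\sq,\edge\}}$-announcements) that is \emph{truthfully makeable} at a centre $c(i,j)$ keeps that centre in $M^\psi$; moreover, for the group and coalition announcements by a set containing the relevant agent it also keeps the whole $\sq$-class $c(i,j)$ lies in, and keeps every centre. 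Hence in $cyc_{*}$ the concluding $\susp_\sq X$ along the right--up--left--down cycle is witnessed by the very world $c(i,j)$ at which evaluation began, which survives $\psi$ automatically, and any earlier world of the cycle that $\psi$ kills only makes the corresponding $\know_\sq$- or $\know_\edge$-subformula vacuously true, so no part of the chain need survive beyond what $\psi$ permits. For $ck_{*}$, if the adjacent square's $Y$-suited centre survived $\psi$ then, being a centre, it lies in the $\sim_\edge$-class of $c(i,j)$ and falsifies the antecedent $\know_\edge\neg Y$; and if it did not survive, $\know_\sq\neg Y$ holds there vacuously. Carrying out this argument in each of the three semantic styles --- in particular the APAL case, where $\psi$ may separate $\sim_\sq$-equivalent worlds and one must check that killing a cycle world other than $c(i,j)$ always trivialises the relevant subformula --- completes the verification and hence the lemma. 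I expect this last bookkeeping, rather than any conceptual difficulty, to be the main obstacle.
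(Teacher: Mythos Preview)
Your proposal is correct and follows essentially the same approach as the paper: the model you build is exactly the paper's canonical checkerboard model (the paper writes $(i,j,k)$ with $k\in\{\up,\down,\lef,\righ,mid\}$ and takes $s=(0,0,mid)$), and your verification of $SAT_\Gamma$, $local$, $cyc_{*}$ and $ck_{*}$ proceeds by the same case analysis---in particular, the paper also argues that killing any intermediate world $t_1,\ldots,t_8$ along the cycle trivialises the relevant box, while the starting centre always survives to witness the final $\susp_\sq X$. Your observation that $\allcoal{\sq,\edge}$ collapses to quantification over $EL^{\{\sq,\edge\}}$-announcements (since the complement coalition is empty) is exactly how the paper handles the CAL case as well.
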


\begin{proof}
Suppose that $\Gamma$ can tile the plane via the function $\lambda:\N\times\N\longrightarrow\Gamma$.
We build the model $M = (S, R, V)$ where:
\begin{itemize}
\item $S = \N\times\N\times\{\up,\down,\lef,\righ, mid\}$, so the state $(i,j,k)$ represents the $k$ side of the tile at the square $(i,j)$, where $k=mid$ represents the center of the tile.
\item $R_\sq = \{((i,j,k),(i',j',k'))\ |\ i = i'\ \text{and}\ j = j'\}$
\item $(i,j,\up)R_\edge = \{(i,j,\up),(i,j+1,\down)\}$
\item $(i,j,\down)R_\edge = \{(i,j,\down),(i,j-1,\up)\}$ or $\{(i,j,\down)\}$ if $j = 0$.
\item $(i,j,\lef)R_\edge = \{(i,j,\lef),(i-1,j,\righ)\}$ of $\{(i,j,\lef)\}$ if $i = 0$.
\item $(i,j,\righ)R_\edge = \{(i,j,\righ),(i+1,j,\lef)\}$
\item $(i,j,mid)R_\edge = \{(i',j',mid)\ | (i',j')\in \N\times\N\}$
\item for all colours $c\in C$, $V(c) = \{(i,j,k)\ |\ \lambda(i,j)^k = c\}$ (We assume that $\lambda(i,j)^{mid}$ is always a fixed colour, say {\it white}.)
\item for all directions $k\in\{\up,\down,\lef,\righ\}$, $V(k) = \{(i,j,k')\ |\ k' = k\}$.
\item $V(\hearts) = \{(i,j,mid)\ |\ i\textrm{ is even, and }j\textrm{ is even}\}$
\item $V(\clubs) = \{(i,j,mid)\ |\ i\textrm{ is odd, and }j\textrm{ is even}\}$
\item $V(\diamonds) = \{(i,j,mid)\ |\ i\textrm{ is odd, and }j\textrm{ is odd}\}$
\item $V(\spades) = \{(i,j,mid)\ |\ i\textrm{ is even, and }j\textrm{ is odd}\}$
\end{itemize}
\begin{figure}
\scalebox{0.5}{
\begin{picture}(0,0)%
\includegraphics{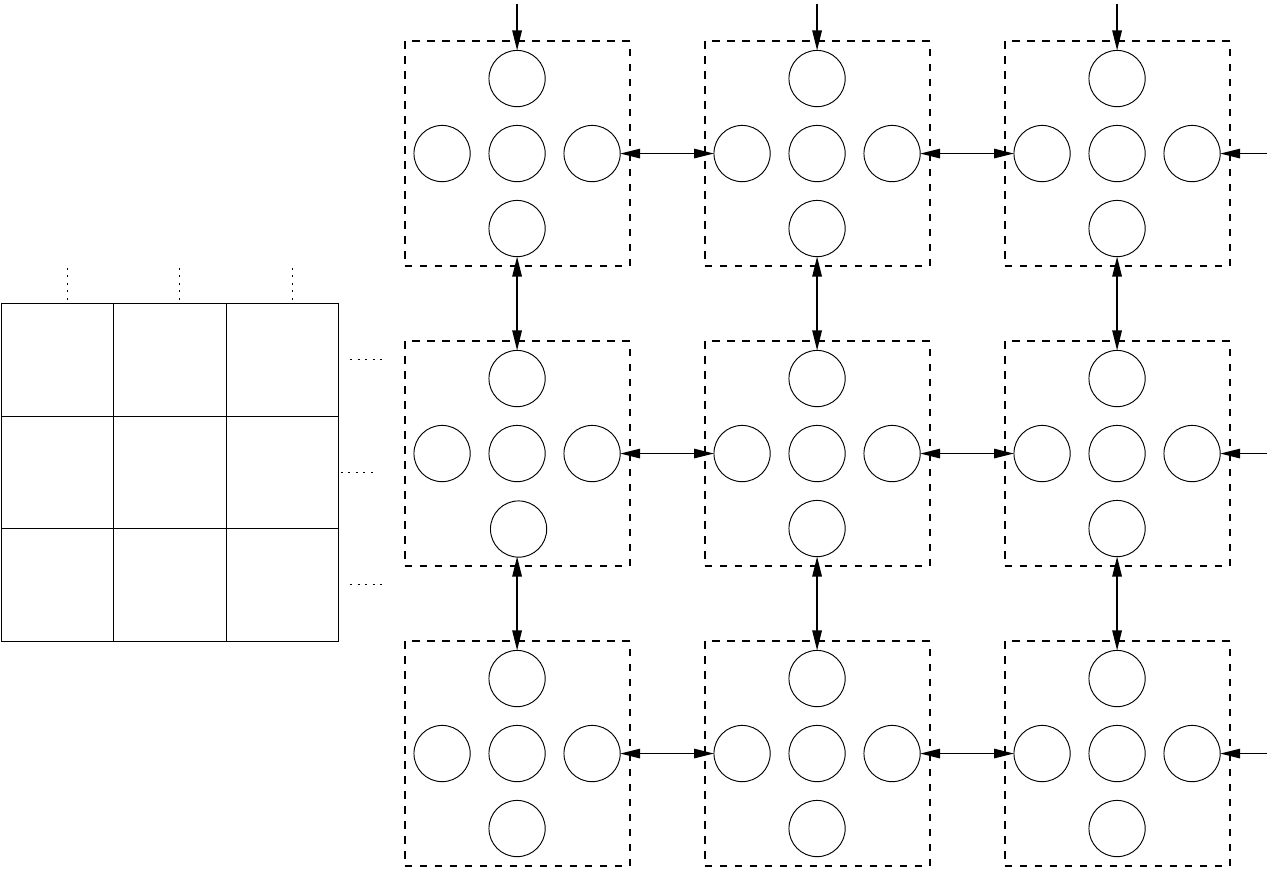}%
\end{picture}%
\setlength{\unitlength}{3947sp}%
\begingroup\makeatletter\ifx\SetFigFont\undefined%
\gdef\SetFigFont#1#2#3#4#5{%
  \reset@font\fontsize{#1}{#2pt}%
  \fontfamily{#3}\fontseries{#4}\fontshape{#5}%
  \selectfont}%
\fi\endgroup%
\begin{picture}(10159,6944)(-2036,-8183)
\put(4426,-6736){\makebox(0,0)[lb]{\smash{{\SetFigFont{12}{14.4}{\rmdefault}{\mddefault}{\updefault}{\color[rgb]{0,0,0}$\gamma_2^\up$}%
}}}}
\put(-1724,-4186){\makebox(0,0)[lb]{\smash{{\SetFigFont{12}{14.4}{\rmdefault}{\mddefault}{\updefault}{\color[rgb]{0,0,0}$\gamma_7$}%
}}}}
\put(-1724,-5086){\makebox(0,0)[lb]{\smash{{\SetFigFont{12}{14.4}{\rmdefault}{\mddefault}{\updefault}{\color[rgb]{0,0,0}$\gamma_4$}%
}}}}
\put(-1724,-5986){\makebox(0,0)[lb]{\smash{{\SetFigFont{12}{14.4}{\rmdefault}{\mddefault}{\updefault}{\color[rgb]{0,0,0}$\gamma_1$}%
}}}}
\put(-824,-5986){\makebox(0,0)[lb]{\smash{{\SetFigFont{12}{14.4}{\rmdefault}{\mddefault}{\updefault}{\color[rgb]{0,0,0}$\gamma_2$}%
}}}}
\put( 76,-5986){\makebox(0,0)[lb]{\smash{{\SetFigFont{12}{14.4}{\rmdefault}{\mddefault}{\updefault}{\color[rgb]{0,0,0}$\gamma_3$}%
}}}}
\put(-824,-5086){\makebox(0,0)[lb]{\smash{{\SetFigFont{12}{14.4}{\rmdefault}{\mddefault}{\updefault}{\color[rgb]{0,0,0}$\gamma_5$}%
}}}}
\put( 76,-5086){\makebox(0,0)[lb]{\smash{{\SetFigFont{12}{14.4}{\rmdefault}{\mddefault}{\updefault}{\color[rgb]{0,0,0}$\gamma_6$}%
}}}}
\put(-824,-4186){\makebox(0,0)[lb]{\smash{{\SetFigFont{12}{14.4}{\rmdefault}{\mddefault}{\updefault}{\color[rgb]{0,0,0}$\gamma_8$}%
}}}}
\put( 76,-4186){\makebox(0,0)[lb]{\smash{{\SetFigFont{12}{14.4}{\rmdefault}{\mddefault}{\updefault}{\color[rgb]{0,0,0}$\gamma_9$}%
}}}}
\put(4426,-7336){\makebox(0,0)[lb]{\smash{{\SetFigFont{12}{14.4}{\rmdefault}{\mddefault}{\updefault}{\color[rgb]{0,0,0}$\clubs$}%
}}}}
\put(4426,-4936){\makebox(0,0)[lb]{\smash{{\SetFigFont{12}{14.4}{\rmdefault}{\mddefault}{\updefault}{\color[rgb]{0,0,0}$\diamonds$}%
}}}}
\put(2026,-4936){\makebox(0,0)[lb]{\smash{{\SetFigFont{12}{14.4}{\rmdefault}{\mddefault}{\updefault}{\color[rgb]{0,0,0}$\spades$}%
}}}}
\put(6826,-7336){\makebox(0,0)[lb]{\smash{{\SetFigFont{12}{14.4}{\rmdefault}{\mddefault}{\updefault}{\color[rgb]{0,0,0}$\hearts$}%
}}}}
\put(6826,-2536){\makebox(0,0)[lb]{\smash{{\SetFigFont{12}{14.4}{\rmdefault}{\mddefault}{\updefault}{\color[rgb]{0,0,0}$\hearts$}%
}}}}
\put(2026,-7936){\makebox(0,0)[lb]{\smash{{\SetFigFont{12}{14.4}{\rmdefault}{\mddefault}{\updefault}{\color[rgb]{0,0,0}$\gamma_1^\down$}%
}}}}
\put(1426,-7336){\makebox(0,0)[lb]{\smash{{\SetFigFont{12}{14.4}{\rmdefault}{\mddefault}{\updefault}{\color[rgb]{0,0,0}$\gamma_1^\lef$}%
}}}}
\put(2026,-6736){\makebox(0,0)[lb]{\smash{{\SetFigFont{12}{14.4}{\rmdefault}{\mddefault}{\updefault}{\color[rgb]{0,0,0}$\gamma_1^\up$}%
}}}}
\put(2026,-7336){\makebox(0,0)[lb]{\smash{{\SetFigFont{12}{14.4}{\rmdefault}{\mddefault}{\updefault}{\color[rgb]{0,0,0}$\hearts$}%
}}}}
\put(2551,-7336){\makebox(0,0)[lb]{\smash{{\SetFigFont{12}{14.4}{\rmdefault}{\mddefault}{\updefault}{\color[rgb]{0,0,0}$\gamma_1^\righ$}%
}}}}
\put(3826,-7336){\makebox(0,0)[lb]{\smash{{\SetFigFont{12}{14.4}{\rmdefault}{\mddefault}{\updefault}{\color[rgb]{0,0,0}$\gamma_2^\lef$}%
}}}}
\put(4426,-7936){\makebox(0,0)[lb]{\smash{{\SetFigFont{12}{14.4}{\rmdefault}{\mddefault}{\updefault}{\color[rgb]{0,0,0}$\gamma_2^\down$}%
}}}}
\put(6826,-7936){\makebox(0,0)[lb]{\smash{{\SetFigFont{12}{14.4}{\rmdefault}{\mddefault}{\updefault}{\color[rgb]{0,0,0}$\gamma_3^\down$}%
}}}}
\put(6826,-6736){\makebox(0,0)[lb]{\smash{{\SetFigFont{12}{14.4}{\rmdefault}{\mddefault}{\updefault}{\color[rgb]{0,0,0}$\gamma_3^\up$}%
}}}}
\put(7426,-7336){\makebox(0,0)[lb]{\smash{{\SetFigFont{12}{14.4}{\rmdefault}{\mddefault}{\updefault}{\color[rgb]{0,0,0}$\gamma_3^\righ$}%
}}}}
\put(2026,-5536){\makebox(0,0)[lb]{\smash{{\SetFigFont{12}{14.4}{\rmdefault}{\mddefault}{\updefault}{\color[rgb]{0,0,0}$\gamma_4^\down$}%
}}}}
\put(1426,-4936){\makebox(0,0)[lb]{\smash{{\SetFigFont{12}{14.4}{\rmdefault}{\mddefault}{\updefault}{\color[rgb]{0,0,0}$\gamma_4^\lef$}%
}}}}
\put(2026,-4336){\makebox(0,0)[lb]{\smash{{\SetFigFont{12}{14.4}{\rmdefault}{\mddefault}{\updefault}{\color[rgb]{0,0,0}$\gamma_4^\up$}%
}}}}
\put(4426,-4336){\makebox(0,0)[lb]{\smash{{\SetFigFont{12}{14.4}{\rmdefault}{\mddefault}{\updefault}{\color[rgb]{0,0,0}$\gamma_5^\up$}%
}}}}
\put(4426,-5536){\makebox(0,0)[lb]{\smash{{\SetFigFont{12}{14.4}{\rmdefault}{\mddefault}{\updefault}{\color[rgb]{0,0,0}$\gamma_5^\down$}%
}}}}
\put(5026,-4936){\makebox(0,0)[lb]{\smash{{\SetFigFont{12}{14.4}{\rmdefault}{\mddefault}{\updefault}{\color[rgb]{0,0,0}$\gamma_5^\righ$}%
}}}}
\put(2626,-4936){\makebox(0,0)[lb]{\smash{{\SetFigFont{12}{14.4}{\rmdefault}{\mddefault}{\updefault}{\color[rgb]{0,0,0}$\gamma_4^\righ$}%
}}}}
\put(3826,-4936){\makebox(0,0)[lb]{\smash{{\SetFigFont{12}{14.4}{\rmdefault}{\mddefault}{\updefault}{\color[rgb]{0,0,0}$\gamma_5^\lef$}%
}}}}
\put(4426,-1936){\makebox(0,0)[lb]{\smash{{\SetFigFont{12}{14.4}{\rmdefault}{\mddefault}{\updefault}{\color[rgb]{0,0,0}$\gamma_8^\up$}%
}}}}
\put(1951,-3136){\makebox(0,0)[lb]{\smash{{\SetFigFont{12}{14.4}{\rmdefault}{\mddefault}{\updefault}{\color[rgb]{0,0,0}$\gamma_7^\down$}%
}}}}
\put(2026,-1936){\makebox(0,0)[lb]{\smash{{\SetFigFont{12}{14.4}{\rmdefault}{\mddefault}{\updefault}{\color[rgb]{0,0,0}$\gamma_7^\up$}%
}}}}
\put(4426,-3136){\makebox(0,0)[lb]{\smash{{\SetFigFont{12}{14.4}{\rmdefault}{\mddefault}{\updefault}{\color[rgb]{0,0,0}$\gamma_8^\down$}%
}}}}
\put(4951,-2536){\makebox(0,0)[lb]{\smash{{\SetFigFont{12}{14.4}{\rmdefault}{\mddefault}{\updefault}{\color[rgb]{0,0,0}$\gamma_8^\righ$}%
}}}}
\put(6826,-4336){\makebox(0,0)[lb]{\smash{{\SetFigFont{12}{14.4}{\rmdefault}{\mddefault}{\updefault}{\color[rgb]{0,0,0}$\gamma_6^\up$}%
}}}}
\put(6826,-5536){\makebox(0,0)[lb]{\smash{{\SetFigFont{12}{14.4}{\rmdefault}{\mddefault}{\updefault}{\color[rgb]{0,0,0}$\gamma_6^\down$}%
}}}}
\put(7426,-4936){\makebox(0,0)[lb]{\smash{{\SetFigFont{12}{14.4}{\rmdefault}{\mddefault}{\updefault}{\color[rgb]{0,0,0}$\gamma_6^\righ$}%
}}}}
\put(6826,-3136){\makebox(0,0)[lb]{\smash{{\SetFigFont{12}{14.4}{\rmdefault}{\mddefault}{\updefault}{\color[rgb]{0,0,0}$\gamma_9^\down$}%
}}}}
\put(6826,-1936){\makebox(0,0)[lb]{\smash{{\SetFigFont{12}{14.4}{\rmdefault}{\mddefault}{\updefault}{\color[rgb]{0,0,0}$\gamma_9^\up$}%
}}}}
\put(6226,-2536){\makebox(0,0)[lb]{\smash{{\SetFigFont{12}{14.4}{\rmdefault}{\mddefault}{\updefault}{\color[rgb]{0,0,0}$\gamma_9^\lef$}%
}}}}
\put(7426,-2536){\makebox(0,0)[lb]{\smash{{\SetFigFont{12}{14.4}{\rmdefault}{\mddefault}{\updefault}{\color[rgb]{0,0,0}$\gamma_9^\righ$}%
}}}}
\put(2551,-2536){\makebox(0,0)[lb]{\smash{{\SetFigFont{12}{14.4}{\rmdefault}{\mddefault}{\updefault}{\color[rgb]{0,0,0}$\gamma_7^\righ$}%
}}}}
\put(2026,-2536){\makebox(0,0)[lb]{\smash{{\SetFigFont{12}{14.4}{\rmdefault}{\mddefault}{\updefault}{\color[rgb]{0,0,0}$\hearts$}%
}}}}
\put(1426,-2536){\makebox(0,0)[lb]{\smash{{\SetFigFont{12}{14.4}{\rmdefault}{\mddefault}{\updefault}{\color[rgb]{0,0,0}$\gamma_7^\lef$}%
}}}}
\put(5026,-7336){\makebox(0,0)[lb]{\smash{{\SetFigFont{12}{14.4}{\rmdefault}{\mddefault}{\updefault}{\color[rgb]{0,0,0}$\gamma_2^\righ$}%
}}}}
\put(6226,-7336){\makebox(0,0)[lb]{\smash{{\SetFigFont{12}{14.4}{\rmdefault}{\mddefault}{\updefault}{\color[rgb]{0,0,0}$\gamma_3^\lef$}%
}}}}
\put(6226,-4936){\makebox(0,0)[lb]{\smash{{\SetFigFont{12}{14.4}{\rmdefault}{\mddefault}{\updefault}{\color[rgb]{0,0,0}$\gamma_6^\lef$}%
}}}}
\put(4426,-2536){\makebox(0,0)[lb]{\smash{{\SetFigFont{12}{14.4}{\rmdefault}{\mddefault}{\updefault}{\color[rgb]{0,0,0}$\clubs$}%
}}}}
\put(6826,-4936){\makebox(0,0)[lb]{\smash{{\SetFigFont{12}{14.4}{\rmdefault}{\mddefault}{\updefault}{\color[rgb]{0,0,0}$\clubs$}%
}}}}
\put(3826,-2536){\makebox(0,0)[lb]{\smash{{\SetFigFont{12}{14.4}{\rmdefault}{\mddefault}{\updefault}{\color[rgb]{0,0,0}$\gamma_8^\lef$}%
}}}}
\end{picture}%
}
\caption{A representation of the construction used for Lemma~\ref{tile2sat}. The edge relations between the centers of all squares have been omitted for clarity.}\label{cb2tile}
\end{figure}

This construction is presented in Figure~\ref{cb2tile}, and realizes the construction presented in Figure~\ref{cbRep}.
We can now confirm that the constructed model satisfies the necessary formulas. Let $s = (0,0,m)$. Then
\begin{enumerate}
\item $M_s\models oneCol$ (\ref{oneCol}). Since the $edge$-relation links the centers of all squares, and every world belongs to a square, this requires every world $t\in S$ to satisfy a unique colour atom $c\in C$. This is guaranteed by the definition of $V(c)$.
\item $M_s\models tile_\Gamma$ (\ref{tile-gamma}). Since the $\edge$-relation links the centers of all squares this requires that at the center of every square, the formula (\ref{tile-gamma}) requires that for every square, there is some tile $\gamma$ such that any state labelled by the direction $k$ in that square satisfies the colour proposition $\gamma^k$. As the square relations relates worlds $(i,j,k)$ and $(i',j',k')$ where $i = i'$ and $j=j'$, worlds in the same square correspond to to the same tile $\lambda(i,j)$ and are coloured accordingly.
\item $M_s\models match$ (\ref{match}). The $\edge$-relation connects the centers of all squares, so (\ref{match}) requires that at every world in every square, the edge agent knows the colour. Let the given world be $t= (i,j,k)$. There are five cases to consider:
  \begin{enumerate}
  \item If the world is a center world (i.e. $k = mid$) then it must be coloured white. As all center worlds are connected by $\edge$ only to center worlds, at a center world, the edge agent knows the colour is white.
  \item If the world is a right world ($k = \righ$), then its colour is $\lambda(i,j)^\righ$. As $t R_\edge=\{(i,j,\righ),(i+1,j,\lef)\}$ we have for all $t'\in t R_\edge$, the colour of $t'$ is either $\lambda(i,j)^\righ$ or $\lambda(i+1,j,\lef)$. From Definition~\ref{def:tiling} it follows that $\lambda(i,j)^\righ = \lambda(i+1,j)^\lef$, so the colour is known to agent $\edge$.
  \item Similar arguments can be given where $k = \lef,\ \up,$ or $\down$. In the case $k = \lef$ and $i = 0$, or $k = \down$ and $j =0$, we note that $t R_\edge$ is a singleton so the result follows trivially.
  \end{enumerate}
\item $M_s\models SAT_\Gamma$ (\ref{tiling}). This follows from the previous three cases. 
\item $M_s\models \know_\edge\know_\sq local$ (\ref{local}). As $local$ is a purely epistemic formula, it is straightforward to check that all the required properties are satisfied by $M_s$.
\item $M_s\models \know_\edge\know_\sq cyc_{apa}$ (\ref{CBA}). 
Suppose $t = (i,j,mid)\in S$ and $M_t\models\hearts$. After every possible announcement, $\psi$, that can be made there are two possibilities. If 
$t_0 = t$, $t_1=(i,j,\righ)$, $t_2 = (i+1,j,\lef)$, $t_3=(i+1,j,\up)$, $t_4 = (i+1,j+1,\down)$, $t_5= (i+1,j+1,\lef)$, $t_6= (i,j+1,\righ)$, $t_7=(i,j+1,\down)$, and $t_8=(i,j,\up)$ all remain in $S^\psi$, in which case $\{t_1\} = V(\righ)\cap t_0 R_\sq$, $\{t_2\} = V(\lef)\cap t_1 R_\edge$, $\{t_3\} = V(\up)\cap t_2 R_\sq$, $\{t_4\} = V(\down)\cap t_3 R_\edge$, $\{t_5\} = V(\lef)\cap t_4 R_\sq$, $\{t_6\} = V(\righ)\cap t_5 R_\edge$, $\{t_7\} = V(\down)\cap t_6 R_\sq$,  $\{t_8\} = V(\up)\cap t_7 R_\edge$ and $t\in t_8 R_\sq$. 
Alternatively one of the worlds $t_0,\hdots,t_8$ does not remain in $S^\psi$, so one of $V(\righ)\cap t_0 R^\psi_\sq$, $V(\lef)\cap t_1 R^\psi_\edge$, $V(\up)\cap t_2 R^\psi_\sq$, $V(\down)\cap t_3 R^\psi_\edge$, $V(\lef)\cap t_4 R^\psi_\sq$, $V(\righ)\cap t_5 R^\psi_\edge$, $V(\down)\cap t_6 R^\psi_\sq$,  $V(\up)\cap t_7 R^\psi_\edge$ or $t_8 R^\psi_\sq$ is empty. In either case we have
$$\begin{array}{l}
M^\psi_t\models\know_\sq(\righ\imp\know_\edge(\lef\imp\know_\sq(\up\imp\know_\edge(\down\imp\know_\sq(\lef\imp\qquad\quad\\
              \hfill\know_\edge(\righ\imp\know_\sq(\down\imp\know_\edge(\up\imp\susp_\sq \hearts)))))))).
\end{array}
$$
Similar arguments can be given for the other suits, so it follows that for all $t\in S$, $M_t\models cyc_{apa}$.
\item $M_t\models ck_{apa}$ (\ref{CBA}). 
Suppose $t = (i,j,mid)\in S$ and $M_t\models\hearts$. Suppose that there is some announcement $\psi$ such that for $t'\in V(\clubs)$, $t'\notin t R^\psi_\edge$. 
As $v = (i+1,j,mid)\sim_\edge t$, and $v\in V(\clubs)$, we must have $v\notin S^\psi$. Therefore if $t_0 = t$, $t_1=(i,j,\righ)$, 
and $t_2 = (i+1,j,\lef)$ remain in $S^\psi$ then, by the same reasoning as in the previous case, 
we will have $M^\psi_t\models\know_\sq(\righ\imp\know_\edge(\lef\imp\know_\sq\neg\clubs))$. 
If any of the worlds $t_0,\ t_1$, or $t_2$ do not remain in $S^\psi$ then the formula is vacuously true. 
Similar arguments can be given for the other suits, so for all $t\in S$, $M_t\models ck_{apa}$.  
\item $M_s\models \know_\edge\know_\sq cyc_{ga}$ (\ref{CBG}). 
Suppose that $M_t\models\hearts$. Suppose that $\psi$ is some announcement that is known to agent $\square$ at $t$.  
Let $t_0 = t$, $t_1=(i,j,\righ)$, $t_2 = (i+1,j,\lef)$, $t_3=(i+1,j,\up)$, $t_4 = (i+1,j+1,\down)$, $t_5= (i+1,j+1,\lef)$, $t_6= (i,j+1,\righ)$, $t_7=(i,j+1,\down)$, and $t_8=(i,j,\up)$.
As agent-$\sq$ knows $\psi$, but cannot distinguish $t$ from $t_1$ and $t_8$, it follows that $t_0,\ t_1$ and $t_8$ remain in $S^\psi$. Regardless of whether the other states also remain in $S^\psi$, we have $M_t\models cyc_{ga}$ in much that same way that $M_t\models cyc_{apa}$. 
\item $M_s\models \know_\edge\know_\sq ck_{ga}$ (\ref{CBG}). Suppose that $t = (i,j,mid)$ and $M_t\models\hearts$. As agent $\edge$ cannot distinguish any of the worlds $(i',j',mid)$ from $t$, any announcement $\psi$ that is known to agent $\edge$ must preserve all of those worlds. Therefore after making such an announcement, regardless of whether $(i,j,\righ)$ and $(i+1,j,\lef)$ remain, we have $M^\psi_t\models\know_\sq(\lef\imp\know_\edge(\righ\imp\susp_\sq\clubs$. Similar arguments can be given for other directions and suits so we have $M_t\models ck_{ga}$.
\item $M_s\models \know_\edge\know_\sq cyc_{ca}$ (\ref{CBC}).
Suppose that $M_t\models\hearts$. Suppose that $\psi$ is a conjunction of announcements that are known to agents $\square$ or $\edge$ at $t$.  
Let $t_0 = t$, $t_1=(i,j,\righ)$, $t_2 = (i+1,j,\lef)$, $t_3=(i+1,j,\up)$, $t_4 = (i+1,j+1,\down)$, $t_5= (i+1,j+1,\lef)$, $t_6= (i,j+1,\righ)$, $t_7=(i,j+1,\down)$, and $t_8=(i,j,\up)$.
As the announcement of $\psi$ can only possibly remove these worlds, it follows that $M_t\models cyc_{ca}$, in the same way that $M_t\models cyc_{apa}$. 
\item $M_s\models \know_\edge\know_\sq ck_{ca}$ (\ref{CBC}). Suppose that $t = (i,j,mid)$ and $M_t\models\hearts$. Let $\psi$ be an conjunction of announcements that are known to either agent $\sq$ or agent $\edge$. If $S^\psi\cap V(\clubs)$ is empty, then $t' = (i+1,j,mid)\notin S^\psi$. Therefore, $t R^\psi(\sq;\righ?;\edge;\lef;\sq;\clubs?)$ is empty, so $M^\psi_t\models\know_\sq(\righ\imp\know_\edge(\lef\imp\know_\sq\neg\clubs))$. Similar arguments can be given for other directions and suits so the result follows.
\end{enumerate}
As we have show that $M_s$ satisfies all the necessary sub-formulas, the result follows.
\end{proof}

\begin{theorem}
The satisfiability problem for  APAL, GAL and CAL is undecidable, provided that there is more than one agent in the system. In the case of a single agent, all logics are decidable. 
\end{theorem}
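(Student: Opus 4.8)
The plan is to reduce the $\N\times\N$ tiling problem of Definition~\ref{def:tiling}, which is undecidable (indeed co-RE complete) by \cite{berger:1966,harel:1986}, to the satisfiability problem of each of the three logics, carrying out the reduction first for exactly the two agents $\sq,\edge$ used throughout Section~\ref{sec.grid} and then handling larger agent sets and the single-agent case separately. Given a finite tile set $\Gamma$, I would associate the formulas
\[
\Phi^{apa}_\Gamma = SAT_\Gamma\et CB_{apa}\et\hearts,\qquad
\Phi^{ga}_\Gamma = SAT_\Gamma\et CB_{ga}\et\hearts,\qquad
\Phi^{ca}_\Gamma = SAT_\Gamma\et CB_{ca}\et\hearts,
\]
each effectively computable from $\Gamma$. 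The key equivalence, that $\Gamma$ tiles $\N\times\N$ if and only if $\Phi^{L}_\Gamma$ is satisfiable in $L$ (for $L$ ranging over APAL, GAL, CAL), is exactly the conjunction of Lemma~\ref{tile2sat} (a tiling yields a satisfying model) and Lemma~\ref{sat2tile} (a satisfying model yields a tiling). Undecidability of the tiling problem then gives undecidability of $L$-satisfiability. Since APAL and GAL have complete, recursively axiomatizable validity problems, their validity is RE, so this reduction in fact shows their satisfiability problems are co-RE complete; for CAL, which has no known axiomatization, we obtain plain undecidability.

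To pass from exactly two agents to an arbitrary finite agent set $\Agents$ with $|\Agents|\ge 2$, I would keep the same formulas $\Phi^L_\Gamma$ (which mention only $\sq$ and $\edge$) and equip every agent $c\in\Agents\setminus\{\sq,\edge\}$ with the identity accessibility relation. Then $\know_c\phi$ is equivalent to $\phi$, so the set of $EL$-announcements (for APAL), the set of $EL^G$-formulas for $G\subseteq\{\sq,\edge\}$ (for GAL), and the set of coalition pairs for the coalition $\{\sq,\edge\}$ (for CAL) are, up to logical equivalence, the same as in the two-agent language. Consequently Lemma~\ref{sat2tile}, which holds of any satisfying model, and Lemma~\ref{tile2sat}, to whose constructed model we simply adjoin the identity agents, both go through unchanged, and satisfiability remains undecidable for every fixed $|\Agents|\ge 2$.

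For the single-agent case, decidability is inherited from a collapse theorem. By \cite{balbianietal:2008}, single-agent APAL is expressively equivalent to $EL$ via an effective translation, and the same argument applies to single-agent GAL and CAL, since with one agent the group and coalition quantifiers range over $\know_a$-formulas and collapse in the same way. As single-agent $EL$ is simply $\mathrm{S5}$ for one agent, whose satisfiability problem is decidable, composing with these effective translations yields decidability of satisfiability for single-agent APAL, GAL and CAL.

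I expect the only delicate points in writing this up to be (i) verifying that the identity-agent padding genuinely does not enlarge the relevant quantifier ranges --- in particular that it cannot produce a new announcement (or new $EL^G$-formula, or new coalition move) that falsifies one of the universally quantified conjuncts of $cyc_L$ or $ck_L$ in the Lemma~\ref{tile2sat} direction --- and (ii) confirming that the cited single-agent collapse is stated, or can be re-derived, as an effective (computable) translation rather than as a bare expressivity statement, so that decidability actually transfers.
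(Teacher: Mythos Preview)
Your proposal is correct and follows essentially the same route as the paper: invoke Lemmas~\ref{sat2tile} and~\ref{tile2sat} for the reduction from the tiling problem, and cite the single-agent collapse result of \cite{balbianietal:2008} for decidability when $|\Agents|=1$. The paper's own proof is terser---it does not spell out the passage from two agents to arbitrarily many, nor the co-RE completeness remark---so your added discussion of identity-agent padding and of the effectiveness of the single-agent translation goes slightly beyond what the paper records, but in the same spirit and without introducing any new ideas.
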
 
\begin{proof}
Lemmas~\ref{sat2tile}~and~\ref{tile2sat} show that the satisfiability problems for APAL,  GAL, and CAL are equivalent to an undecidable tiling problem and therefore are not recursively enumerable. 
The single agent case is given as Proposition 18 of \cite{balbianietal:2008} (the arbitrary announcement operator is then definable, so that the logic is equally expressive as the base epistemic logic), and the single agent fragments of GAL and CAL can be shown similarly.
\end{proof}

\section{Future work} \label{sec.future}
The previous sections give an essentially negative result: we are unable to determine the satisfiability of formulas in APAL, GAL, and CAL.
However, the core purposes of these logics are to determine what agents can achieve by sharing knowledge, and this is still a question of great interest.
The undecidability proof given here does not use an intrinsic property of knowledge sharing, but rather exploits the power of quantifying over language.
With this in mind, we consider the question of what other logics could successfully reason about knowledge sharing agents.
We list some of the possible answers below:
\begin{itemize}
\item In the logics presented in this paper we assumed that accessibility relations are equivalence relations, and correspondingly we quantify over {\em knowledge} formulas, not over {\em beliefs}. Related to this, the quantifications are over {\em truthful} announcements, and not over announcements that are merely believed to be true, but that may be mistaken or even intentionally false (as in lies). The logics presented in this paper can be straightforwardly generalized to such scenarios, by the dual strategy of $(i)$ considering arbitrary accessibility relations (or, for example KD45 accessibility relations, for modelling consistent beliefs), and $(ii)$ not using the standard model restriction semantics of public announcements, but the `arrow restriction'  (accessibility relation restricting) semantics for public announcements of \cite{gerbrandy:1999}.
\item One approach could be to consider generalizations of announcements as the medium of informative updates.
By replacing the public announcements in arbitrary public announcement logic with refinements \cite{hvdetal.felax:2010,bozzellietal.inf:2014}
it is possible to define a decidable logic that may be used to reason about the quantification of all informative updates (and not merely public updates: announcements). This logic is called refinement modal logic (RML).
The essential difference between  APAL  and RML is, that while  APAL  has an operator that quantifies over all public announcements, RML has an operator that quantifies over all refinements.  Recent work \cite{hales2013arbitrary} presents a logic that quantifies over event models \cite{baltagetal:1998}, and it is shown there that this so-called arbitrary action model logic is equally expressive as refinement modal logic. One can in fact synthesize an action model from the formula supposedly true after a refinement. Hales' approach can clearly be adapted to quantify over group event models, and this would be a valuable generalization of group announcement logic. Like RML, such a logic would then be decidable.
\item 
A way towards a decidable arbitrary public announcement logic may be to only allow quantification over positive formulas. 
In a {\em positive formula} all $\know_\agent$ operators are in the scope of an even number of negations. 
The positive formulas are also known as the {\em universal fragment}.
This means agents may announce what they know, and what they know other agents know, and so on.
However, they may not announce what they don't know, or what they know other agents don't know.
The appealing thing about these announcements is that once they are made, they will always remain true, so they have a monotonic nature.
This suggests that a logic of arbitrary (or group) positive announcements may be more computationally amenable than GAL. However, there may well be various other roads leading to a decidable arbitrary public announcement logic. For example,  only to quantify over formulas that have a restricted epistemic (modal) depth, or variations of first-order epistemic logics with explicit quantifiers as in \cite{belardinellietal:2015}. Yet another way to decidability is to consider arbitrary announcements on specific domains, as in the recent \cite{charrieretal.aamas:2015}.
\end{itemize}     

\bibliographystyle{abbrv}

\begin{thebibliography}{10}

\bibitem{agotnes:synthese-06}
T.~{\AA}gotnes.
\newblock Action and knowledge in alternating-time temporal logic.
\newblock {\em Synthese (Special Section on Knowledge, Rationality and
  Action)}, 149(2):377--409, 2006.

\bibitem{agotnesetal.jal:2010}
T.~{\AA}gotnes, P.~Balbiani, H.~van Ditmarsch, and P.~Seban.
\newblock Group announcement logic.
\newblock {\em Journal of Applied Logic}, 8:62--81, 2010.

\bibitem{agotnesetal:2008}
T.~{\AA}gotnes and H.~van Ditmarsch.
\newblock Coalitions and announcements.
\newblock In {\em Proc.\ of 7th AAMAS}, pages 673--680. IFAAMAS, 2008.

\bibitem{agotnesetal:2014}
T.~{\AA}gotnes, H.~van Ditmarsch, and T.~French.
\newblock The undecidability of group announcements.
\newblock In {\em Proc.\ of the 13th AAMAS}, pages 893--900, 2014.

\bibitem{Alur2002}
R.~Alur, T.~A. Henzinger, and O.~Kupferman.
\newblock Alternating-time temporal logic.
\newblock {\em Journal of the ACM}, 49:672--713, 2002.

\bibitem{balbianietal:2008}
P.~Balbiani, A.~Baltag, H.~van Ditmarsch, A.~Herzig, T.~Hoshi, and T.~D. Lima.
\newblock `{K}nowable' as `known after an announcement'.
\newblock {\em Review of Symbolic Logic}, 1(3):305--334, 2008.

\bibitem{baltagetal:1998}
A.~Baltag, L.~Moss, and S.~Solecki.
\newblock The logic of public announcements, common knowledge, and private
  suspicions.
\newblock In {\em Proc.\ of 7th TARK}, pages 43--56. Morgan Kaufmann, 1998.

\bibitem{belardinellietal:2015}
F.~Belardinelli and W.~van~der Hoek.
\newblock Epistemic quantified boolean logic: Expressiveness and completeness
  results.
\newblock In Q.~Yang and M.~Wooldridge, editors, {\em Proceedings of 24th
  {IJCAI}}, pages 2748--2754. {AAAI} Press, 2015.

\bibitem{berger:1966}
R.~Berger.
\newblock {\em The undecidability of the domino problem}.
\newblock Number~66 in Memoirs of the American Mathematical Society. American
  Mathematical Soc., 1966.

\bibitem{bolanderetal:2011}
T.~Bolander and M.~Andersen.
\newblock Epistemic planning for single and multi-agent systems.
\newblock {\em Journal of Applied Non-classical Logics}, 21(1):9--34, 2011.

\bibitem{bolanderetal:2015}
T.~Bolander, M.~Jensen, and F.~Schwarzentruber.
\newblock Complexity results in epistemic planning.
\newblock In Q.~Yang and M.~Wooldridge, editors, {\em Proceedings of 24th
  IJCAI}, pages 2791--2797. {AAAI} Press, 2015.

\bibitem{bozzellietal.inf:2014}
L.~Bozzelli, H.~van Ditmarsch, T.~French, J.~Hales, and S.~Pinchinat.
\newblock Refinement modal logic.
\newblock {\em Information and Computation}, 239:303--339, 2014.

\bibitem{charrieretal.aamas:2015}
T.~Charrier and F.~Schwarzentruber.
\newblock Arbitrary public announcement logic with mental programs.
\newblock In G.~Weiss, P.~Yolum, R.~Bordini, and E.~Elkind, editors, {\em
  Proceedings of {AAMAS}}, pages 1471--1479. {ACM}, 2015.

\bibitem{cordonetal.tcs:2013}
A.~Cord{\'o}n-Franco, H.~van Ditmarsch, D.~Fern{\'a}ndez-Duque, and
  F.~Soler-Toscano.
\newblock A colouring protocol for the generalized russian cards problem.
\newblock {\em Theor. Comput. Sci.}, 495:81--95, 2013.

\bibitem{faginetal:1995}
R.~Fagin, J.~Halpern, Y.~Moses, and M.~Vardi.
\newblock {\em Reasoning about Knowledge}.
\newblock MIT Press, Cambridge MA, 1995.

\bibitem{fischeretal:1979}
M.~Fischer and R.~Ladner.
\newblock Propositional dynamic logic of regular programs.
\newblock {\em Journal of Computer and System Sciences}, 18(2):194--211, 1979.

\bibitem{frenchetal:2008}
T.~French and H.~van Ditmarsch.
\newblock Undecidability for arbitrary public announcement logic.
\newblock In C.~Areces and R.~Goldblatt, editors, {\em Advances in Modal Logic
  7}, pages 23--42, London, 2008. College Publications.

\bibitem{gerbrandy:1999}
J.~Gerbrandy.
\newblock {\em Bisimulations on Planet Kripke}.
\newblock PhD thesis, University of Amsterdam, 1999.
\newblock ILLC Dissertation Series DS-1999-01.

\bibitem{hales2013arbitrary}
J.~Hales.
\newblock Arbitrary action model logic and action model synthesis.
\newblock In {\em Proc.\ of 28th {LICS}}, pages 253--262. IEEE, 2013.

\bibitem{harel:1986}
D.~Harel.
\newblock Effective transformations on infinite trees, with applications to
  high undecidability, dominoes, and fairness.
\newblock {\em J. ACM}, 33(1):224--248, 1986.

\bibitem{lutz:2006}
C.~Lutz.
\newblock Complexity and succinctness of public announcement logic.
\newblock In {\em Proc.\ of the 5th AAMAS}, pages 137--144, 2006.

\bibitem{milleretal:2005}
J.~Miller and L.~Moss.
\newblock The undecidability of iterated modal relativization.
\newblock {\em Studia Logica}, 79(3):373--407, 2005.

\bibitem{Pauly2002}
M.~Pauly.
\newblock A modal logic for coalitional power in games.
\newblock {\em Journal of Logic and Computation}, 12(1):149--166, 2002.

\bibitem{plaza:1989}
J.~Plaza.
\newblock Logics of public communications.
\newblock In {\em Proc.\ of the 4th ISMIS}, pages 201--216. Oak Ridge National
  Laboratory, 1989.

\bibitem{vanBenthem:1984}
J.~van Benthem.
\newblock {\em Correspondence theory}.
\newblock Springer, 1984.

\bibitem{jfak.tark:2001}
J.~van Benthem.
\newblock Logics for information update.
\newblock In J.~van Benthem, editor, {\em Proc.\ of TARK VIII}, pages 51--88,
  Los Altos, 2001. Morgan Kaufmann.

\bibitem{jfak.lonely:2006}
J.~van Benthem.
\newblock One is a lonely number: on the logic of communication.
\newblock In {\em Logic colloquium 2002. Lecture Notes in Logic, Vol. 27},
  pages 96--129. A.K. Peters, 2006.

\bibitem{jfak.logicingames:2014}
J.~van Benthem.
\newblock {\em Logic in Games}.
\newblock MIT Press, 2014.

\bibitem{vdhoeketal:2002c}
W.~van~der Hoek and M.~Wooldridge.
\newblock Tractable multiagent planning for epistemic goals.
\newblock In {\em Proc.\ of the First International Joint Conference on
  Autonomous Agents {\&} Multiagent Systems (AAMAS 2002)}, pages 1167--1174.
  ACM, 2002.

\bibitem{hvd.wollic:2012}
H.~van Ditmarsch.
\newblock Quantifying notes.
\newblock In {\em Proc.\ of 19th {WoLLIC}}, LNCS 7456, pages 89--109. Springer,
  2012.

\bibitem{hvdetal.felax:2010}
H.~van Ditmarsch, T.~French, and S.~Pinchinat.
\newblock Future event logic - axioms and complexity.
\newblock In L.~Beklemishev, V.~Goranko, and V.~Shehtman, editors, {\em
  Advances in Modal Logic 8}, pages 77--99. College Publications, 2010.

\bibitem{hvdetal.del:2007}
H.~van Ditmarsch, W.~van~der Hoek, and B.~Kooi.
\newblock {\em Dynamic Epistemic Logic}, volume 337 of {\em Synthese Library}.
\newblock Springer, 2007.

\bibitem{original}
T.~{\AA}gotnes, H.~van Ditmarsch, and T.~French.
\newblock The undecidability of quantified announcements.
\newblock {Studia Logica}, 104(4) pages 597--640, 2016.

\bibitem{asami}
Y. Asami
\newblock A Note on the Undecidability of Quantified Announcements.
\newblock {\tt http://www.is.c.titech.ac.jp/blog/2019/03/research-report-c-285/}, 2019.
\end{thebibliography}

\AdditionalAuthorAddressEmail{Thomas {\AA}gotnes}{University of Bergen}{thomas.agotnes@infomedia.uib.no}
\AdditionalAuthorAddressEmail{Hans van Ditmarsch}{LORIA, CNRS - Universit\'{e} de Lorraine}{hans.van-ditmarsch@loria.fr}
\AdditionalAuthorAddressEmail{Tim French}{The University of Western Australia}{tim.french@uwa.edu.au}

\end{document}